\newtheorem{theorem}{Theorem}
\newtheorem{corollary}[theorem]{Corollary}
\newtheorem{definition}[theorem]{Definition}
\newtheorem{example}[theorem]{Example}
\newtheorem{lemma}[theorem]{Lemma}
\newtheorem{proposition}[theorem]{Proposition}
\newenvironment{proof}[1][Proof]{\noindent\textbf{#1.} }{\ \rule{0.5em}{0.5em}}
\begin{document}

\title{On the codes over the $Z_{3}+vZ_{3}+v^{2}Z_{3}$}
\author{Abdullah Dertli$^{a}$, Yasemin Cengellenmis$^{b}$, Senol Eren$^{a}$
\and $\left( a\right) $ Ondokuz May\i s University, Faculty of Arts and
Sciences, \and Mathematics Department, Samsun, Turkey \and %
abdullah.dertli@gmail.com, seren@omu.edu.tr \and $\left( b\right) $ Trakya
University, Faculty of Arts and Sciences, \and Mathematics Department,
Edirne, Turkey \and ycengellenmis@gmail.com}
\maketitle

\begin{abstract}
In this paper, we study the structure of cyclic, quasi-cyclic, constacyclic
codes and their skew codes over the finite ring $R=Z_{3}+vZ_{3}+v^{2}Z_{3}$, 
$v^{3}=v.$ The Gray images of cyclic, quasi-cyclic, skew cyclic, skew
quasi-cyclic and skew constacyclic codes over $R$ are obtained.\ A necessary
and sufficient condition for cyclic (negacyclic) codes over $R$ that
contains its dual has been given. The parameters of quantum error correcting
codes are obtained from both cyclic and negacyclic codes over $R$. It is
given some examples. Firstly, quasi-constacyclic and skew quasi-constacyclic
codes are introduced. By giving two Hermitian product, it is investigated
their duality. A sufficient condition for 1-generator skew
quasi-constacyclic codes to be free is determined.
\end{abstract}

\section{Introduction}

At the beginning, a lot of research on error correcting codes are
concentrated on codes over finite fields. Since the revelation in 1994 [6],
there has been a lot of interest in codes over finite rings. The structure
of certain type of codes over many rings are determined such as negacyclic,
cyclic, quasi-cyclic, consta cyclic codes in [11,19,22,25,26,27,28,33]. Many
methods and many approaches are applied to produce certain types of codes
with good parameters and properties.

Some authors generalized the notion of cyclic, quasi-cyclic and constacyclic
codes by using generator polynomials in skew polynomial rings
[7,8,9,10,12,15,16,\newline
18,21,24,29,30].

Moreover, in [5] Calderbank et al. gave a way to construct quantum error
correcting codes from classical error correcting codes, although the theory
of quantum error correcting codes has striking differences from the theory
of classical error correcting codes. Many good quantum codes have been
constructed by using classical cyclic codes over finite fields or finite
rings with self orthogonal (or dual containing) properties in
[2,3,4,13,14,17,20,23,31,32].

In [1] they introduced the finite ring $R=Z_{3}[v]/\left\langle
v^{3}-v\right\rangle .$ They studied the structure of this ring. The
algebraic structure of cyclic and dual codes was also studied. A MacWilliams
type identity was established.

In this paper, it is given some definitions. By giving the duality of codes
via inner product, it is shown that $C$ is self orthogonal codes over $R$,
so is $\phi \left( C\right) $, where $\phi $ is a Gray map.

The Gray images of cyclic and quasi-cyclic codes over R are obtained. A
linear codes over R is represented by means of three ternary codes and the
generator matrix is given.

After a cyclic (negacyclic) codes over R is represented via cyclic
(negacyclic) codes over $Z_{3},$ it is determined the dual of cyclic
(negacyclic) codes. A necessary and sufficient condition for cyclic
(negacyclic) code over R that contains its dual is given. The parameters of
quantum error correcting codes are obtained from both cyclic and negacyclic
codes over R. As a last, some examples are given about quantum error
correcting codes.

When n is odd, it is defined the $\lambda $-constacyclic codes over R where $%
\lambda $ is unit. A constacyclic codes is represented by means of either a
cyclic code or a negacyclic code of length n.

It is found the nontrivial automorphism $\theta $ on the ring R. By using
this automorphism, the skew cyclic, skew quasi-cyclic and skew constacyclic
codes over R are introduced. The number of distinct skew cyclic codes over R
is given. The Gray image of skew codes is obtained.

Firstly, quasi-constacyclic and skew quasi-constacyclic codes over R are
introduced. By using two Hermitian product, it is investigated the duality
about quasi-constacyclic and skew quasi-constacyclic codes over R. The Gray
image of skew quasi-constacyclic codes over R is determined. A sufficient
condition for 1-generator skew quasi-constacyclic to be free is determined.

\section{Preliminaries}

Suppose $R=Z_{3}+vZ_{3}+v^{2}Z_{3}$ where $v^{3}=v$ and $Z_{3}=\left\{
0,1,2\right\} $. $R$ is a finite commutative ring with $27$ elements. This
ring is a semi local ring with three maximal ideals. $R$\ is a principal
ideal ring and not finite chain ring. The units of the ring are $%
1,2,1+v^{2},1+v+2v^{2},1+2v+2v^{2},2+v+v^{2},2+2v+v^{2},2+2v^{2}.$The
maximal ideals,

\begin{eqnarray*}
\left\langle v\right\rangle &=&\left\langle 2v\right\rangle =\left\langle
v^{2}\right\rangle =\left\langle 2v^{2}\right\rangle \\
&=&\left\{ 0,v,2v,v^{2},2v^{2},v+v^{2},v+2v^{2},2v+v^{2},2v+2v^{2}\right\} \\
\left\langle 1+v\right\rangle &=&\left\langle 2+2v\right\rangle
=\left\langle 1+2v+v^{2}\right\rangle =\left\langle 2+v+2v^{2}\right\rangle
\end{eqnarray*}

\begin{eqnarray*}
&=&\{0,1+v,2+2v,v+v^{2},2v+2v^{2},1+2v+v^{2},1+2v^{2},2+v^{2}, \\
&&2+v+2v^{2}\} \\
\left\langle 1+v+v^{2}\right\rangle &=&\left\langle 1+2v\right\rangle
=\left\langle 2+v\right\rangle =\left\langle 2+2v+2v^{2}\right\rangle \\
&=&\{0,2+v,1+2v,2v+v^{2},v+2v^{2},2+v^{2},1+2v^{2},2+2v+2v^{2}, \\
&&1+v+v^{2}\}
\end{eqnarray*}

The other ideals,%
\begin{eqnarray*}
\left\langle 0\right\rangle &=&\{0\} \\
\left\langle 1\right\rangle &=&\left\langle 2\right\rangle =\left\langle
1+v^{2}\right\rangle =\left\langle 1+v+2v^{2}\right\rangle =\left\langle
1+2v+2v^{2}\right\rangle =\left\langle 2+v+v^{2}\right\rangle \\
&=&\left\langle 2+2v+v^{2}\right\rangle =\left\langle 2+2v^{2}\right\rangle
=R \\
\left\langle 1+2v^{2}\right\rangle &=&\left\langle 2+v^{2}\right\rangle
=\{0,2+v^{2},1+2v^{2}\} \\
\left\langle v+v^{2}\right\rangle &=&\left\langle 2v+2v^{2}\right\rangle
=\{0,v+v^{2},2v+2v^{2}\} \\
\left\langle v+2v^{2}\right\rangle &=&\left\langle 2v+v^{2}\right\rangle
=\{0,v+2v^{2},2v+v^{2}\}
\end{eqnarray*}%
A linear code $C$ over $R$ length $n$ is a $R-$submodule of $R^{n}$. An
element of $C$ is called a codeword.

For any $x=\left( x_{0},x_{1},...,x_{n-1}\right) $, $y=\left(
y_{0},y_{1},...,y_{n-1}\right) $ the inner product is defined as

\begin{equation*}
x.y=\sum_{i=0}^{n-1}x_{i}y_{i}
\end{equation*}

If $x.y=0$ then $x$ and $y$ are said to be orthogonal. Let $C$ be linear
code of length $n$ over $R$, the dual code of $C$ 
\begin{equation*}
C^{\perp }=\left\{ x:\forall y\in C,x.y=0\right\}
\end{equation*}%
which is also a linear code over $R$ of length $n$. A code $C$ is self
orthogonal if $C\subseteq C^{\perp }$ and self dual if $C=C^{\perp }$.

A code $C$ over $R$ is a linear code with the property that if $c=\left(
c_{0},c_{1},...,c_{n-1}\right) \in C$ then $\sigma \left( C\right) =\left(
c_{n-1},c_{0},...,c_{n-2}\right) \in C.$ A subset $C$ of $R^{n}$ is a linear
cyclic code of length $n$ iff it is polynomial representation is an ideal of 
$R\left[ x\right] /\left\langle x^{n}-1\right\rangle .$

A code $C$ over $R$ is a linear code with the property that if $c=\left(
c_{0},c_{1},...,c_{n-1}\right) \in C$ then $\nu \left( C\right) =\left(
\lambda c_{n-1},c_{0},...,c_{n-2}\right) \in C$ where $\lambda $ is a unit
element of $R.$ A subset $C$ of $R^{n}$ is a linear $\lambda $-constacyclic
code of length $n$ iff it is polynomial representation is an ideal of $R%
\left[ x\right] /\left\langle x^{n}-\lambda \right\rangle .$

A code $C$ over $R$ is a linear code with the property that if $c=\left(
c_{0},c_{1},...,c_{n-1}\right) \in C$ then $\eta \left( C\right) =\left(
-c_{n-1},c_{0},...,c_{n-2}\right) \in C.$ A subset $C$ of $R^{n}$ is a
linear negacyclic code of length $n$ iff it is polynomial representation is
an ideal of $R\left[ x\right] /\left\langle x^{n}+1\right\rangle .$

Let $C$ be code over $Z_{3}$ of length $n$ and $\acute{c}=\left( \acute{c}%
_{0},\acute{c}_{1},...,\acute{c}_{n-1}\right) $ be a codeword of $C.$ The
Hamming weight of $\acute{c}$ is defined as $w_{H}\left( \acute{c}\right)
=\dsum\limits_{i=0}^{n-1}w_{H}\left( \acute{c}_{i}\right) $ where $%
w_{H}\left( \acute{c}_{i}\right) =1$ if $\acute{c}_{i}\neq 0$ and $%
w_{H}\left( \acute{c}_{i}\right) =0$ if $\acute{c}_{i}=0.$ Hamming distance
of $C$ is defined as $d_{H}\left( C\right) =\min d_{H}\left( c,\acute{c}%
\right) ,$ where for any $\acute{c}\in C,$ $c\neq \acute{c}$ and $%
d_{H}\left( c,\acute{c}\right) $ is Hamming distance between two codewords
with $d_{H}\left( c,\acute{c}\right) =w_{H}\left( c-\acute{c}\right) .$

Let $a\in Z_{3}^{3n}$ with $a=\left( a_{0},a_{1},...,a_{3n-1}\right) =\left(
a^{\left( 0\right) }\left\vert a^{\left( 1\right) }\right\vert a^{\left(
2\right) }\right) ,$ $a^{\left( i\right) }\in Z_{3}^{n}$ for $i=0,1,2.$ Let $%
\varphi $ be a map from $Z_{3}^{3n}$ to $Z_{3}^{3n}$ given by $\varphi
\left( a\right) =\left( \sigma \left( a^{\left( 0\right) }\right) \left\vert
\sigma \left( a^{\left( 1\right) }\right) \right\vert \sigma \left(
a^{\left( 2\right) }\right) \right) $ where $\sigma $ is a cyclic shift from 
$Z_{3}^{n}$ to $Z_{3}^{n}$ given by $\sigma \left( a^{\left( i\right)
}\right) =((a^{\left( i,n-1\right) }),(a^{\left( i,0\right) }),(a^{\left(
i,1\right) })$\newline
$,...,(a^{\left( i,n-2\right) }))$ for every $a^{\left( i\right)
}=(a^{(i,0)},...,a^{\left( i,n-1\right) })$ where $a^{\left( i,j\right) }\in
Z_{3}$, $j=0,1,...,n-1.$ A code of length $3n$ over $Z_{3}$ is said to be
quasi cyclic code of index $3$ if $\varphi \left( C\right) =C.$

Let $n=sl$. A quasi-cyclic code $C$ over $R$ of length $n$ and index $l$ is
a linear code with the property that if

$e=\left(
e_{0,0},...,e_{0,l-1},e_{1,0},...,e_{1,l-1},...,e_{s-1,0},...,e_{s-1,l-1}%
\right) \in C$, then $\tau _{s,l}\left( e\right) =\left(
e_{s-1,0,...,}e_{s-1,l-1},e_{0,0},...,e_{0,l-1},...,e_{s-2,0},...,e_{s-2,l-1}\right) \in C 
$.

Let $a\in Z_{3}^{3n}$ with $a=\left( a_{0},a_{1},...,a_{3n-1}\right) =\left(
a^{\left( 0\right) }\left\vert a^{\left( 1\right) }\right\vert a^{\left(
2\right) }\right) ,$ $a^{\left( i\right) }\in Z_{3}^{n}$, for $i=0,1,2.$ Let 
$\Gamma $ be a map from $Z_{3}^{3n}$ to $Z_{3}^{3n}$ given by

\begin{equation*}
\Gamma \left( a\right) =\left( \mu \left( a^{\left( 0\right) }\right)
\left\vert \mu \left( a^{\left( 1\right) }\right) \right\vert \mu \left(
a^{\left( 2\right) }\right) \right)
\end{equation*}%
where $\mu $ is the map from $Z_{3}^{n}$ to $Z_{3}^{n}$ given by

\begin{equation*}
\mu \left( a^{\left( i\right) }\right) =((a^{\left( i,s-1\right)
}),(a^{\left( i,0\right) }),...,(a^{\left( i,s-2\right) }))
\end{equation*}%
for every $a^{\left( i\right) }=\left( a^{\left( i,0\right) },...,a^{\left(
i,s-1\right) }\right) $ where $a^{\left( i,j\right) }\in Z_{3}^{l}$, $%
j=0,1,...,s-1$ and $n=sl.$ A code of length $3n$ over $Z_{3}$ is said to be $%
l-$quasi cyclic code of index $3$ if $\Gamma \left( C\right) =C.$

\section{Gray Map and Gray Images of Cyclic and Quasi-cyclic Codes Over R}

In [1], the Gray map is defined as follows%
\begin{eqnarray*}
\phi &:&R\rightarrow Z_{3}^{3} \\
\phi (a+vb+v^{2}c) &=&(a,a+b+c,a+2b+c)
\end{eqnarray*}

Let $C$ be a linear code over $R$ of length $n$. For any codeword $c=\left(
c_{0},...,c_{n-1}\right) $ the Lee weight of $c$ is defined as $w_{L}\left(
c\right) =\dsum\limits_{i=0}^{n-1}w_{L}\left( c_{i}\right) $ and the Lee
distance of $C$ is defined as $d_{L}\left( C\right) =\min d_{L}\left( c,%
\acute{c}\right) ,$ where for any $\acute{c}\in C,$ $c\neq \acute{c}$ and $%
d_{L}\left( c,\acute{c}\right) $ is Lee distance between two codewords with $%
d_{L}\left( c,\acute{c}\right) =w_{L}\left( c-\acute{c}\right) .$ Gray map $%
\phi $ can be extended to map from $R^{n}$ to $Z_{3}^{3n}.$

\begin{theorem}
The Gray map $\phi $ is a weight preserving map from $\left( R^{n},\text{Lee
weight}\right) $ to $\left( Z_{3}^{3n},\text{Hamming weight}\right) .$
Moreover it is an isometry from $R^{n}$ to $Z_{3}^{3n}$.
\end{theorem}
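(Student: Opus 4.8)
The plan is to prove the statement in two stages: first establish that $\phi$ preserves weight on a single coordinate (i.e.\ for one element of $R$ mapped into $Z_3^3$), and then extend additively to length $n$. For the single-coordinate claim I would take an arbitrary element $a+vb+v^2c\in R$ with $a,b,c\in Z_3$ and compare its Lee weight (as defined, $w_L$ on $R$ is pulled back from the Hamming weight of its Gray image, or, if the paper intends a direct definition, I would simply use $w_L(a+vb+v^2c)=w_H(\phi(a+vb+v^2c))$ as the working definition) with $w_H(a,\,a+b+c,\,a+2b+c)$. The key computation is to check that $(a,b,c)=(0,0,0)$ if and only if $(a,\,a+b+c,\,a+2b+c)=(0,0,0)$; the forward direction is immediate, and the reverse follows because the linear map $(a,b,c)\mapsto(a,a+b+c,a+2b+c)$ over $Z_3$ has matrix $\begin{pmatrix}1&0&0\\1&1&1\\1&2&1\end{pmatrix}$ whose determinant is $1\cdot(1-2)=-1=2\neq 0$ in $Z_3$, hence the map is a bijection on $Z_3^3$ and in particular injective, so only $0$ maps to $0$.

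Next I would extend to $R^n$. Writing a codeword $c=(c_0,\dots,c_{n-1})\in R^n$, the extended Gray map sends $c$ to the concatenation (or interleaving) of the blocks $\phi(c_i)$, so $w_L(c)=\sum_{i=0}^{n-1}w_L(c_i)=\sum_{i=0}^{n-1}w_H(\phi(c_i))=w_H(\phi(c))$ by the single-coordinate result and the additivity of both weight functions over coordinates. This gives that $\phi$ is weight preserving. For the isometry claim I would use that $\phi$ is $Z_3$-linear (indeed it is additive: $\phi(r+r')=\phi(r)+\phi(r')$, which is clear since each component of $\phi$ is a $Z_3$-linear form in $(a,b,c)$), so for $x,y\in R^n$ we have $d_L(x,y)=w_L(x-y)=w_H(\phi(x-y))=w_H(\phi(x)-\phi(y))=d_H(\phi(x),\phi(y))$; and $\phi$ is injective on $R^n$ because it is injective on each coordinate, so it is a bijection from $R^n$ onto its image inside $Z_3^{3n}$ (a count of cardinalities, $27^n=3^{3n}$, even shows $\phi$ is a bijection onto all of $Z_3^{3n}$ if one wants that). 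Hence $\phi$ is a distance-preserving bijection, i.e.\ an isometry.

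The only genuine obstacle is bookkeeping rather than mathematics: one must be careful about how the Lee weight on $R$ is actually defined. The excerpt defines $w_L$ on codewords of $C\subseteq R^n$ but the base case $w_L(c_i)$ for $c_i\in R$ is evidently meant to be $w_H(\phi(c_i))$; I would state this explicitly at the start of the proof to avoid circularity, and then the entire argument reduces to (i) the determinant computation showing injectivity of the coordinate map, (ii) additivity of $\phi$, and (iii) additivity of the weights over coordinates. None of these requires a lengthy calculation, so the proof is short; I would present the $3\times 3$ determinant explicitly as the one computational point worth recording.
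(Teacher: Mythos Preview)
Your proof is correct. The paper itself does not supply a proof of this theorem; it is stated without argument, the result being taken over from reference~[1] (Bayram--\c{S}iap). Your diagnosis of the expository gap---that $w_L$ on a single element of $R$ is never explicitly defined and that the intended meaning is $w_L(r)=w_H(\phi(r))$---is exactly right; with that convention the weight-preserving claim is tautological coordinatewise, and the extension to $R^n$ follows by additivity as you show. The determinant computation is the one substantive ingredient, giving that $\phi$ is a bijection $R\to Z_3^3$ (hence $R^n\to Z_3^{3n}$ by the cardinality count); combined with $Z_3$-linearity (which the paper itself verifies in the proof of the next theorem) this yields the isometry. So your argument cleanly fills in what the paper omits.
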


\begin{theorem}
If $C$ is an $\left[ n,k,d_{L}\right] $ linear codes over $R$ then $\phi
\left( C\right) $ is a $\left[ 3n,k,d_{H}\right] $ linear codes over $Z_{3},$
where $d_{H}=d_{L}.$
\end{theorem}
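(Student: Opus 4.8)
The plan is to split the statement into its three assertions — that $\phi(C)$ is $Z_3$-linear, that it has length $3n$ and $3^{k}$ codewords, and that its minimum Hamming distance equals $d_L$ — and to obtain each from elementary properties of $\phi$ together with the theorem just proved. Since that theorem already supplies the isometry, what remains is essentially bookkeeping.

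First I would record the algebraic properties of the coordinate map $\phi\colon R\to Z_3^{3}$, $a+vb+v^{2}c\mapsto(a,\,a+b+c,\,a+2b+c)$. It is additive by inspection, and because $Z_3\subseteq R$ it commutes with multiplication by $0,1,2$, hence is $Z_3$-linear. Identifying $a+vb+v^{2}c$ with the column vector $(a,b,c)^{T}$, $\phi$ is left multiplication by $M=\begin{pmatrix}1&0&0\\ 1&1&1\\ 1&2&1\end{pmatrix}$, and $\det M=-1\equiv2\pmod 3$ is a unit of $Z_3$, so $\phi$ is a $Z_3$-linear isomorphism $R\cong Z_3^{3}$. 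Extending it coordinatewise, $\phi\colon R^{n}\to Z_3^{3n}$ stays $Z_3$-linear and bijective. Therefore, if $C\subseteq R^{n}$ is linear over $R$ it is in particular a $Z_3$-submodule of $R^{n}$, so $\phi(C)$ is a $Z_3$-submodule of $Z_3^{3n}$, i.e. a linear code over $Z_3$ of length $3n$; and injectivity gives $|\phi(C)|=|C|=3^{k}$, so $\dim_{Z_3}\phi(C)=k$. Because $R$ is not a chain ring and $C$ need not be free, this counting argument is the cleanest way to pin down the parameter $k$.

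It then remains to compute the distance. By the preceding theorem $\phi$ is weight preserving, i.e. $w_H(\phi(x))=w_L(x)$ for all $x\in R^{n}$, so for distinct $c,c'\in C$,
\[
d_H\big(\phi(c),\phi(c')\big)=w_H\big(\phi(c)-\phi(c')\big)=w_H\big(\phi(c-c')\big)=w_L(c-c')=d_L(c,c'),
\]
the middle equality using additivity of $\phi$. Taking the minimum over all such pairs, together with the fact that $\phi|_{C}$ is a bijection onto $\phi(C)$, gives $d_H(\phi(C))=d_L(C)=d_L$, which finishes the argument. I do not expect any real obstacle: the only points needing a moment's care are the $Z_3$-linearity and bijectivity of $\phi$ (settled by the determinant, or by exhibiting $\phi^{-1}$ directly) and the interpretation of $k$ as $\log_{3}|C|$, handled above.
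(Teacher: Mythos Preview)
Your proof is correct and follows essentially the same approach as the paper: establish that $\phi$ is $Z_3$-linear, use bijectivity of $\phi$ to conclude $|\phi(C)|=|C|$, and invoke Theorem~1 for $d_H=d_L$. Your version is somewhat more thorough (supplying the matrix determinant to justify bijectivity and spelling out the distance computation via additivity), but the underlying argument is the same.
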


\begin{proof}
Let $x=a_{1}+vb_{1}+v^{2}c_{1},$ $y=a_{2}+vb_{2}+v^{2}c_{2}\in R,\alpha \in
Z_{3\text{ }}$ then

$\phi \left( x+y\right) =\phi \left( a_{1}+a_{2}+v\left( b_{1}+b_{2}\right)
+v^{2}\left( c_{1}+c_{2}\right) \right) $

$\qquad
=(a_{1}+a_{2},a_{1}+a_{2}+b_{1}+b_{2}+c_{1}+c_{2},a_{1}+a_{2}+2(b_{1}+b_{2})+c_{1}+c_{2}) 
$

$\qquad
=(a_{1},a_{1}+b_{1}+c_{1},a_{1}+2b_{1}+c_{1})+(a_{2},a_{2}+b_{2}+c_{2},a_{2}+2b_{2}+c_{2}) 
$

$\qquad =\phi \left( x\right) +\phi \left( y\right) $

$\phi \left( \alpha x\right) =\phi \left( \alpha a_{1}+v\alpha
b_{1}+v^{2}\alpha c_{1}\right) $

$\qquad =(\alpha a_{1},\alpha a_{1}+\alpha b_{1}+\alpha c_{1},\alpha
a_{1}+2\alpha b_{1}+\alpha c_{1})$

$\qquad =\alpha (a_{1},a_{1}+b_{1}+c_{1},a_{1}+2b_{1}+c_{1})$

$\qquad =\alpha \phi \left( x\right) $\ \ \ \ \ \ \ 

so $\phi $ is\textit{\ }linear\textit{. }As\textit{\ }$\phi $ is bijective
then $\left\vert C\right\vert =\left\vert \phi \left( C\right) \right\vert $%
. From theorem $1$ we have $d_{H}=d_{L}.$
\end{proof}

\begin{theorem}
If $C$ is self orthogonal, so is $\phi \left( C\right) .$
\end{theorem}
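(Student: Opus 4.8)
The plan is to reduce the statement to a single coordinatewise identity that expresses the Euclidean inner product of two Gray images in terms of a product computed inside $R$. Since $\phi$ is an additive bijection from $R^{n}$ onto $Z_{3}^{3n}$ (as used already in Theorem~2), the code $\phi(C)$ is self orthogonal precisely when $\phi(x)\cdot\phi(y)=0$ for every pair $x,y\in C$; and since $\phi$ acts on each $R$-coordinate independently, $\phi(x)\cdot\phi(y)=\sum_{i=0}^{n-1}\phi(x_{i})\cdot\phi(y_{i})$. Hence it is enough to understand a single coordinate.

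First I would take $x=a_{1}+vb_{1}+v^{2}c_{1}$ and $y=a_{2}+vb_{2}+v^{2}c_{2}$ in $R$ and expand
\[
\phi(x)\cdot\phi(y)=a_{1}a_{2}+(a_{1}+b_{1}+c_{1})(a_{2}+b_{2}+c_{2})+(a_{1}+2b_{1}+c_{1})(a_{2}+2b_{2}+c_{2})
\]
modulo $3$. Collecting terms, the contributions of $a_{1}a_{2}$, $a_{1}b_{2}$, $b_{1}a_{2}$, $b_{1}c_{2}$, $c_{1}b_{2}$ each carry a coefficient divisible by $3$ and vanish, leaving
\[
\phi(x)\cdot\phi(y)=2\bigl(a_{1}c_{2}+c_{1}a_{2}+b_{1}b_{2}+c_{1}c_{2}\bigr)\qquad\text{in }Z_{3}.
\]
Next I would compute the product $xy$ inside $R$, using $v^{3}=v$ (hence $v^{4}=v^{2}$), and group it by powers of $v$:
\[
xy=a_{1}a_{2}+v\bigl(a_{1}b_{2}+b_{1}a_{2}+b_{1}c_{2}+c_{1}b_{2}\bigr)+v^{2}\bigl(a_{1}c_{2}+c_{1}a_{2}+b_{1}b_{2}+c_{1}c_{2}\bigr).
\]
Comparing the two displays gives $\phi(x)\cdot\phi(y)=2\,\gamma(xy)$, where $\gamma(\,\cdot\,)$ denotes the coefficient of $v^{2}$; equivalently $\phi(x)\cdot\phi(y)=-\gamma(xy)$ in $Z_{3}$. (This is really the observation that $\phi$ is the CRT isomorphism $R\cong Z_{3}\times Z_{3}\times Z_{3}$ followed by summation of the three components, but the bare computation suffices.)

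Finally, summing the coordinatewise identity over $i$ and using that $\gamma$ is $Z_{3}$-linear yields $\phi(x)\cdot\phi(y)=-\gamma\!\bigl(\sum_{i=0}^{n-1}x_{i}y_{i}\bigr)=-\gamma(x\cdot y)$, where $x\cdot y\in R$ is the $R$-valued inner product of $x$ and $y$. If $C$ is self orthogonal, then $x\cdot y=0$ in $R$ for all $x,y\in C$, so in particular its $v^{2}$-coefficient is $0$; hence $\phi(x)\cdot\phi(y)=0$ for all $x,y\in C$, i.e. $\phi(C)\subseteq\phi(C)^{\perp}$. The only delicate point is keeping the two expansions in the middle step consistent — applying the reduction modulo $3$ and the relation $v^{3}=v$ carefully so that twice the Euclidean inner product of the Gray images genuinely matches the $v^{2}$-part of $xy$; everything else is linearity and the bijectivity of $\phi$.
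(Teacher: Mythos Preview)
Your proof is correct and follows essentially the same route as the paper: both expand $xy$ in $R$ and $\phi(x)\cdot\phi(y)$ in $Z_{3}$ coordinatewise and compare. The one refinement is that you first simplify $\phi(x)\cdot\phi(y)$ modulo $3$ to the closed form $2\bigl(a_{1}c_{2}+c_{1}a_{2}+b_{1}b_{2}+c_{1}c_{2}\bigr)=-\gamma(xy)$ and then invoke only the vanishing of the $v^{2}$-coefficient of $x\cdot y$, whereas the paper leaves the expansion unsimplified and substitutes all three conditions $a_{1}a_{2}=0$, $a_{1}b_{2}+b_{1}a_{2}+b_{1}c_{2}+c_{1}b_{2}=0$, $a_{1}c_{2}+b_{1}b_{2}+c_{1}a_{2}+c_{1}c_{2}=0$ directly; your identity $\phi(x)\cdot\phi(y)=-\gamma(x\cdot y)$ is therefore a slightly sharper statement than what the paper records, but the underlying computation is the same.
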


\begin{proof}
Let $x=a_{1}+vb_{1}+v^{2}c_{1},$ $y=a_{2}+vb_{2}+v^{2}c_{2}$ where $%
a_{1},b_{1},c_{1},$ $a_{2},b_{2},c_{2}\in Z_{3}$.

From $%
x.y=a_{1}a_{2}+v(a_{1}b_{2}+b_{1}a_{2}+b_{1}c_{2}+c_{1}b_{2})+v^{2}(a_{1}c_{2}+b_{1}b_{2}+c_{1}a_{2}+c_{1}c_{2}) 
$, if$\ C$ is self orthogonal,so we have $%
a_{1}a_{2}=0,a_{1}b_{2}+b_{1}a_{2}+b_{1}c_{2}+c_{1}b_{2}=0,a_{1}c_{2}+b_{1}b_{2}+c_{1}a_{2}+c_{1}c_{2}=0 
$. From

$\phi \left( x\right) .\phi \left( y\right)
=(a_{1},a_{1}+b_{1}+c_{1},a_{1}+2b_{1}+c_{1})(a_{2},a_{2}+b_{2}+c_{2},a_{2}+2b_{2}+c_{2}) 
$

\ \ \ \ \ \ \ \ \ \ \ \ \ \ \ \ $\
=a_{1}a_{2}+a_{1}a_{2}+a_{1}b_{2}+a_{1}c_{2}+b_{1}a_{2}+b_{1}b_{2}+b_{1}c_{2}+c_{1}a_{2}+c_{1}b_{2}+c_{1}c_{2}+a_{1}a_{2}+2(a_{1}b_{2}+b_{1}a_{2}+b_{1}c_{2}+c_{1}b_{2})+a_{1}c_{2}+b_{1}b_{2}+c_{1}a_{2}+c_{1}c_{2}=0 
$

Therefore, we have $\phi \left( C\right) $ is self orthogonal.
\end{proof}

Note that $\phi \left( C\right) ^{\bot }=\phi \left( C^{\bot }\right) .$
Moreover, if $C$ is self-dual, so is $\phi \left( C\right) .$

\begin{proposition}
Let $\phi $ the Gray map from $R^{n}$ to $Z_{3}^{3n}$, let $\sigma $ be
cyclic shift and let $\varphi $ be a map as in the preliminaries. Then $\phi
\sigma =\varphi \phi .$
\end{proposition}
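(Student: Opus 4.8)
The plan is to verify the identity $\phi\sigma = \varphi\phi$ by direct computation, tracking how an arbitrary codeword in $R^n$ is transformed under each side. First I would fix a generic element $r = (r_0, r_1, \dots, r_{n-1}) \in R^n$ with $r_j = a_j + v b_j + v^2 c_j$ for $a_j, b_j, c_j \in Z_3$. Under the Gray map $\phi$ extended coordinatewise, I need to be careful about the ordering convention: $\phi(r)$ lands in $Z_3^{3n}$ arranged as the concatenation $(a \mid a+b+c \mid a+2b+c)$, where $a = (a_0,\dots,a_{n-1})$ and similarly for the other two blocks of length $n$. This is the natural block structure on which $\varphi$ acts, applying the cyclic shift $\sigma$ to each of the three length-$n$ blocks separately.

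The key steps, in order: (1) Compute $\sigma(r) = (r_{n-1}, r_0, \dots, r_{n-2})$, whose $j$-th coordinate has $Z_3$-components $(a_{j-1}, b_{j-1}, c_{j-1})$ with indices read mod $n$. (2) Apply $\phi$ to get $\phi\sigma(r)$, which in block form is $\bigl(\sigma(a) \mid \sigma(a)+\sigma(b)+\sigma(c) \mid \sigma(a)+2\sigma(b)+\sigma(c)\bigr)$, using that $\sigma$ commutes with the coordinatewise $Z_3$-linear operations $a+b+c$ and $a+2b+c$. (3) Separately compute $\phi(r) = (a \mid a+b+c \mid a+2b+c)$ and then apply $\varphi$, which shifts each block: $\varphi\phi(r) = \bigl(\sigma(a) \mid \sigma(a+b+c) \mid \sigma(a+2b+c)\bigr)$. (4) Observe that $\sigma(a+b+c) = \sigma(a)+\sigma(b)+\sigma(c)$ and $\sigma(a+2b+c) = \sigma(a)+2\sigma(b)+\sigma(c)$ since $\sigma$ is additive and commutes with scalar multiplication on $Z_3^n$. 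Hence the two expressions coincide block by block, giving $\phi\sigma = \varphi\phi$.

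The main obstacle, such as it is, is purely bookkeeping: making the index conventions of $\phi$, $\sigma$, and $\varphi$ line up precisely, and stating clearly the elementary fact that the cyclic shift $\sigma$ on $Z_3^n$ is $Z_3$-linear (equivalently, commutes with any coordinatewise linear combination). There is no genuine mathematical difficulty; once the block decomposition is written out, both sides are manifestly equal. I would therefore present the proof compactly: introduce the generic codeword, write $\phi$ in block form, push $\sigma$ through each block using linearity, and note that this is exactly the definition of $\varphi$ applied to $\phi(r)$. If desired, one can phrase the linearity observation as a one-line lemma ($\sigma \circ (\text{coordinatewise } Z_3\text{-linear map}) = (\text{same map}) \circ \sigma$) to keep the argument transparent.
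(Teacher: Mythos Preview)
Your proposal is correct and follows essentially the same approach as the paper: both compute $\phi\sigma(r)$ and $\varphi\phi(r)$ explicitly on a generic element $r=(r_0,\dots,r_{n-1})$ with $r_i=a_i+vb_i+v^2c_i$ and observe that the resulting vectors in $Z_3^{3n}$ coincide. Your presentation is slightly more conceptual in invoking the $Z_3$-linearity of $\sigma$ on each block, whereas the paper simply writes out all coordinates, but the argument is the same.
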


\begin{proof}
Let $r_{i}=a_{i}+vb_{i}+v^{2}c_{i}$ be the elements of $R$ for $%
i=0,1,....,n-1.$ We have $\sigma \left( r_{0},r_{1},...,r_{n-1}\right)
=\left( r_{n-1},r_{0},...,r_{n-2}\right) .$ If \ we apply $\phi ,$ we have 
\begin{eqnarray*}
\phi \left( \sigma \left( r_{0},...,r_{n-1}\right) \right) &=&\phi
(r_{n-1},r_{0},...,r_{n-2}) \\
&=&(a_{n-1},...,a_{n-2},a_{n-1}+b_{n-1}+c_{n-1},...,a_{n-2}+b_{n-2}+ \\
&&c_{n-2},a_{n-1}+2b_{n-1}+c_{n-1},...,a_{n-2}+2b_{n-2}+c_{n-2})
\end{eqnarray*}%
On the other hand $\phi
(r_{0},...,r_{n-1})=(a_{0},...,a_{n-1},a_{0}+b_{0}+c_{0},...,a_{n-1}+b_{n-1}+c_{n-1},a_{0}+2b_{0}+c_{0},...,a_{n-1}+2b_{n-1}+c_{n-1}) 
$. If we apply $\varphi ,$ we have $\varphi (\phi (r_{0},r_{1},...,$\newline
$%
r_{n-1}))=(a_{n-1},...,a_{n-2},a_{n-1}+b_{n-1}+c_{n-1},...,a_{n-2}+b_{n-2}+c_{n-2},a_{n-1}+2b_{n-1}+c_{n-1},...,a_{n-2}+2b_{n-2}+c_{n-2}) 
$. Thus, $\phi \sigma =\varphi \phi .$
\end{proof}

\begin{proposition}
Let $\sigma $ and $\varphi $ be as in the preliminaries. A code $C$ of
length $n$ over $R$ is cyclic code if and only if $\phi \left( C\right) $ is
quasi cyclic code of index $3$ over $Z_{3}$ with length $3n$.
\end{proposition}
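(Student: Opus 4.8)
The plan is to derive the equivalence directly from the intertwining relation $\phi\sigma=\varphi\phi$ established in the previous proposition, together with the fact (used already in Theorem 2) that $\phi:R^{n}\to Z_{3}^{3n}$ is a bijective $Z_{3}$-linear map. The essential observation is that conjugating the cyclic shift $\sigma$ by $\phi$ produces exactly the index-$3$ shift $\varphi$, so the property ``$C$ is fixed by $\sigma$'' transports across $\phi$ to the property ``$\phi(C)$ is fixed by $\varphi$''.

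First I would record the easy structural remark: since $C$ is a linear code over $R$ and $\phi$ is linear, $\phi(C)$ is a linear code over $Z_{3}$ of length $3n$; hence asking whether $\phi(C)$ is quasi-cyclic of index $3$ is exactly asking whether $\varphi(\phi(C))=\phi(C)$.

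For the forward direction, assume $C$ is cyclic, i.e. $\sigma(C)=C$. Applying $\phi$ and using Proposition~5,
\[
\varphi(\phi(C))=\phi(\sigma(C))=\phi(C),
\]
so $\phi(C)$ is quasi-cyclic of index $3$. For the converse, assume $\varphi(\phi(C))=\phi(C)$. Using Proposition~5 again in the form $\phi\sigma=\varphi\phi$, we get $\phi(\sigma(C))=\varphi(\phi(C))=\phi(C)$. Since $\phi$ is a bijection, it is injective on subsets of $R^{n}$, so $\phi(\sigma(C))=\phi(C)$ forces $\sigma(C)=C$; that is, $C$ is cyclic.

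There is no serious obstacle here: the only point that requires a word of care is the converse step, where one must invoke injectivity of $\phi$ to cancel it from $\phi(\sigma(C))=\phi(C)$ — it is not enough that $\phi$ merely be linear. Everything else is a one-line application of the commutation identity $\phi\sigma=\varphi\phi$, so I would keep the write-up short and simply cite Proposition~5 and the bijectivity of the Gray map.
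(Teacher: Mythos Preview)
Your proof is correct and follows essentially the same route as the paper: both directions are obtained by combining the commutation identity $\phi\sigma=\varphi\phi$ (the preceding proposition) with the injectivity of the Gray map $\phi$. Your extra sentence observing that $\phi(C)$ is automatically $Z_{3}$-linear is a nice clarification, but otherwise the argument matches the paper's almost line for line.
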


\begin{proof}
Suppose $C$ is cyclic code. Then $\sigma \left( C\right) =C.$ If we apply $%
\phi ,$ we have $\phi \left( \sigma \left( C\right) \right) =\phi \left(
C\right) .$ From proposition 4, $\phi \left( \sigma \left( C\right) \right)
=\varphi \left( \phi \left( C\right) \right) =\phi \left( C\right) .$ Hence, 
$\phi \left( C\right) $ is a quasi cyclic code of index $3$. Conversely, if $%
\phi \left( C\right) $ is a quasi cyclic code of index $3$, then $\varphi
(\phi \left( C\right) )=\phi \left( C\right) .$ From proposition 4, we have $%
\varphi \left( \phi \left( C\right) \right) =\phi \left( \sigma \left(
C\right) \right) =\phi \left( C\right) .$ Since $\phi $ is injective, it
follows that $\sigma \left( C\right) =C.$
\end{proof}

\begin{proposition}
Let $\tau _{s,l}$ be quasi-cyclic shift on $R$. Let $\Gamma $ be as in the
preliminaries. Then $\phi \tau _{s,l}=\Gamma \phi .$
\end{proposition}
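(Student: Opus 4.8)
The plan is to mirror the proof of Proposition 4, replacing the single cyclic shift $\sigma$ by the quasi-cyclic shift $\tau_{s,l}$ and the index-$3$ cyclic map $\varphi$ by the index-$3$, $l$-quasi-cyclic map $\Gamma$. The guiding observation is that $\phi$ acts coordinatewise on $R^{n}$: it never mixes the $s$ consecutive blocks of length $l$ with one another, so all the block reshuffling is done by $\tau_{s,l}$ on one side of the identity and by $\Gamma$ on the other, and the statement is really a matter of lining up indices.

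First I would fix notation. Put $n=sl$ and write a generic element of $R^{n}$ in the block form used in the preliminaries, $e=\left( e_{0,0},\dots ,e_{0,l-1},e_{1,0},\dots ,e_{1,l-1},\dots ,e_{s-1,0},\dots ,e_{s-1,l-1}\right)$, with $e_{i,j}=a_{i,j}+vb_{i,j}+v^{2}c_{i,j}$, $a_{i,j},b_{i,j},c_{i,j}\in Z_{3}$, $0\le i\le s-1$, $0\le j\le l-1$. Applying $\phi $ coordinatewise and collecting the three $Z_{3}$-coordinates gives
\[
\phi (e)=\left( \underline{a}\mid \underline{a+b+c}\mid \underline{a+2b+c}\right) ,
\]
where $\underline{a}=\left( a_{0,0},\dots ,a_{0,l-1},\dots ,a_{s-1,0},\dots ,a_{s-1,l-1}\right) \in Z_{3}^{n}$ and similarly for $\underline{a+b+c}$ and $\underline{a+2b+c}$; crucially, each of these three length-$n$ vectors carries exactly the same block-index structure as $e$. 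In the notation of $\Gamma $, the vector $\underline{a}$ is $a^{(0)}=\left( a^{(0,0)},\dots ,a^{(0,s-1)}\right)$ with $a^{(0,i)}=\left( a_{i,0},\dots ,a_{i,l-1}\right) \in Z_{3}^{l}$, and likewise for the other two.

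Next I would compute the two sides. On one side, $\tau _{s,l}(e)$ moves the last length-$l$ block $\left( e_{s-1,0},\dots ,e_{s-1,l-1}\right)$ to the front; applying $\phi $ coordinatewise then shows that the first $Z_{3}^{n}$-piece of $\phi \left( \tau _{s,l}(e)\right)$ is $\left( a_{s-1,0},\dots ,a_{s-1,l-1},a_{0,0},\dots ,a_{s-2,l-1}\right)$, which is precisely $\mu (\underline{a})$, and the same computation yields $\mu (\underline{a+b+c})$ and $\mu (\underline{a+2b+c})$ for the other two pieces. On the other side, by the definition of $\Gamma $,
\[
\Gamma \left( \phi (e)\right) =\left( \mu (\underline{a})\mid \mu (\underline{a+b+c})\mid \mu (\underline{a+2b+c})\right) .
\]
Comparing the two expressions gives $\phi \left( \tau _{s,l}(e)\right) =\Gamma \left( \phi (e)\right)$ for every $e\in R^{n}$, that is, $\phi \tau _{s,l}=\Gamma \phi $. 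The only delicate point — the "hard part", such as it is — is the bookkeeping in this last paragraph: one must check that $\phi $, being applied coordinate by coordinate, genuinely respects the partition of $R^{n}$ into $s$ consecutive blocks of length $l$ (so block $i$ of $e$ produces block $i$ of each of the three $Z_{3}^{n}$-pieces of $\phi (e)$), and that the map $\mu $ inside $\Gamma $ is exactly the image, under this correspondence, of the block cyclic shift performed by $\tau _{s,l}$. Once the indices are aligned there is nothing further to compute.
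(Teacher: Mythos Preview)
Your proposal is correct and follows essentially the same approach as the paper: fix a generic element $e$ with $e_{i,j}=a_{i,j}+vb_{i,j}+v^{2}c_{i,j}$, compute $\phi(\tau_{s,l}(e))$ and $\Gamma(\phi(e))$ explicitly, and compare. Your use of the shorthand $\phi(e)=(\underline{a}\mid\underline{a+b+c}\mid\underline{a+2b+c})$ and the explicit identification of each piece with $\mu$ applied to the corresponding block vector is slightly more organized than the paper's fully written-out coordinate strings, but the underlying computation is identical.
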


\begin{proof}
Let $e=\left(
e_{0,0},...,e_{0,l-1},e_{1,0},...,e_{1,l-1},...,e_{s-1,0},...,e_{s-1,l-1}%
\right) $ with $e_{i,j}=a_{i,j}+vb_{i,j}+v^{2}c_{i,j}$ where $i=0,1,...,s-1$
and $j=0,1,...,l-1$. We have $\tau _{s,l}\left( e\right) =\left(
e_{s-1,0},...,e_{s-1,l-1},e_{0,0},...,e_{0,l-1},...,e_{s-2,0},...,e_{s-2,l-1}\right) 
$. If we apply $\phi $, we have%
\begin{eqnarray*}
\phi (\tau _{s,l}\left( e\right) )
&=&(a_{s-1,0},...,a_{s-2,l-1},a_{s-1,0}+b_{s-1,0}+c_{s-1,0},...,a_{s-2,l-1}+
\\
&&b_{s-2,l-1}+c_{s-2,l-1},a_{s-1,0}+2b_{s-1,0}+c_{s-1,0},...,a_{s-2,l-1}+ \\
&&2b_{s-2,l-1}+c_{s-2,l-1})
\end{eqnarray*}%
On the other hand,%
\begin{eqnarray*}
\phi (e)
&=&(a_{0,0},...,a_{s-1,l-1},a_{0,0}+b_{0,0}+c_{0,0},...,a_{s-1,l-1}+b_{s-1,l-1}+c_{s-1,l-1},
\\
&&a_{0,0}+2b_{0,0}+c_{0,0},...,a_{s-1,l-1}+2b_{s-1,l-1}+c_{s-1,l-1})
\end{eqnarray*}%
$\Gamma (\varphi
(e))=(a_{s-1,0},...,a_{s-2,l-1},a_{s-1,0}+b_{s-1,0}+c_{s-1,0},...,a_{s-2,l-1}+b_{s-2,l-1}+c_{s-2,l-1},a_{s-1,0}+2b_{s-1,0}+c_{s-1,0},...,a_{s-2,l-1}+2b_{s-2,l-1}+c_{s-2,l-1}) 
$. So, we have $\varphi \tau _{s,l}=\Gamma \varphi $.
\end{proof}

\begin{theorem}
The Gray image of a quasi-cyclic code over $R$ of length $n$\ with index $l$
is a $l$-quasi cyclic code of index $3$ over $Z_{3}$ with length $3n$.
\end{theorem}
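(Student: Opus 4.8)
The plan is to mirror the argument used for Proposition 5, with the cyclic shift $\sigma$ replaced by the quasi-cyclic shift $\tau_{s,l}$ and the index-$3$ quasi-cyclic map $\varphi$ replaced by the map $\Gamma$. The engine of the proof is the intertwining identity $\phi\tau_{s,l}=\Gamma\phi$ established in Proposition 7, together with the fact (Theorems 1 and 2) that $\phi\colon R^{n}\to Z_{3}^{3n}$ is a linear bijection, so that it carries linear codes to linear codes and commutes with taking images and preimages of subsets.

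First I would let $C$ be a quasi-cyclic code over $R$ of length $n$ and index $l$, say $n=sl$, so that $\tau_{s,l}(C)=C$ by definition. Applying $\phi$ to both sides gives $\phi(\tau_{s,l}(C))=\phi(C)$. By Proposition 7, $\phi(\tau_{s,l}(C))=\Gamma(\phi(C))$, hence $\Gamma(\phi(C))=\phi(C)$. Since $\phi(C)$ is a linear code of length $3n$ over $Z_{3}$ by Theorem 2, this says exactly that $\phi(C)$ is an $l$-quasi cyclic code of index $3$ over $Z_{3}$ of length $3n$, in the sense defined in the preliminaries.

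For completeness I would also record the converse: if $\Gamma(\phi(C))=\phi(C)$, then using Proposition 7 again we obtain $\phi(\tau_{s,l}(C))=\Gamma(\phi(C))=\phi(C)$, and because $\phi$ is injective it follows that $\tau_{s,l}(C)=C$, i.e.\ $C$ is quasi-cyclic over $R$ of length $n$ and index $l$.

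I do not expect any genuine obstacle here; the only delicate points are bookkeeping. One should check that the identity actually proved in Proposition 7 is $\phi\tau_{s,l}=\Gamma\phi$ (the displayed conclusion there is written with $\varphi$ in place of $\phi$, which appears to be a typo), since the whole argument rests on applying that identity at the level of codes. One should also confirm that the parameters line up: $C$ has length $n=sl$ over $R$, so $\phi(C)$ has length $3n=3sl$ over $Z_{3}$, and $\Gamma$ permutes blocks of size $l$ within each of the three length-$n$ coordinate windows, which is precisely the structure of an $l$-quasi cyclic code of index $3$.
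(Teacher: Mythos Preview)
Your proposal is correct and follows essentially the same route as the paper: start from $\tau_{s,l}(C)=C$, apply $\phi$, invoke the intertwining identity $\phi\tau_{s,l}=\Gamma\phi$ (which the paper labels Proposition~6, not~7), and conclude $\Gamma(\phi(C))=\phi(C)$. The paper's proof is exactly this one-direction argument; your added converse and the remarks about the typo and the block structure are extra but harmless.
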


\begin{proof}
Let $C$ be a quasi-cyclic code over $R$ of length $n$ with index $l$. That
is $\tau _{s,l}\left( C\right) =C$. If we apply $\phi $, we have $\phi (\tau
_{s,l}\left( C\right) )=\phi (C)$. From the Proposition $6$, $\phi \left(
\tau _{s,l}\left( C\right) \right) =\phi \left( C\right) =\Gamma \left( \phi
\left( C\right) \right) $. So, $\phi \left( C\right) $ is a $l$ quasi-cyclic
code of index $3$ over $Z_{3}$ with length $3n$.
\end{proof}

We denote that $A_{1}\otimes A_{2}\otimes A_{3}\otimes
A_{4}=\{(a_{1},a_{2},a_{3},a_{4}):a_{1}\in A_{1},a_{2}\in A_{2},a_{3}\in
A_{3},a_{4}\in A_{4}\}$ and $A_{1}\oplus A_{2}\oplus A_{3}\oplus
A_{4}=\{a_{1}+a_{2}+a_{3}+a_{4}:a_{1}\in A_{1},a_{2}\in A_{2},a_{3}\in
A_{3},a_{4}\in A_{4}\}$

Let $C$ be a linear code of length $n$ over $R.$ Define%
\begin{eqnarray*}
C_{1} &=&\left\{ a\in Z_{3}^{n}:\exists b,c\in Z_{3}^{n},a+vb+v^{2}c\in
C\right\} \\
C_{2} &=&\left\{ a+b+c\in Z_{3}^{n}:a+vb+v^{2}c\in C\right\} \\
C_{3} &=&\left\{ a+2b+c\in Z_{3}^{n}:a+vb+v^{2}c\in C\right\}
\end{eqnarray*}

Then $C_{1},C_{2}$ and $C_{3}$ are ternary linear codes of length $n$.
Moreover, the linear code $C$ of length $n$ over $R$ can be uniquely
expressed as $C=(1+2v^{2})C_{1}\oplus \left( 2v+2v^{2}\right) C_{2}\oplus
\left( v+2v^{2}\right) C_{3}.$

\begin{theorem}
Let $C$ be a linear code of length $n$ over $R.$ Then $\phi \left( C\right)
=C_{1}\otimes C_{2}\otimes C_{3}$ and $\left\vert C\right\vert =\left\vert
C_{1}\right\vert \left\vert C_{2}\right\vert \left\vert C_{3}\right\vert .$
\end{theorem}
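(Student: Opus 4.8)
The plan is to prove the two claims separately, starting with the cardinality formula and then the Gray-image decomposition, since the latter essentially refines the former. The key structural input is the unique decomposition $C=(1+2v^{2})C_{1}\oplus(2v+2v^{2})C_{2}\oplus(v+2v^{2})C_{3}$ stated just before the theorem, together with the fact that the three idempotent-like coefficients $e_{1}=1+2v^{2}$, $e_{2}=2v+2v^{2}$, $e_{3}=v+2v^{2}$ form a complete system of orthogonal idempotents summing to $1$ in $R$ (this should be checked quickly: $e_{i}e_{j}=0$ for $i\neq j$, $e_{i}^{2}=e_{i}$, and $e_{1}+e_{2}+e_{3}=1$). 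Granting this, the map $C_{1}\times C_{2}\times C_{3}\to C$ sending $(a,b',c')$ to $e_{1}a+e_{2}b'+e_{3}c'$ is a well-defined $R$-module isomorphism (well-definedness and surjectivity come from the displayed decomposition; injectivity comes from orthogonality of the $e_{i}$, since applying $e_{i}$ to $e_{1}a+e_{2}b'+e_{3}c'=0$ recovers $e_{i}a=0$, and one argues $a=0$ because the $Z_{3}$-coordinates of $a$ can be extracted). That immediately gives $\lvert C\rvert=\lvert C_{1}\rvert\lvert C_{2}\rvert\lvert C_{3}\rvert$.

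For the first claim, I would compute $\phi$ on a general codeword. Take $w=a+vb+v^{2}c\in C$ with $a,b,c\in Z_{3}^{n}$; then $\phi(w)=(a,\,a+b+c,\,a+2b+c)\in Z_{3}^{3n}$. By definition $a\in C_{1}$, $a+b+c\in C_{2}$, and $a+2b+c\in C_{3}$, so $\phi(C)\subseteq C_{1}\otimes C_{2}\otimes C_{3}$. For the reverse inclusion, fix an arbitrary $(x,y,z)$ with $x\in C_{1}$, $y\in C_{2}$, $z\in C_{3}$; I need a single codeword $w\in C$ with $\phi(w)=(x,y,z)$. Using the decomposition, write $w=e_{1}x+e_{2}y+e_{3}z$ and verify $\phi(w)=(x,y,z)$ by a direct computation: since $\phi$ is $Z_{3}$-linear (Theorem just proved), it suffices to compute $\phi(e_{1})$, $\phi(e_{2})$, $\phi(e_{3})$ in $Z_{3}^{3}$ and check that $\phi(e_{1}x+e_{2}y+e_{3}z)$ collapses to $(x,y,z)$ coordinatewise — this works precisely because $\phi(e_{1})=(1,0,0)$, $\phi(e_{2})=(0,1,0)$, $\phi(e_{3})=(0,0,1)$ as elements of $Z_{3}^{3}$, which is the concrete manifestation of the CRT decomposition $R\cong Z_{3}\times Z_{3}\times Z_{3}$ induced by $\phi$. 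One still must confirm $w\in C$, which follows because $e_{1}x\in(1+2v^{2})C_{1}$, $e_{2}y\in(2v+2v^{2})C_{2}$, $e_{3}z\in(v+2v^{2})C_{3}$ and $C$ is the direct sum of these three submodules.

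I anticipate the main obstacle to be purely bookkeeping rather than conceptual: one has to be careful that the three "defining" codes $C_{1},C_{2},C_{3}$ are exactly the images of $C$ under the three $Z_{3}$-algebra projections corresponding to $\phi$, and that no spurious identifications occur — i.e. that a triple $(x,y,z)\in C_{1}\otimes C_{2}\otimes C_{3}$ genuinely lifts to an element of $C$ and not merely to an element of $R^{n}$ projecting correctly in each factor. This is where the uniqueness of the decomposition $C=e_{1}C_{1}\oplus e_{2}C_{2}\oplus e_{3}C_{3}$ does the real work, so I would state and use it carefully. Once the isomorphism $\phi|_{C}\colon C\xrightarrow{\sim}C_{1}\otimes C_{2}\otimes C_{3}$ is established (as a bijection of sets, which is all that is claimed), both assertions follow: $\phi(C)=C_{1}\otimes C_{2}\otimes C_{3}$ by the two inclusions above, and $\lvert C\rvert=\lvert\phi(C)\rvert=\lvert C_{1}\rvert\lvert C_{2}\rvert\lvert C_{3}\rvert$ since $\phi$ is injective on $R^{n}$.
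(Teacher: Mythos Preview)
Your proposal is correct and follows essentially the same strategy as the paper: the forward inclusion $\phi(C)\subseteq C_{1}\otimes C_{2}\otimes C_{3}$ is by definition, and the reverse inclusion is obtained by forming $w=(1+2v^{2})x+(2v+2v^{2})y+(v+2v^{2})z\in C$ and checking $\phi(w)=(x,y,z)$. Your argument is slightly more streamlined in that you invoke the pre-stated decomposition $C=e_{1}C_{1}\oplus e_{2}C_{2}\oplus e_{3}C_{3}$ directly and use the clean observation $\phi(e_{i})=$ $i$-th standard basis vector, whereas the paper instead lifts $x,y,z$ to codewords of $C$ via the defining conditions of $C_{1},C_{2},C_{3}$ and computes $\phi(w)$ by hand; both routes arrive at the same preimage and the same conclusion.
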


\begin{proof}
For any $%
(a_{0},a_{1},...,a_{n-1},a_{0}+b_{0}+c_{0},a_{1}+b_{1}+c_{1},...,a_{n-1}+b_{n-1}+c_{n-1},a_{0}+2b_{0}+c_{0},a_{1}+2b_{1}+c_{1},...,a_{n-1}+2b_{n-1}+c_{n-1})\in \phi \left( C\right) . 
$ Let $m_{i}=a_{i}+vb_{i}+v^{2}c_{i},$ $i=0,1,...,n-1.$ Since\ $\phi $ is a
bijection $m=\left( m_{0},m_{1},...,m_{n-1}\right) \in C.$ By definitions of 
$C_{1},C_{2}$ and $C_{3}$ we have $\left( a_{0},a_{1},...,a_{n-1}\right) \in
C_{1},(a_{0}+b_{0}+c_{0},a_{1}+b_{1}$\newline
$+c_{1},...,a_{n-1}+b_{n-1}+c_{n-1})\in
C_{2},(a_{0}+2b_{0}+c_{0},a_{1}+2b_{1}+c_{1},...,a_{n-1}+2b_{n-1}+c_{n-1})%
\in C_{3}$. So, $%
(a_{0},a_{1},...,a_{n-1},a_{0}+b_{0}+c_{0},a_{1}+b_{1}+c_{1},...,a_{n-1}+b_{n-1}+c_{n-1},a_{0}+2b_{0}+c_{0},a_{1}+2b_{1}+c_{1},...,a_{n-1}+2b_{n-1}+c_{n-1})\in C_{1}\otimes C_{2}\otimes C_{3} 
$. That is $\phi \left( C\right) \subseteq C_{1}\otimes C_{2}\otimes C_{3}$.

On the other hand, for any $(a,b,c)\in C_{1}\otimes C_{2}\otimes C_{3}$
where $a=(a_{0},a_{1},...,a_{n-1})\in C_{1},$ $%
b=(a_{0}+b_{0}+c_{0},a_{1}+b_{1}+c_{1},...,a_{n-1}+b_{n-1}+c_{n-1})\in C_{2}$%
, $c=(a_{0}+2b_{0}+c_{0},a_{1}+2b_{1}+c_{1},...,a_{n-1}+2b_{n-1}+c_{n-1})\in
C_{3}$. There are $x=\left( x_{0},x_{1},...,x_{n-1}\right) $, $y=\left(
y_{0},y_{1},...,y_{n-1}\right) $, $z=\left( z_{0},z_{1},...,z_{n-1}\right)
\in C$ such that $x_{i}=a_{i}+(v+2v^{2})p_{i},$ $y_{i}=b_{i}+\left(
1+2v^{2}\right) q_{i},$ $z_{i}=c_{i}+\left( 2v+2v^{2}\right) r_{i}$ where $%
p_{i},$ $q_{i},r_{i}\in Z_{3}$ and $0\leq i\leq n-1.$ Since $C$ is linear we
have $m=(1+2v^{2})x+\left( 2v+2v^{2}\right) y+\left( v+2v^{2}\right)
z=a+v(2b+c)+v^{2}(2a+2b+2c)\in C.$ It follows then $\phi \left( m\right)
=\left( a,b,c\right) $, which gives $C_{1}\otimes C_{2}\otimes
C_{3}\subseteq \phi \left( C\right) .$

Therefore, $\phi \left( C\right) =C_{1}\otimes C_{2}\otimes C_{3}.$ The
second result is easy to verify.
\end{proof}

\begin{corollary}
If $\phi \left( C\right) =C_{1}\otimes C_{2}\otimes C_{3},$ then $%
C=(1+2v^{2})C_{1}\oplus \left( 2v+2v^{2}\right) C_{2}\oplus \left(
v+2v^{2}\right) C_{3}.$ It is easy to see that
\end{corollary}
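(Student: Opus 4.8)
The plan is to establish the equality by proving both inclusions, the argument being entirely governed by the observation that the three coefficients appearing in the decomposition, namely $e_{1}=1+2v^{2}$, $e_{2}=2v+2v^{2}$ and $e_{3}=v+2v^{2}$, form a complete system of pairwise orthogonal idempotents of $R$: one checks $e_{i}^{2}=e_{i}$, $e_{i}e_{j}=0$ for $i\neq j$, and $e_{1}+e_{2}+e_{3}=1$. First I would record the elementary multiplication rules in $R=Z_{3}[v]/\left\langle v^{3}-v\right\rangle $, namely $e_{1}v=e_{1}v^{2}=0$, $e_{2}v=e_{2}v^{2}=e_{2}$, $e_{3}v^{2}=e_{3}$ and $e_{3}v=2v+v^{2}$. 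From these one obtains, for any $a,b,c\in Z_{3}^{n}$ acting coordinatewise, the identities $e_{1}(a+vb+v^{2}c)=(1+2v^{2})a$, $e_{2}(a+vb+v^{2}c)=(2v+2v^{2})(a+b+c)$ and $e_{3}(a+vb+v^{2}c)=(v+2v^{2})(a+2b+c)$. In words: multiplication of a word of $R^{n}$ by $e_{i}$ retains exactly the part whose Gray component is the $i$-th ternary code.

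For the inclusion $C\subseteq (1+2v^{2})C_{1}\oplus (2v+2v^{2})C_{2}\oplus (v+2v^{2})C_{3}$, I would take $m\in C$, write each coordinate as $m_{j}=a_{j}+vb_{j}+v^{2}c_{j}$, and put $a=(a_{j})$, $u=(a_{j}+b_{j}+c_{j})$, $w=(a_{j}+2b_{j}+c_{j})$. By the very definitions of $C_{1},C_{2},C_{3}$ these satisfy $a\in C_{1}$, $u\in C_{2}$, $w\in C_{3}$, and since $1=e_{1}+e_{2}+e_{3}$ the formulas above give $m=e_{1}m+e_{2}m+e_{3}m=(1+2v^{2})a+(2v+2v^{2})u+(v+2v^{2})w$, which lies in the right-hand side. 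This reduces to a short verification of coefficients in $Z_{3}$.

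For the reverse inclusion I would work one summand at a time, using linearity of $C$. If $a\in C_{1}$, then by definition some word $a+vb^{\prime }+v^{2}c^{\prime }$ with $b^{\prime },c^{\prime }\in Z_{3}^{n}$ belongs to $C$, and multiplying it by $e_{1}$ gives $(1+2v^{2})a\in C$; likewise, if $b\in C_{2}$ there is a codeword of $C$ on which $e_{2}$ evaluates to $(2v+2v^{2})b$, so $(2v+2v^{2})b\in C$, and similarly $(v+2v^{2})c\in C$ for every $c\in C_{3}$. Adding these three codewords and using linearity of $C$ once more shows $(1+2v^{2})a+(2v+2v^{2})b+(v+2v^{2})c\in C$, which is the desired inclusion; combining the two gives the stated equality. (One may also simply note that $\phi (C)=C_{1}\otimes C_{2}\otimes C_{3}$ is the content of the preceding theorem, and the displayed decomposition of $C$ is the uniqueness statement recorded just before it, so the corollary is immediate; but the idempotent argument is self-contained and makes the role of the coefficients transparent.)

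I do not anticipate a real obstacle: every claim reduces to a finite computation over $Z_{3}$. The only point calling for mild care is the reverse inclusion, where one must correctly invoke the \emph{existential} form of the definitions of $C_{1},C_{2},C_{3}$ together with the $R$-linearity of $C$ to manufacture, from an arbitrary codeword, the ``pure'' codewords $(1+2v^{2})a$, $(2v+2v^{2})b$, $(v+2v^{2})c$; the orthogonality relations $e_{i}e_{j}=0$ are exactly what guarantee that these three pieces recombine into an element of $C$ without cross terms.
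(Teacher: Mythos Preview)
Your proposal is correct, and you have already identified exactly how the paper handles this: the corollary is stated without proof, relying on Theorem~8 (which establishes $\phi(C)=C_{1}\otimes C_{2}\otimes C_{3}$) together with the unique decomposition $C=(1+2v^{2})C_{1}\oplus(2v+2v^{2})C_{2}\oplus(v+2v^{2})C_{3}$ recorded just before that theorem. Your idempotent argument unpacks precisely the mechanism behind that earlier decomposition statement, so the two approaches coincide in substance; yours is simply more explicit about why the three coefficients do the job.
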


$\left\vert C\right\vert =\left\vert C_{1}\right\vert \left\vert
C_{2}\right\vert \left\vert C_{3}\right\vert =3^{n-\deg (f_{1})}3^{n-\deg
(f_{2})}3^{n-\deg (f_{3})}$

$\qquad =3^{3n-(\deg (f_{1})+\deg (f_{2})+\deg (f_{3}))}$ where $f_{1},f_{2}$
and $f_{3}$ are the generator polynomials of $C_{1},C_{2}$ and $C_{3}$,
respectively.

\begin{corollary}
If $G_{1},G_{2}$ and $G_{3}$ are generator matrices of ternary linear codes $%
C_{1},C_{2}$ and $C_{3}$ respectively, then the generator matrix of $C$ is
\end{corollary}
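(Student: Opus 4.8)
The plan is to show that the block matrix
\[
G=\begin{pmatrix}(1+2v^{2})G_{1}\\(2v+2v^{2})G_{2}\\(v+2v^{2})G_{3}\end{pmatrix},
\]
obtained by scaling each ternary generator matrix $G_{i}$ by the coefficient attached to $C_{i}$ in the decomposition $C=(1+2v^{2})C_{1}\oplus(2v+2v^{2})C_{2}\oplus(v+2v^{2})C_{3}$ of the preceding corollary and then stacking the three blocks, is a generator matrix of $C$ over $R$. This is done by proving the two inclusions between $C$ and the $R$-submodule $M$ of $R^{n}$ spanned by the rows of $G$.

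First I would record two elementary facts: that $1+2v^{2}$, $2v+2v^{2}$, $v+2v^{2}$ are pairwise orthogonal idempotents of $R$ summing to $1$ --- this is precisely what makes the displayed decomposition hold, and it also gives $(1+2v^{2})C_{1}\subseteq C$, $(2v+2v^{2})C_{2}\subseteq C$, $(v+2v^{2})C_{3}\subseteq C$, since in each case the other two summands may be taken to be $0$ --- and that $Z_{3}$ sits inside $R$ as the subring of constants, so that any $Z_{3}$-linear combination of vectors of $R^{n}$ is in particular an $R$-linear combination. For $C\subseteq M$: given $m\in C$, use the decomposition to write $m=(1+2v^{2})x_{1}+(2v+2v^{2})x_{2}+(v+2v^{2})x_{3}$ with $x_{i}\in C_{i}$, express each $x_{i}$ as a $Z_{3}$-linear combination of the rows of $G_{i}$, and distribute the respective scalar $(1+2v^{2})$, $(2v+2v^{2})$, $(v+2v^{2})$ over the sum; this exhibits $m$ as an $R$-linear combination of the rows of $G$. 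For $M\subseteq C$: each row of the $i$-th block of $G$ is the corresponding coefficient times a codeword of $C_{i}$, hence lies in one of $(1+2v^{2})C_{1}$, $(2v+2v^{2})C_{2}$, $(v+2v^{2})C_{3}$, all of which are contained in $C$; since $C$ is $R$-linear, $M\subseteq C$. The two inclusions give $M=C$, i.e.\ $G$ is a generator matrix of $C$.

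As a final sanity check I would count rows: $G$ has $\dim_{Z_{3}}C_{1}+\dim_{Z_{3}}C_{2}+\dim_{Z_{3}}C_{3}$ rows, which is consistent with $|C|=|C_{1}||C_{2}||C_{3}|$; in the cyclic case this count equals $3n-(\deg f_{1}+\deg f_{2}+\deg f_{3})$, matching the formula recorded just above. There is no serious obstacle here once the module decomposition is in hand --- the argument is essentially bookkeeping. The only subtlety worth flagging is the passage from a $Z_{3}$-linear combination of the scaled rows to an $R$-linear combination, which is exactly what allows the same stacked matrix to serve as a generator matrix over the larger ring $R$ rather than only over $Z_{3}$.
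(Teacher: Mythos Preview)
Your proposal is correct and follows exactly the approach the paper intends: the corollary in the paper is stated without proof, as an immediate consequence of the decomposition $C=(1+2v^{2})C_{1}\oplus(2v+2v^{2})C_{2}\oplus(v+2v^{2})C_{3}$ from the preceding corollary, and your argument simply makes explicit the two inclusions that this decomposition yields. The orthogonal-idempotent observation you record is precisely the mechanism behind the direct-sum decomposition the paper uses, so there is no divergence in method.
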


$\ \ \ \ \ \ \ \ \ \ \ \ \ \ \ \ \ \ \ \ \ \ \ \ \ \ \ \ \ \ \ \ \ \ \ \ \ \
\ \ \ \ G=\left[ 
\begin{array}{c}
(1+2v^{2})G_{1} \\ 
\left( 2v+2v^{2}\right) G_{2} \\ 
\left( v+2v^{2}\right) G_{3}%
\end{array}%
\right] $ .

We have$\ $

$\ $%
\begin{equation*}
\phi (G)=\left[ 
\begin{array}{c}
\phi ((1+2v^{2})G_{1}) \\ 
\phi (\left( 2v+2v^{2}\right) G_{2}) \\ 
\phi (\left( v+2v^{2}\right) G_{3})%
\end{array}%
\right] =\left[ 
\begin{array}{ccc}
G_{1} & 0 & 0 \\ 
0 & G_{2} & 0 \\ 
0 & 0 & G_{3}%
\end{array}%
\right] .
\end{equation*}%
$\ \ \ \ \ \ \ \ \ \ \ \ \ \ \ \ \ \ \ \ \ \ \ \ \ \ \ \ \ \ \ \ \ \ \ \ \ \
\ \ \ \ \ \ \ \ $

\bigskip

Let $d_{L}$ minimum Lee weight of linear code $C$ over $R$. Then,

\qquad \qquad \qquad\ $d_{L}=d_{H}(\phi \left( C\right) )=\min
\{d_{H}(C_{1}),d_{H}(C_{2}),d_{H}(C_{3})\}$

where $d_{H}(C_{i})$ denotes the minimum Hamming weights of ternary codes $%
C_{1},C_{2}$ and $C_{3}$, respectively.

\bigskip As similiar to section 4 in [1] we have the following Lemma and
Examples.

\begin{lemma}
Let $C=\left\langle f(x)\right\rangle $ be a negacyclic code of length n
over R and $\phi \left( f(x)\right) =\left( f_{1},f_{2},f_{3}\right) $ with $%
\deg (\gcd (f_{1},x^{n}+1))=n-k_{1},\deg (\gcd (f_{2},x^{n}+1))=n-k_{2},\deg
(\gcd (f_{3},x^{n}+1))=n-k_{3}.$ Then, $\left\vert C\right\vert
=3^{k_{1}+k_{2}+k_{3}}.$
\end{lemma}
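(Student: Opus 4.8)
The plan is to use the decomposition $C=(1+2v^{2})C_{1}\oplus (2v+2v^{2})C_{2}\oplus (v+2v^{2})C_{3}$ already established, and to identify each ternary code $C_i$ explicitly as a negacyclic code over $Z_3$ generated by $\gcd(f_i,x^n+1)$. First I would recall that if $C=\langle f(x)\rangle$ is an ideal of $R[x]/\langle x^n+1\rangle$, then applying the Gray map coordinatewise (which is a ring-compatible identification once we note $\phi$ respects the idempotent decomposition of $R$) sends $C$ to the triple of ideals $\langle \bar f_1\rangle\times\langle\bar f_2\rangle\times\langle\bar f_3\rangle$ inside $(Z_3[x]/\langle x^n+1\rangle)^3$, where $\bar f_i$ is the image of $f_i$ in $Z_3[x]/\langle x^n+1\rangle$. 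In other words, $C_i=\langle \bar f_i\rangle$ is the negacyclic code over $Z_3$ of length $n$ generated by $f_i \bmod (x^n+1)$.

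The key step is then the standard fact about principal ideals in $Z_3[x]/\langle x^n+1\rangle$: for any polynomial $g$, the ideal $\langle g\rangle$ in this quotient ring equals $\langle \gcd(g,x^n+1)\rangle$, and if $\gcd(g,x^n+1)$ has degree $n-k$ then this ideal, viewed as a $Z_3$-subspace of $Z_3^n$, has dimension $k$, hence cardinality $3^{k}$. I would apply this with $g=f_i$ and the given hypotheses $\deg(\gcd(f_i,x^n+1))=n-k_i$, concluding $|C_i|=3^{k_i}$ for $i=1,2,3$.

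Finally, combining with Theorem~9 (or Corollary~10), we have $|C|=|C_1|\,|C_2|\,|C_3|=3^{k_1}\cdot 3^{k_2}\cdot 3^{k_3}=3^{k_1+k_2+k_3}$, which is the claim.

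The main obstacle — really the only delicate point — is justifying that $C_i$ is precisely the negacyclic code $\langle f_i\bmod(x^n+1)\rangle$ rather than merely \emph{contained in} it; this requires checking that the Gray map intertwines the module structure correctly, i.e. that $\phi$ carries the $R[x]/\langle x^n+1\rangle$-module structure on $C$ to the product $Z_3[x]/\langle x^n+1\rangle$-module structure, so that the image is a genuine ideal and not a proper submodule. This follows from Proposition~4 extended to the negacyclic shift (the analogue $\phi\eta=\eta'\phi$ where $\eta'$ is the coordinatewise negacyclic shift on $Z_3^{3n}$), together with the fact that $\phi$ is an additive bijection, so that $\phi(C)$ is closed under the three independent negacyclic shifts on the three blocks, forcing each block $C_i$ to be a full negacyclic ideal; the generator identification then comes from tracking $\phi(f(x))=(f_1,f_2,f_3)$ through the isomorphism. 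Once this is in place the cardinality count is routine linear algebra over $Z_3$.
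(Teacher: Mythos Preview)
Your proposal is correct. The paper itself does not prove this lemma: it merely prefaces the statement with ``As similar to section 4 in [1]'' and then gives examples, so there is no in-paper argument to compare against. Your route via the idempotent decomposition and the standard size count for principal ideals in $Z_{3}[x]/(x^{n}+1)$ is the expected one and is fully consistent with the paper's framework (Theorem~8, Corollary~9, and Propositions~15--18 in the next section, which carry out the analogous decomposition explicitly).

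One remark that dissolves your ``delicate point'': the Gray map $\phi(a+vb+v^{2}c)=(a,\,a+b+c,\,a+2b+c)$ is \emph{exactly} the Chinese Remainder isomorphism $R\to Z_{3}\times Z_{3}\times Z_{3}$ obtained by evaluating at $v=0,1,2$ (i.e.\ reducing modulo $v$, $v-1$, $v+1$). Hence $\phi$ is a ring isomorphism, not merely an additive bijection, and it extends to a ring isomorphism $R[x]/(x^{n}+1)\to \bigl(Z_{3}[x]/(x^{n}+1)\bigr)^{3}$ sending the ideal $\langle f\rangle$ precisely to $\langle f_{1}\rangle\times\langle f_{2}\rangle\times\langle f_{3}\rangle$; no separate argument about closure under negacyclic shifts is needed. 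A minor numbering slip: the identity $|C|=|C_{1}|\,|C_{2}|\,|C_{3}|$ is Theorem~8 (with Corollary~9), not Theorem~9/Corollary~10.
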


\begin{example}
Let $C=\left\langle f(x)\right\rangle =\left\langle
(2v+2v^{2})x^{2}+(1+2v+2v^{2})x+1\right\rangle $ be a negacyclic code of
length $3$ over $R$. Hence, $\phi \left( f(x)\right) =(x+1,x^{2}+2x+1,x+1)$
and 
\begin{eqnarray*}
f_{1} &=&\gcd (x+1,x^{3}+1)=x+1 \\
f_{2} &=&\gcd (x^{2}+2x+1,x^{3}+1)=x^{2}+2x+1 \\
f_{3} &=&\gcd (x+1,x^{3}+1)=x+1
\end{eqnarray*}%
So we have $\left\vert C\right\vert =3^{2+1+2}=3^{5}.$
\end{example}

\begin{example}
Let $C=\left\langle f(x)\right\rangle =\left\langle
v^{2}x^{4}+vx^{3}+(1+2v^{2})x^{2}+2vx+1\right\rangle $ be a negacyclic code
of length $10$ over $R$. Hence, $\phi \left( f(x)\right)
=(x^{2}+1,x^{4}+x^{3}+2x+1,x^{4}+2x^{3}+x+1)$ and 
\begin{eqnarray*}
f_{1} &=&\gcd (x^{2}+1,x^{10}+1)=x^{2}+1 \\
f_{2} &=&\gcd (x^{4}+x^{3}+2x+1,x^{10}+1)=x^{4}+x^{3}+2x+1 \\
f_{3} &=&\gcd (x^{4}+2x^{3}+x+1,x^{10}+1)=x^{4}+2x^{3}+x+1
\end{eqnarray*}%
So we have $\left\vert C\right\vert =3^{8+6+6}=3^{20}.$
\end{example}

Let $h_{i}(x)=(x^{n}+1)/(\gcd (x^{n}+1,f_{i})).$ Hence, $C^{\bot
}=\left\langle \phi ^{-1}\left(
h_{1_{R}}(x),h_{2_{R}}(x),h_{3_{R}}(x)\right) \right\rangle $ where $%
h_{i_{R}}(x)$ be the reciprocal polynomial of $h_{i}(x)$ for $i=1,2,3.$By
using the previous example, 
\begin{eqnarray*}
C^{\bot } &=&\left\langle \phi ^{-1}\left(
h_{1_{R}}(x),h_{2_{R}}(x),h_{3_{R}}(x)\right) \right\rangle  \\
&=&\left\langle \phi
^{-1}(x^{8}+2x^{6}+x^{4}+2x^{2}+1,x^{6}+x^{5}+x^{4}+x^{2}+2x+1,x^{6}+2x^{5}+x^{4}+x^{2}+x+1)\right\rangle 
\\
&=&\left\langle
(1+2v^{2})x^{8}+(2+2v^{2})x^{6}+vx^{5}+x^{4}+(2+2v^{2})x^{2}+2vx+1\right%
\rangle 
\end{eqnarray*}

\section{Quantum Codes From Cyclic Codes Over $R$}

\begin{theorem}
Let $C_{1}=\left[ n,k_{1},d_{1}\right] _{q}$ and $C_{2}=\left[ n,k_{2},d_{2}%
\right] _{q}$ be linear codes over GF(q) with $C_{2}^{\perp }\subseteq C_{1}.
$ Furthermore, let $d=min\{wt(v):v\in (C_{1}\backslash C_{2}^{\perp })\cup
(C_{2}^{\perp }\backslash C_{1})\}\geq \min \{d_{1},d_{2}\}.$Then there
exists a quantum error-correcting code $C=[n,k_{1}+k_{2}-n,d]_{q}.$In
particular, if $C_{1}^{\perp }\subseteq C_{1}$, then there exists a quantum
error-correcting code $C=[n,n-2k_{1},d_{1}],$ where $d_{1}=min\{wt(v):v\in
(C_{1}^{\perp }\backslash C_{1})\}$, $[20]$.
\end{theorem}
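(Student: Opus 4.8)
The plan is to treat this as the CSS (Calderbank--Shor--Steane) construction and give the standard stabilizer-code proof. First I would set up the ambient space: fix a primitive complex $q$-th root of unity $\omega$, work in $(\mathbb{C}^{q})^{\otimes n}$ with computational basis $\{|u\rangle : u\in GF(q)^{n}\}$, and define the Pauli operators $X^{a}:|u\rangle\mapsto|u+a\rangle$ and $Z^{b}:|u\rangle\mapsto\omega^{\langle b,u\rangle}|u\rangle$ for $a,b\in GF(q)^{n}$. I would then record the equivalent form of the hypothesis, $C_{1}^{\perp}\subseteq C_{2}$, and define the candidate code $Q$ to be the span of the vectors $|\psi_{x}\rangle=|C_{2}^{\perp}|^{-1/2}\sum_{y\in C_{2}^{\perp}}|x+y\rangle$ as $x$ runs over $C_{1}$. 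Two quick observations pin down the parameters $n$ and $k$: $|\psi_{x}\rangle$ depends only on the coset $x+C_{2}^{\perp}$, and two such states are either equal or orthogonal; hence $\dim Q=[C_{1}:C_{2}^{\perp}]=q^{k_{1}}/q^{\,n-k_{2}}=q^{\,k_{1}+k_{2}-n}$, i.e. $k_{1}+k_{2}-n$ encoded $q$-ary symbols. One also checks that $Q$ is the joint $+1$-eigenspace of the abelian group $\{X^{a}Z^{b}:a\in C_{2}^{\perp},\ b\in C_{1}^{\perp}\}$ (abelian because $\langle a,b'\rangle=0$ for $a\in C_{2}^{\perp}\subseteq (C_1^\perp)^\perp$, $b'\in C_{1}^{\perp}$), so $Q$ is a genuine stabilizer code, though that is not strictly needed for the dimension count.

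The substance is the distance, via the Knill--Laflamme error-detection condition: it suffices to show that every Pauli error $E=X^{a}Z^{b}$ of weight $wt(E)=|\{i:a_{i}\neq 0\ \text{or}\ b_{i}\neq 0\}|\le d-1$ acts on $Q$ either as a scalar or with $PEP=0$, where $P$ projects onto $Q$. I would compute $\langle\psi_{x'}|X^{a}Z^{b}|\psi_{x}\rangle$ directly: it can be nonzero only when $x'-x-a\in C_{2}^{\perp}$, which (as $x,x'\in C_{1}$) forces $a\in C_{1}$, and when nonzero it equals $|C_{2}^{\perp}|^{-1}\omega^{\langle b,x\rangle}\sum_{y\in C_{2}^{\perp}}\omega^{\langle b,y\rangle}$, which vanishes unless $b\in(C_{2}^{\perp})^{\perp}=C_{2}$. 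So if $a\notin C_{1}$ or $b\notin C_{2}$ then $PEP=0$ and $E$ is detected. If instead $a\in C_{1}$ and $b\in C_{2}$, then $E$ maps $Q$ to $Q$ and acts on it as a scalar exactly when $a\in C_{2}^{\perp}$ and $b\in C_{1}^{\perp}$; otherwise $E$ is a nontrivial logical operator. Thus any Pauli error that is neither detected nor a scalar on $Q$ has either $a\in C_{1}\setminus C_{2}^{\perp}$, so $wt(E)\ge wt(a)\ge d$, or $a\in C_{2}^{\perp}$ and $b\in C_{2}\setminus C_{1}^{\perp}$, so $wt(E)\ge wt(b)\ge d$; either way $wt(E)\ge d$, contradicting $wt(E)\le d-1$. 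Hence $Q$ detects all errors of weight $\le d-1$, and exhibiting a minimum-weight $v\in C_{1}\setminus C_{2}^{\perp}$ (or in $C_{2}\setminus C_{1}^{\perp}$) gives an undetectable weight-$d$ logical operator, so by Knill--Laflamme $Q$ has minimum distance exactly $d$, i.e. it is an $[[n,\,k_{1}+k_{2}-n,\,d]]_{q}$ code. The inequality $d\ge\min\{d_{1},d_{2}\}$ then follows since $C_{1}\setminus C_{2}^{\perp}\subseteq C_{1}\setminus\{0\}$ and $C_{2}\setminus C_{1}^{\perp}\subseteq C_{2}\setminus\{0\}$. Finally, the ``in particular'' clause is the special case $C_{2}=C_{1}$: the hypothesis $C_{2}^{\perp}\subseteq C_{1}$ becomes $C_{1}^{\perp}\subseteq C_{1}$, $k_{2}=k_{1}$, and the construction outputs a code of length $n$, dimension $2k_{1}-n$, and distance $d_{1}=\min\{wt(v):v\in C_{1}\setminus C_{1}^{\perp}\}$.

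The main obstacle is the distance step: getting the $GF(q)$ Pauli-error formalism right and carrying out the two character-sum evaluations that (i) show errors with $a\notin C_{1}$ or $b\notin C_{2}$ are detectable and (ii) characterize the undetectable logical Paulis as exactly $\{X^{a}Z^{b}:a\in C_{1},\ b\in C_{2}\}$ modulo the stabilizer. Once that dictionary between Paulis and the pair $(C_{1},C_{2})$ is in hand, everything else — the coset count for $k$, the weight bookkeeping for $d$, and the $C_{2}=C_{1}$ specialization — is routine. A minor but persistent care point is keeping the two equivalent dual-inclusions $C_{2}^{\perp}\subseteq C_{1}$ and $C_{1}^{\perp}\subseteq C_{2}$ straight throughout, since the $X$-part of an error is governed by $C_{1}$ while the $Z$-part is governed by $C_{2}$.
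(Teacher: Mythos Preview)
The paper does not prove this theorem at all: it is quoted verbatim from reference~[20] (Grassl--Beth) and used as a black box in the subsequent construction of quantum codes from cyclic and negacyclic codes over $R$. So there is no ``paper's own proof'' to compare against.

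Your sketch is the standard CSS/stabilizer argument and is essentially correct. A few remarks. First, you silently (and correctly) repair several typos in the paper's statement: the set $C_{2}^{\perp}\setminus C_{1}$ appearing in the paper's definition of $d$ is empty under the hypothesis $C_{2}^{\perp}\subseteq C_{1}$ and should read $C_{2}\setminus C_{1}^{\perp}$, as you use in your distance analysis; likewise in the ``in particular'' clause the parameters should be $[[n,\,2k_{1}-n,\,d_{1}]]$ with $d_{1}=\min\{wt(v):v\in C_{1}\setminus C_{1}^{\perp}\}$, not $n-2k_{1}$ and $C_{1}^{\perp}\setminus C_{1}$. Second, one small technical point to flesh out when you write this up: for $q$ a prime power (not just a prime) the character $\omega^{\langle b,u\rangle}$ needs the inner product to be traced down to the prime field, or equivalently you should work with the additive characters of $GF(q)^{n}$; otherwise the orthogonality relation $\sum_{y\in C_{2}^{\perp}}\omega^{\langle b,y\rangle}=0$ for $b\notin C_{2}$ does not hold as stated. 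With that adjustment the character-sum computations go through exactly as you describe.
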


\begin{proposition}
Let $C=(1+2v^{2})C_{1}\oplus \left( 2v+2v^{2}\right) C_{2}\oplus \left(
v+2v^{2}\right) C_{3}$ be a linear code over $R.$Then $C$ is a cyclic code
over $R$ iff $C_{1},C_{2}$ and $C_{3}$ are cyclic codes.
\end{proposition}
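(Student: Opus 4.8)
The plan is to make systematic use of the fact that the three elements $e_1 = 1+2v^2$, $e_2 = 2v+2v^2$ and $e_3 = v+2v^2$ appearing in the decomposition form a complete set of orthogonal idempotents of $R$. Reducing powers of $v$ by $v^3 = v$, one checks directly that $e_i^2 = e_i$, that $e_ie_j = 0$ for $i \neq j$, and that $e_1+e_2+e_3 = 1$. The same elementary reductions give the identities
\begin{equation*}
e_1(a+vb+v^2c) = e_1a, \qquad e_2(a+vb+v^2c) = e_2(a+b+c), \qquad e_3(a+vb+v^2c) = e_3(a+2b+c)
\end{equation*}
for all $a,b,c \in Z_3^n$. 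Two consequences will be used repeatedly: first, $e_iC = e_iC_i$ for each $i$ (multiply the codewords of $C$ by $e_i$ and use orthogonality); second, since multiplication by $e_i$ is injective on $Z_3^n$, the map $\alpha \mapsto e_i\alpha$ is a $Z_3$-linear bijection $C_i \to e_iC$, and it commutes with the cyclic shift $\sigma$ because $\sigma$ is $R$-linear, so $\sigma(e_i\alpha) = e_i\sigma(\alpha)$.

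For the ``if'' direction, suppose $C_1, C_2, C_3$ are cyclic. A generic codeword of $C$ is $w = e_1\alpha + e_2\beta + e_3\gamma$ with $\alpha \in C_1$, $\beta \in C_2$, $\gamma \in C_3$. Applying $\sigma$ and using $R$-linearity, $\sigma(w) = e_1\sigma(\alpha) + e_2\sigma(\beta) + e_3\sigma(\gamma)$, which lies in $C$ because $\sigma(\alpha) \in C_1$, $\sigma(\beta) \in C_2$, $\sigma(\gamma) \in C_3$. Hence $\sigma(C) \subseteq C$, and equality holds since $\sigma$ is a bijection of the finite set $C$. For the ``only if'' direction, suppose $C$ is cyclic and take $\alpha \in C_1$; by definition there are $b,c \in Z_3^n$ with $\alpha + vb + v^2c \in C$. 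Since $C$ is an $R$-submodule, multiplying by $e_1$ and using the first identity above gives $e_1\alpha \in C$, hence $\sigma(e_1\alpha) = e_1\sigma(\alpha) \in C$. Writing $e_1\sigma(\alpha) = \sigma(\alpha) + v\cdot 0 + v^2\cdot 2\sigma(\alpha)$ and comparing with the definition of $C_1$ shows $\sigma(\alpha) \in C_1$. The cases of $C_2$ and $C_3$ are the same with $e_2$ and $e_3$ in place of $e_1$: e.g.\ for $\beta \in C_2$ one gets $e_2\sigma(\beta) = v\cdot 2\sigma(\beta) + v^2\cdot 2\sigma(\beta) \in C$, whose associated ``$a+b+c$''-coordinate equals $\sigma(\beta)$, so $\sigma(\beta) \in C_2$; similarly $\sigma(\gamma) \in C_3$.

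I do not expect a genuine obstacle; the proof is essentially bookkeeping, and the only points demanding care are (i) carrying out the arithmetic in $R$ modulo $v^3 = v$ correctly when verifying the idempotent relations and the three auxiliary identities, and (ii) being precise enough with the definitions of $C_1, C_2, C_3$ that ``reading off the relevant coordinate'' of an element of $C$ is legitimate. If one prefers a more conceptual write-up, the same result follows from the ring isomorphism $R[x]/\langle x^n-1\rangle \cong \bigoplus_{i=1}^{3} Z_3[x]/\langle x^n-1\rangle$ induced by $e_1, e_2, e_3$, under which ideals correspond to triples of ideals; but I would keep the shift-based argument above, as it is in the spirit of the surrounding propositions.
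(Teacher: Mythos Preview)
Your proof is correct and follows essentially the same idea as the paper's: both arguments exploit the orthogonal idempotent decomposition $1 = e_1 + e_2 + e_3$ and the fact that the cyclic shift commutes with scalar multiplication by elements of $R$, so that shifting a codeword of $C$ amounts to shifting each ternary component. The only cosmetic difference is that you isolate components by multiplying a given codeword by $e_i$ (using the $R$-module structure), whereas the paper works directly with the direct-sum expression $m = e_1 a + e_2 b + e_3 c$; these are two phrasings of the same mechanism.
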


\begin{proof}
Let $\left( a_{0},a_{1},...,a_{n-1}\right) \in C_{1},$ $\left(
b_{0},b_{1},...,b_{n-1}\right) \in C_{2}$ and $\left(
c_{0},c_{1},...,c_{n-1}\right) \in C_{3}$ . Assume that $%
m_{i}=(1+2v^{2})a_{i}+\left( 2v+2v^{2}\right) b_{i}+\left( v+2v^{2}\right)
c_{i}$ for $i=0,1,...,n-1$. Then $\left( m_{0},m_{1},...,m_{n-1}\right) \in C
$. Since $C$ is a cyclic code, it follows that $\left(
m_{n-1},m_{0},...,m_{n-2}\right) \in C$. Note that $%
(m_{n-1},m_{0},...,m_{n-2})=(1+2v^{2})(a_{n-1},a_{0},...,a_{n-2})+\left(
2v+2v^{2}\right) (b_{n-1},b_{0},...,$ $b_{n-2})+\left( v+2v^{2}\right)
(c_{n-1},c_{0},...,c_{n-2})$. Hence $(a_{n-1},a_{0},...,a_{n-2})\in C_{1},$ $%
\left( b_{n-1},b_{0},...,b_{n-2}\right) \in C_{2}$ and $%
(c_{n-1},c_{0},...,c_{n-2})\in C_{3}$. Therefore,$C_{1},C_{2}$ and $C_{3}$
cyclic codes over $Z_{3}.$

Conversely, suppose that $C_{1},C_{2}$ and $C_{3}$ cyclic codes over $Z_{3}$%
. Let $(m_{0},m_{1},...,$\newline
$m_{n-1})\in C$ where $m_{i}=(1+2v^{2})a_{i}+\left( 2v+2v^{2}\right)
b_{i}+\left( v+2v^{2}\right) c_{i}$ for $i=0,1,...,n-1$. Then $\left(
a_{0},a_{1},...,a_{n-1}\right) \in C_{1},$ $\left(
b_{0},b_{1},...,b_{n-1}\right) \in C_{2}$ and $(c_{0},c_{1},...,$\newline
$c_{n-1})\in C_{3}$. Note that $\left( m_{n-1},m_{0},...,m_{n-2}\right)
=(1+2v^{2})(a_{n-1},a_{0},...,a_{n-2})+\left( 2v+2v^{2}\right)
(b_{n-1},b_{0},...,b_{n-2})+\left( v+2v^{2}\right)
(c_{n-1},c_{0},...,c_{n-2})\in C=(1+2v^{2})C_{1}\oplus \left(
2v+2v^{2}\right) C_{2}\oplus \left( v+2v^{2}\right) C_{3}.$ So, $C$ is
cyclic code over $R.$
\end{proof}

\begin{proposition}
Let $C=(1+2v^{2})C_{1}\oplus \left( 2v+2v^{2}\right) C_{2}\oplus \left(
v+2v^{2}\right) C_{3}$ be a linear code over $R.$Then $C$ is a nega-cyclic
code over $R$ iff $C_{1},C_{2}$ and $C_{3}$ are nega-cyclic codes.
\end{proposition}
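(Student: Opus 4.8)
The plan is to repeat the argument of the preceding proposition with the cyclic shift $\sigma$ replaced by the negacyclic shift $\eta$, given in the preliminaries by $\eta(c_{0},c_{1},\dots,c_{n-1})=(-c_{n-1},c_{0},\dots,c_{n-2})$. The one algebraic fact that makes this work is that $1+2v^{2}$, $2v+2v^{2}$ and $v+2v^{2}$ are pairwise orthogonal idempotents of $R$ summing to $1$, so the expression $C=(1+2v^{2})C_{1}\oplus(2v+2v^{2})C_{2}\oplus(v+2v^{2})C_{3}$ is a genuine internal direct sum: every $m\in C$ has a unique decomposition $m=(1+2v^{2})a+(2v+2v^{2})b+(v+2v^{2})c$ with $a\in C_{1}$, $b\in C_{2}$, $c\in C_{3}$ (this is the decomposition recorded just before Theorem 8), and, conversely, every such combination lies in $C$.

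For the direction ``$\Rightarrow$'', assume $C$ is negacyclic. Pick $(a_{i})\in C_{1}$, $(b_{i})\in C_{2}$, $(c_{i})\in C_{3}$ and form $m_{i}=(1+2v^{2})a_{i}+(2v+2v^{2})b_{i}+(v+2v^{2})c_{i}$, so that $(m_{0},\dots,m_{n-1})\in C$. Because negation on $R$ is additive and commutes with multiplication by the idempotents, one checks that $\eta(m_{0},\dots,m_{n-1})$ equals
\[
(1+2v^{2})\,\eta(a_{0},\dots,a_{n-1})+(2v+2v^{2})\,\eta(b_{0},\dots,b_{n-1})+(v+2v^{2})\,\eta(c_{0},\dots,c_{n-1}).
\]
Since $C$ is negacyclic, the left-hand side belongs to $C$, and by uniqueness of the direct-sum decomposition its three components must lie in $C_{1}$, $C_{2}$, $C_{3}$ respectively; i.e.\ $\eta(a_{i})\in C_{1}$, $\eta(b_{i})\in C_{2}$, $\eta(c_{i})\in C_{3}$. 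As $(a_{i}),(b_{i}),(c_{i})$ were arbitrary, $C_{1},C_{2},C_{3}$ are negacyclic.

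For ``$\Leftarrow$'', assume $C_{1},C_{2},C_{3}$ are negacyclic and take any $(m_{0},\dots,m_{n-1})\in C$, written as above with $(a_{i})\in C_{1}$, $(b_{i})\in C_{2}$, $(c_{i})\in C_{3}$. Applying $\eta$ to each component keeps it inside $C_{1},C_{2},C_{3}$, and reassembling via the displayed identity (read from right to left) shows $\eta(m_{0},\dots,m_{n-1})\in C$; hence $C$ is negacyclic. I do not expect any real obstacle: the entire content is the compatibility $-(e_{1}x+e_{2}y+e_{3}z)=e_{1}(-x)+e_{2}(-y)+e_{3}(-z)$ of $\eta$ with the idempotent decomposition, together with the uniqueness of that decomposition. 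Equivalently, one could run the argument through the Gray map, using that $\phi$ intertwines the negacyclic shift on $R^{n}$ with a negacyclic-type shift on $Z_{3}^{3n}$ — the obvious analogue of Proposition 4 — and then invoking Theorem 8.
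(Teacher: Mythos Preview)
Your argument is correct and is essentially the paper's own proof: both directions proceed by writing $m_i=(1+2v^{2})a_i+(2v+2v^{2})b_i+(v+2v^{2})c_i$, observing that $\eta(m)=(1+2v^{2})\eta(a)+(2v+2v^{2})\eta(b)+(v+2v^{2})\eta(c)$, and then reading off the components. Your explicit invocation of the uniqueness of the idempotent decomposition is a welcome clarification of a step the paper leaves implicit.
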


\begin{proof}
Let $\left( a_{0},a_{1},...,a_{n-1}\right) \in C_{1},$ $\left(
b_{0},b_{1},...,b_{n-1}\right) \in C_{2}$ and $\left(
c_{0},c_{1},...,c_{n-1}\right) \in C_{3}$ . Assume that $%
m_{i}=(1+2v^{2})a_{i}+\left( 2v+2v^{2}\right) b_{i}+\left( v+2v^{2}\right)
c_{i}$ for $i=0,1,...,n-1$. Then $\left( m_{0},m_{1},...,m_{n-1}\right) \in C
$. Since $C$ is a nega-cyclic code, it follows that $\left(
-m_{n-1},m_{0},...,m_{n-2}\right) \in C$. Note that $%
(-m_{n-1},m_{0},...,m_{n-2})=(1+2v^{2})(-a_{n-1},a_{0},...,a_{n-2})+\left(
2v+2v^{2}\right) (-b_{n-1},b_{0},...,$ $b_{n-2})+\left( v+2v^{2}\right)
(-c_{n-1},c_{0},$\newline
$...,c_{n-2})$. Hence $(-a_{n-1},a_{0},...,a_{n-2})\in C_{1},\left(
-b_{n-1},b_{0},...,b_{n-2}\right) \in C_{2}$ and $(-c_{n-1}$\newline
$,c_{0},...,c_{n-2})\in C_{3}$. Therefore, $C_{1},C_{2}$ and $C_{3}$
nega-cyclic codes over $Z_{3}.$

Conversely, suppose that $C_{1},C_{2}$ and $C_{3}$ nega-cyclic codes over $%
Z_{3}$. Let $(m_{0},m_{1},...,m_{n-1})\in C$ where $m_{i}=(1+2v^{2})a_{i}+%
\left( 2v+2v^{2}\right) b_{i}+\left( v+2v^{2}\right) c_{i}$ for $%
i=0,1,...,n-1$. Then $\left( a_{0},a_{1},...,a_{n-1}\right) \in C_{1},$ $%
\left( b_{0},b_{1},...,b_{n-1}\right) \in C_{2}$ and $\left(
c_{0},c_{1},...,c_{n-1}\right) \in C_{3}$ . Note that $\left(
-m_{n-1},m_{0},...,m_{n-2}\right) =(1+2v^{2})(-a_{n-1},a_{0},.$\newline
$..,a_{n-2})+\left( 2v+2v^{2}\right) (-b_{n-1},b_{0},...,b_{n-2})+\left(
v+2v^{2}\right) (-c_{n-1},c_{0},...,c_{n-2})\in C=(1+2v^{2})C_{1}\oplus
\left( 2v+2v^{2}\right) C_{2}\oplus \left( v+2v^{2}\right) C_{3}.$ So, $C$
is nega-cyclic code over $R.$
\end{proof}

\begin{proposition}
Suppose $C=(1+2v^{2})C_{1}\oplus \left( 2v+2v^{2}\right) C_{2}\oplus \left(
v+2v^{2}\right) C_{3}$ is a cyclic (negacyclic) code of length $n$ over $R.$%
Then 
\begin{equation*}
C=<(1+2v^{2})f_{1},\left( 2v+2v^{2}\right) f_{2},\left( v+2v^{2}\right)
f_{3}>
\end{equation*}%
\newline
and $\left\vert C\right\vert =3^{3n-(\deg f_{1}+\deg f_{2}+\deg f_{3})}$
where $f_{1},f_{2}$ and $f_{3}$ generator polynomials of $C_{1},C_{2}$ and $%
C_{3}$ respectively.
\end{proposition}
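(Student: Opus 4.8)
The plan is to establish the two assertions separately, building on the decomposition $C=(1+2v^{2})C_{1}\oplus(2v+2v^{2})C_{2}\oplus(v+2v^{2})C_{3}$ together with Proposition 13 (for the cyclic case) or Proposition 14 (for the negacyclic case), which already guarantee that each $C_{i}$ is a cyclic (respectively negacyclic) code over $Z_{3}$. First I would recall that a cyclic or negacyclic code over the field $Z_{3}$ is principal: $C_{i}=\langle f_{i}(x)\rangle$ in $Z_{3}[x]/\langle x^{n}\mp 1\rangle$, where $f_{i}$ is the unique monic generator polynomial dividing $x^{n}\mp1$, and $|C_{i}|=3^{n-\deg f_{i}}$. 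Working inside $R[x]/\langle x^{n}\mp1\rangle$, I would then show the two-sided inclusion between $C$ and the ideal $J:=\langle (1+2v^{2})f_{1},(2v+2v^{2})f_{2},(v+2v^{2})f_{3}\rangle$.

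For the inclusion $J\subseteq C$: each generator $(1+2v^{2})f_{i}$ lies in $C$ because, by the decomposition, $(1+2v^{2})C_{1}\subseteq C$ and $f_{1}\in C_{1}$, and similarly for the other two; since $C$ is an ideal, all $R[x]$-combinations of these generators lie in $C$, so $J\subseteq C$. For the reverse inclusion $C\subseteq J$: take an arbitrary codeword, written via the decomposition as $(1+2v^{2})g_{1}+(2v+2v^{2})g_{2}+(v+2v^{2})g_{3}$ with $g_{i}\in C_{i}$; since $g_{i}=f_{i}h_{i}$ for some $h_{i}\in Z_{3}[x]$, each summand lies in $J$, hence the codeword does. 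This uses only that the idempotent-like coefficients $1+2v^{2}$, $2v+2v^{2}$, $v+2v^{2}$ multiply the $C_{i}$ into the correct direct summand — a fact I would verify by a short computation with the ring multiplication table, noting e.g. that $(1+2v^{2})^{2}=1+2v^{2}$, $(2v+2v^{2})^{2}=2v+2v^{2}$, $(v+2v^{2})^{2}=v+2v^{2}$, and that the three are pairwise orthogonal and sum to $1$; this orthogonal-idempotent structure is exactly what makes the decomposition of $C$ well-defined in the first place.

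For the cardinality, I would invoke Theorem 8, which gives $|C|=|C_{1}|\,|C_{2}|\,|C_{3}|$, combined with $|C_{i}|=3^{n-\deg f_{i}}$ from the field case; multiplying yields $|C|=3^{(n-\deg f_{1})+(n-\deg f_{2})+(n-\deg f_{3})}=3^{3n-(\deg f_{1}+\deg f_{2}+\deg f_{3})}$. Alternatively one can read this off directly from Corollary 10, which records precisely this count, so the cardinality claim requires essentially no new work.

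The main obstacle, such as it is, is purely bookkeeping: confirming that the three coefficient elements act as orthogonal idempotents and that multiplication by them projects $R^{n}$ onto the claimed ternary components in the way the decomposition $C=(1+2v^{2})C_{1}\oplus(2v+2v^{2})C_{2}\oplus(v+2v^{2})C_{3}$ demands. Once that is in hand, both the ideal-generation statement and the cardinality formula follow by the routine arguments above; there is no genuine analytic or combinatorial difficulty, so I would keep the computations terse and lean on Theorem 8 and Corollary 10 rather than re-deriving the counts.
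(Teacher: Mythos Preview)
Your proposal is correct and is exactly the argument the paper implicitly relies on; the paper in fact states this proposition without proof, treating it as an immediate consequence of the idempotent decomposition, Proposition 15/16 (your ``Proposition 13/14''), Theorem 8, and Corollary 9 (your ``Corollary 10''). Apart from these off-by-two label references, there is nothing to add: your two-sided inclusion argument for the generators and the cardinality count via $|C|=|C_{1}||C_{2}||C_{3}|$ are the intended routine steps.
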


\begin{proposition}
Suppose $C$ is a cyclic (negacyclic) code of length $n$ over $R$, then there
is a unique polynomial $f\left( x\right) $ such that $C=\left\langle f\left(
x\right) \right\rangle $ and $f\left( x\right) \mid x^{n}-1$ ($f\left(
x\right) \mid x^{n}+1$) where $f\left( x\right) =(1+2v^{2})f_{1}(x)+\left(
2v+2v^{2}\right) f_{2}(x)+\left( v+2v^{2}\right) f_{3}(x).$
\end{proposition}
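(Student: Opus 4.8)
The plan is to assemble $f$ from the three monic generator polynomials of the ternary codes $C_{1},C_{2},C_{3}$ and to exploit the fact that the coefficients $e_{1}=1+2v^{2}$, $e_{2}=2v+2v^{2}$, $e_{3}=v+2v^{2}$ are pairwise orthogonal idempotents of $R$ with $e_{1}+e_{2}+e_{3}=1$. This is checked by a direct computation in $R$ (using $v^{3}=v$, hence $v^{4}=v^{2}$): $e_{i}^{2}=e_{i}$, $e_{i}e_{j}=0$ for $i\neq j$, and $e_{1}+e_{2}+e_{3}=1+3v+6v^{2}=1$; it can also be read off from $\phi(e_{1})=(1,0,0)$, $\phi(e_{2})=(0,1,0)$, $\phi(e_{3})=(0,0,1)$ together with the fact that $\phi$ is multiplicative. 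The same relations then hold in $R[x]/\langle x^{n}-1\rangle$ (resp.\ $R[x]/\langle x^{n}+1\rangle$).

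First I would invoke the preceding proposition, which gives $C=\langle (1+2v^{2})f_{1},(2v+2v^{2})f_{2},(v+2v^{2})f_{3}\rangle$, where $f_{i}$ is the unique monic generator polynomial of the ternary cyclic (negacyclic) code $C_{i}$ dividing $x^{n}-1$ (resp.\ $x^{n}+1$). Put $f(x)=e_{1}f_{1}(x)+e_{2}f_{2}(x)+e_{3}f_{3}(x)$. Multiplying by $e_{i}$ and using orthogonality gives $e_{i}f=e_{i}^{2}f_{i}=e_{i}f_{i}$, so each of the three generators of $C$ lies in $\langle f\rangle$, whence $C\subseteq\langle f\rangle$; since $f=\sum_{i}e_{i}f_{i}\in C$ by the displayed description of $C$, the reverse inclusion is immediate, so $C=\langle f\rangle$.

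Next I would show $f\mid x^{n}-1$ in $R[x]$. Writing $x^{n}-1=f_{i}(x)g_{i}(x)$ in $Z_{3}[x]$ for $i=1,2,3$ and setting $g(x)=e_{1}g_{1}(x)+e_{2}g_{2}(x)+e_{3}g_{3}(x)$, the orthogonality relations yield
\[
f(x)g(x)=\sum_{i=1}^{3}e_{i}f_{i}(x)g_{i}(x)=\Bigl(\sum_{i=1}^{3}e_{i}\Bigr)(x^{n}-1)=x^{n}-1 ,
\]
so $f\mid x^{n}-1$; the negacyclic case is identical with $x^{n}+1$ replacing $x^{n}-1$.

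For uniqueness I would argue that the triple $(C_{1},C_{2},C_{3})$ is canonically attached to $C$: these are precisely the codes defined at the end of Section~3, equivalently the images of $C$ under the three projections $c\mapsto e_{i}c$, so they are determined by $C$, and each ternary cyclic (negacyclic) code $C_{i}$ has a unique monic generator polynomial dividing $x^{n}-1$ (resp.\ $x^{n}+1$) by the classical theory of cyclic codes over a field. Hence $(f_{1},f_{2},f_{3})$, and therefore $f=\sum_{i}e_{i}f_{i}$, is unique; equivalently, if $f'=\sum_{i}e_{i}f_{i}'$ has the prescribed shape with $\langle f'\rangle=C$ and $f'\mid x^{n}-1$, then $e_{i}f'=e_{i}f_{i}'$ generates $e_{i}C_{i}$ and $f_{i}'\mid x^{n}-1$, forcing $f_{i}'=f_{i}$. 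I expect the only mildly delicate point to be the bookkeeping with the idempotents $e_{i}$ — in particular being explicit about in which ring each divisibility is asserted — while the rest is a routine transfer of the field case through the decomposition $R[x]/\langle x^{n}-1\rangle\cong\bigoplus_{i=1}^{3}Z_{3}[x]/\langle x^{n}-1\rangle$.
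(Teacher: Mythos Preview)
Your argument is correct and is exactly the approach the paper has in mind: it relies on the idempotent decomposition $R=e_{1}Z_{3}\oplus e_{2}Z_{3}\oplus e_{3}Z_{3}$ underlying Proposition~17, pushes the three ternary generators $f_{1},f_{2},f_{3}$ together via $f=\sum_{i}e_{i}f_{i}$, and reads off both $C=\langle f\rangle$ and $f\mid x^{n}\mp 1$ from the orthogonality relations $e_{i}e_{j}=\delta_{ij}e_{i}$ and $\sum_{i}e_{i}=1$. The paper itself states this proposition without proof (it is listed among Propositions~17--20 as structural facts preceding Lemma~21 and Theorem~22), so there is nothing to compare against; your write-up supplies precisely the omitted details, including the one point worth making explicit, namely that uniqueness is uniqueness among $f$ of the stated form with each $f_{i}\in Z_{3}[x]$ the \emph{monic} generator of $C_{i}$, since without such a normalization any unit multiple in $R$ would also generate $C$ and divide $x^{n}\mp 1$.
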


\begin{proposition}
Let $C$ be a linear code of length $n$ over $R$, then $C^{\bot
}=(1+2v^{2})C_{1}^{\bot }\oplus \left( 2v+2v^{2}\right) C_{2}^{\bot }\oplus
\left( v+2v^{2}\right) C_{3}^{\bot }.$ Furthermore, $C$ is self-dual code
iff $C_{1},C_{2}$ and $C_{3}$ are self-dual codes over $Z_{3}.$
\end{proposition}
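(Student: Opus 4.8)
The plan is to exploit the orthogonal idempotents underlying the decomposition $C=(1+2v^{2})C_{1}\oplus(2v+2v^{2})C_{2}\oplus(v+2v^{2})C_{3}$. Put $e_{1}=1+2v^{2}$, $e_{2}=2v+2v^{2}$, $e_{3}=v+2v^{2}$. First I would record the routine identities $e_{i}^{2}=e_{i}$, $e_{i}e_{j}=0$ for $i\neq j$, $e_{1}+e_{2}+e_{3}=1$, together with $e_{1}v=e_{1}v^{2}=0$, $e_{2}v=e_{2}v^{2}=e_{2}$, $e_{3}v=2e_{3}$, $e_{3}v^{2}=e_{3}$. These give the internal direct sum $R=Re_{1}\oplus Re_{2}\oplus Re_{3}$ with $Re_{i}=Z_{3}e_{i}\cong Z_{3}$ (each $Re_{i}=\{0,e_{i},2e_{i}\}$), and they show that for a single coordinate $a+vb+v^{2}c\in R$ one has $e_{1}(a+vb+v^{2}c)=ae_{1}$, $e_{2}(a+vb+v^{2}c)=(a+b+c)e_{2}$, $e_{3}(a+vb+v^{2}c)=(a+2b+c)e_{3}$. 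Hence every $x\in R^{n}$ is uniquely $x=e_{1}x^{(1)}+e_{2}x^{(2)}+e_{3}x^{(3)}$ with $x^{(i)}\in Z_{3}^{n}$, and under this identification $C_{1},C_{2},C_{3}$ are precisely the three coordinate projections of $C$, while $C=\{e_{1}a+e_{2}b+e_{3}c:a\in C_{1},\ b\in C_{2},\ c\in C_{3}\}$.

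The key step is a decoupling of the inner product. For $x=e_{1}x^{(1)}+e_{2}x^{(2)}+e_{3}x^{(3)}$ and $y=e_{1}y^{(1)}+e_{2}y^{(2)}+e_{3}y^{(3)}$, orthogonality of the $e_{i}$ gives $x\cdot y=e_{1}(x^{(1)}\cdot y^{(1)})+e_{2}(x^{(2)}\cdot y^{(2)})+e_{3}(x^{(3)}\cdot y^{(3)})$, and since $R=Re_{1}\oplus Re_{2}\oplus Re_{3}$ with each $Re_{i}\cong Z_{3}$, this element of $R$ vanishes if and only if each ternary inner product $x^{(i)}\cdot y^{(i)}$ vanishes in $Z_{3}$. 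I expect this small observation to be the only real content of the proof; everything after it is bookkeeping.

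For $C^{\perp}\supseteq e_{1}C_{1}^{\perp}\oplus e_{2}C_{2}^{\perp}\oplus e_{3}C_{3}^{\perp}$: if $x^{(i)}\in C_{i}^{\perp}$ for $i=1,2,3$, then for every $y\in C$ we have $y^{(i)}\in C_{i}$, so all three ternary inner products are $0$ and hence $x\cdot y=0$. For the reverse inclusion, take $x\in C^{\perp}$ and write $x=e_{1}x^{(1)}+e_{2}x^{(2)}+e_{3}x^{(3)}$; given $a\in C_{1}$, the word $e_{1}a$ lies in $C$ (take the $C_{2}$- and $C_{3}$-parts equal to $0$), so $0=x\cdot(e_{1}a)=e_{1}(x^{(1)}\cdot a)$, which forces $x^{(1)}\cdot a=0$ in $Z_{3}$; thus $x^{(1)}\in C_{1}^{\perp}$, and symmetrically $x^{(2)}\in C_{2}^{\perp}$, $x^{(3)}\in C_{3}^{\perp}$. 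This proves the displayed formula. (Alternatively it follows from Theorem 8 and its Corollary via $\phi(C^{\perp})=\phi(C)^{\perp}=(C_{1}\otimes C_{2}\otimes C_{3})^{\perp}=C_{1}^{\perp}\otimes C_{2}^{\perp}\otimes C_{3}^{\perp}$, but the idempotent argument is self-contained.)

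For the self-dual statement: if $C_{i}=C_{i}^{\perp}$ for all $i$, the formula just proved gives $C^{\perp}=e_{1}C_{1}\oplus e_{2}C_{2}\oplus e_{3}C_{3}=C$. Conversely, if $C=C^{\perp}$, multiply both sides by $e_{1}$; this kills the $e_{2}$- and $e_{3}$-components and yields $\{e_{1}a:a\in C_{1}\}=\{e_{1}u:u\in C_{1}^{\perp}\}$, and since $t\mapsto e_{1}t$ is injective on $Z_{3}^{n}$ this forces $C_{1}=C_{1}^{\perp}$; likewise $C_{2}=C_{2}^{\perp}$ and $C_{3}=C_{3}^{\perp}$. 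The only point that genuinely needs care is the first one — checking that the $e_{i}$ form a complete set of orthogonal idempotents with $Re_{i}\cong Z_{3}$, so that the inner product really splits into three independent ternary pieces; once that is established, both inclusions and the self-duality equivalence drop out.
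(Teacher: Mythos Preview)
Your argument is correct. The paper actually states this proposition without proof, so there is nothing to compare against; your idempotent approach supplies the missing justification. The key computations you outline --- that $e_{1}=1+2v^{2}$, $e_{2}=2v+2v^{2}$, $e_{3}=v+2v^{2}$ are pairwise orthogonal idempotents summing to $1$ with $Re_{i}=\{0,e_{i},2e_{i}\}\cong Z_{3}$, and that $e_{i}(a+vb+v^{2}c)$ recovers exactly the $i$-th component of the Gray map --- are all easily verified in $R$ using $v^{3}=v$, and they make the inner-product splitting $x\cdot y=\sum_{i}e_{i}(x^{(i)}\cdot y^{(i)})$ immediate. Both inclusions for $C^{\perp}$ and the self-dual equivalence then follow as you say. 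The alternative route you mention via $\phi(C^{\perp})=\phi(C)^{\perp}$ is also in the spirit of the paper (it records $\phi(C)^{\perp}=\phi(C^{\perp})$ just after Theorem~3), but your direct idempotent argument is cleaner and self-contained.
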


\begin{proposition}
If \ $C=(1+2v^{2})C_{1}\oplus \left( 2v+2v^{2}\right) C_{2}\oplus \left(
v+2v^{2}\right) C_{3}$ is a cyclic (negacyclic) code of length $n$ over $R$.
Then 
\begin{equation*}
C^{\perp }=\left\langle (1+2v^{2})h_{1}^{\ast }+\left( 2v+2v^{2}\right)
h_{2}^{\ast }+\left( v+2v^{2}\right) h_{3}^{\ast }\right\rangle 
\end{equation*}%
and $\left\vert C^{\perp }\right\vert =3^{\deg f_{1}+\deg f_{2}+\deg f_{3}}$
where for $i=1,2,3$, $h_{i}^{\ast }$ are the reciprocal polynomials of $h_{i}
$ i.e., $h_{i}\left( x\right) =\left( x^{n}-1\right) /f_{i}\left( x\right) ,$%
($h_{i}\left( x\right) =\left( x^{n}+1\right) /f_{i}\left( x\right) $), $%
h_{i}^{\ast }\left( x\right) =x^{\deg h_{i}}h_{i}\left( x^{-1}\right) $ for $%
i=1,2,3$.
\end{proposition}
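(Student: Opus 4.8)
The plan is to reduce everything to the component codes $C_{1},C_{2},C_{3}$ over the field $Z_{3}$ and then reassemble via the three orthogonal idempotents $e_{1}=1+2v^{2}$, $e_{2}=2v+2v^{2}$, $e_{3}=v+2v^{2}$ of $R$, which satisfy $e_{i}^{2}=e_{i}$, $e_{i}e_{j}=0$ for $i\neq j$, and $e_{1}+e_{2}+e_{3}=1$ (a routine check using $v^{3}=v$). First I would note that, since $C$ is cyclic (negacyclic) over $R$, the preceding proposition characterizing cyclic (negacyclic) codes componentwise gives that each $C_{i}$ is cyclic (negacyclic) over $Z_{3}$, and the proposition on generator polynomials gives $C_{i}=\langle f_{i}\rangle$ with $f_{i}\mid x^{n}-1$ (resp. $f_{i}\mid x^{n}+1$). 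Put $h_{i}=(x^{n}-1)/f_{i}$ (resp. $h_{i}=(x^{n}+1)/f_{i}$) and $h_{i}^{\ast }(x)=x^{\deg h_{i}}h_{i}(x^{-1})$.

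Next I would invoke the classical duality theory of cyclic (negacyclic) codes over a finite field: since $Z_{3}$ is a field, $C_{i}^{\perp }=\langle h_{i}^{\ast }\rangle$ in $Z_{3}[x]/\langle x^{n}\mp 1\rangle$, with $\deg h_{i}^{\ast }=\deg h_{i}=n-\deg f_{i}$ and $h_{i}^{\ast }\mid x^{n}\mp 1$, whence $|C_{i}^{\perp }|=3^{\,n-\deg h_{i}^{\ast }}=3^{\deg f_{i}}$. In particular each $C_{i}^{\perp }$ is again cyclic (negacyclic) over $Z_{3}$, so applying the componentwise characterization in the reverse direction shows that $C^{\perp }$ is itself a cyclic (negacyclic) code over $R$, which legitimizes describing it through its components.

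Then I would combine these facts with the proposition $C^{\perp }=e_{1}C_{1}^{\perp }\oplus e_{2}C_{2}^{\perp }\oplus e_{3}C_{3}^{\perp }$ and with the proposition expressing a cyclic (negacyclic) code over $R$ through its component generators, obtaining $C^{\perp }=\langle e_{1}h_{1}^{\ast },\,e_{2}h_{2}^{\ast },\,e_{3}h_{3}^{\ast }\rangle$ in $R[x]/\langle x^{n}\mp 1\rangle$. To pass to a single generator I would show $\langle e_{1}h_{1}^{\ast },e_{2}h_{2}^{\ast },e_{3}h_{3}^{\ast }\rangle =\langle e_{1}h_{1}^{\ast }+e_{2}h_{2}^{\ast }+e_{3}h_{3}^{\ast }\rangle$: the inclusion $\supseteq$ is immediate, and for $\subseteq$ one multiplies $g:=e_{1}h_{1}^{\ast }+e_{2}h_{2}^{\ast }+e_{3}h_{3}^{\ast }$ by $e_{j}$ and uses $e_{j}^{2}=e_{j}$, $e_{i}e_{j}=0$ ($i\neq j$) to recover $e_{j}h_{j}^{\ast }=e_{j}g\in \langle g\rangle$. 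This gives the asserted generator. For the order, $|C^{\perp }|=|C_{1}^{\perp }|\,|C_{2}^{\perp }|\,|C_{3}^{\perp }|=3^{\deg f_{1}+\deg f_{2}+\deg f_{3}}$ by the direct-sum decomposition (equivalently, $|C^{\perp }|=|R|^{n}/|C|=3^{3n}/3^{\,3n-(\deg f_{1}+\deg f_{2}+\deg f_{3})}$).

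The main obstacle is not a single hard computation but keeping normalizations consistent: one must make sure that the reciprocal polynomials $h_{i}^{\ast }$, defined only up to a unit of $Z_{3}$, are genuinely generator polynomials of the ideals $C_{i}^{\perp }$, and that the idempotent bookkeeping in the reduction $\langle e_{1}h_{1}^{\ast },e_{2}h_{2}^{\ast },e_{3}h_{3}^{\ast }\rangle =\langle e_{1}h_{1}^{\ast }+e_{2}h_{2}^{\ast }+e_{3}h_{3}^{\ast }\rangle$ is carried out carefully in the quotient ring $R[x]/\langle x^{n}\mp 1\rangle$; this last step is exactly where the semilocal, non-chain structure of $R$ — the presence of the three orthogonal idempotents $e_{1},e_{2},e_{3}$ — is really used, and it is the place most likely to need a careful argument rather than a one-line citation.
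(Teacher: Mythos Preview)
Your proposal is correct and follows exactly the route the paper sets up: the paper states this proposition without an explicit proof, relying on the chain of preceding results (Propositions 15--19) that you invoke --- the componentwise characterization of cyclic/negacyclic codes, the single-generator form $f=(1+2v^{2})f_{1}+(2v+2v^{2})f_{2}+(v+2v^{2})f_{3}$, and the decomposition $C^{\perp}=e_{1}C_{1}^{\perp}\oplus e_{2}C_{2}^{\perp}\oplus e_{3}C_{3}^{\perp}$. Your argument supplies the details the paper omits, in particular the idempotent step $\langle e_{1}h_{1}^{\ast},e_{2}h_{2}^{\ast},e_{3}h_{3}^{\ast}\rangle=\langle e_{1}h_{1}^{\ast}+e_{2}h_{2}^{\ast}+e_{3}h_{3}^{\ast}\rangle$, which is precisely the content of Proposition~18 applied to $C^{\perp}$.
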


\begin{lemma}
A ternary linear cyclic (negacyclic) code $C$ with generator polynomial $%
f\left( x\right) $ contains its dual code iff
\end{lemma}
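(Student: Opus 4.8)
The plan is to translate the containment $C^{\perp}\subseteq C$ into a divisibility condition on the generator polynomial $f(x)$, exactly as in the classical cyclic-code situation over a field. Recall that for a ternary cyclic code $C=\langle f(x)\rangle$ of length $n$ with $f(x)\mid x^n-1$, the dual $C^{\perp}$ is the cyclic code generated by the reciprocal polynomial $h^{\ast}(x)$, where $h(x)=(x^n-1)/f(x)$ and $h^{\ast}(x)=x^{\deg h}h(x^{-1})$. The code $C^{\perp}$ is therefore contained in $C$ precisely when $f(x)$ divides $h^{\ast}(x)$ in $Z_3[x]/\langle x^n-1\rangle$, equivalently (lifting to $Z_3[x]$ and using $f\mid x^n-1$) when $f(x)\mid h^{\ast}(x)$ as polynomials. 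So the asserted criterion should read: $C^{\perp}\subseteq C$ if and only if $f(x)\mid h^{\ast}(x)$, i.e.\ $x^n-1\equiv 0 \pmod{f(x)\,f^{\ast}(x)}$ — written in the negacyclic case with $x^n+1$ in place of $x^n-1$.

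First I would set $h(x)=(x^n-1)/f(x)$, so that $\deg f+\deg h=n$ and $C^{\perp}=\langle h^{\ast}(x)\rangle$ with $|C^{\perp}|=3^{\deg f}$. Next I would recall the standard fact that the cyclic code generated by a divisor $g$ of $x^n-1$ is contained in the cyclic code generated by another divisor $f$ of $x^n-1$ if and only if $f\mid g$; this is immediate because $\langle g\rangle\subseteq\langle f\rangle$ forces $g\in\langle f\rangle$, hence $f\mid g$ in the quotient ring, and since both divide $x^n-1$ the divisibility holds already in $Z_3[x]$. Applying this with $g=h^{\ast}$ gives $C^{\perp}\subseteq C \iff f(x)\mid h^{\ast}(x)$. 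Finally I would rewrite $f\mid h^{\ast}$ in the symmetric form stated in the lemma: since $h^{\ast}(x)=x^{\deg h}h(x^{-1})$ and $f^{\ast}(x)=x^{\deg f}f(x^{-1})$, and $f(x^{-1})h(x^{-1})=x^{-n}(x^n-1)\cdot(-1)$ up to a unit, the condition $f\mid h^{\ast}$ is equivalent to $f(x)f^{\ast}(x)\mid x^n-1$ (respectively $\mid x^n+1$ in the negacyclic case). The negacyclic case is handled verbatim with $x^n+1$ replacing $x^n-1$ throughout, using that $\nu$-closed ideals of $Z_3[x]/\langle x^n+1\rangle$ are generated by divisors of $x^n+1$ and that the dual of a negacyclic code is negacyclic with generator the reciprocal of the check polynomial.

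The only real subtlety, and the step I expect to need the most care, is the passage between divisibility in the quotient ring $Z_3[x]/\langle x^n-1\rangle$ and divisibility in $Z_3[x]$ itself, together with the bookkeeping of reciprocal polynomials and their leading coefficients (a nonzero constant in $Z_3$, hence a unit, so it never obstructs divisibility). One must check that $f^{\ast}$ is well defined up to this unit and that $\gcd$-type manipulations are legitimate — this is routine over the field $Z_3$ since $Z_3[x]$ is a PID, and $\gcd(x^n-1,\,\cdot\,)$ behaves well as long as $n$ is coprime to $3$ (so $x^n-1$ is squarefree); if $n$ is allowed to be divisible by $3$ one instead argues directly with the ideal containment $\langle h^{\ast}\rangle\subseteq\langle f\rangle$ in the quotient, which still gives $f\mid h^{\ast}$ because $h^{\ast}\in\langle f\rangle$ modulo $x^n-1$ and $\deg h^{\ast}<n$. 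Everything else is a direct transcription of the classical dual-containment criterion for cyclic codes over a finite field to the ternary setting.
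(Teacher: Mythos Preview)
Your argument is correct and is the standard proof of this classical fact about cyclic (negacyclic) codes over a finite field. Note, however, that the paper does not actually prove this lemma at all: it is stated without proof as a known result and then immediately used in the subsequent theorem. So there is nothing to compare against beyond observing that your sketch supplies the missing justification that the paper omits.
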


\begin{equation*}
x^{n}-1\equiv 0\left( \func{mod}ff^{\ast }\right) ,\text{ \ \ \ \ \ (}%
x^{n}+1\equiv 0\left( \func{mod}ff^{\ast }\right) \text{)}
\end{equation*}

where $f^{\ast }$ is the reciprocal polynomial of $f$.

\begin{theorem}
Let $C=\left\langle (1+2v^{2})f_{1},\left( 2v+2v^{2}\right) f_{2},\left(
v+2v^{2}\right) f_{3}\right\rangle $ be a cyclic (negacyclic ) code of
length $n$ over $R$. Then $C^{\perp }\subseteq C$ iff $x^{n}-1\equiv 0\left( 
\func{mod}f_{i}f_{i}^{\ast }\right) $ ($x^{n}+1\equiv 0\left( \func{mod}%
f_{i}f_{i}^{\ast }\right) $) for $i=1,2,3.$
\end{theorem}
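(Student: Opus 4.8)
The plan is to reduce the statement about the code $C$ over $R$ to three independent statements about the ternary codes $C_1, C_2, C_3$, and then invoke the ternary criterion (Lemma, the one just before this theorem). First I would recall from Proposition 18 (the duality decomposition) that if $C = (1+2v^2)C_1 \oplus (2v+2v^2)C_2 \oplus (v+2v^2)C_3$, then $C^\perp = (1+2v^2)C_1^\perp \oplus (2v+2v^2)C_2^\perp \oplus (v+2v^2)C_3^\perp$. The key algebraic fact to verify is that the three ring elements $e_1 = 1+2v^2$, $e_2 = 2v+2v^2$, $e_3 = v+2v^2$ form a complete set of orthogonal idempotents of $R$ (one checks $e_i^2 = e_i$, $e_ie_j = 0$ for $i\ne j$, and $e_1+e_2+e_3 = 1$ in $Z_3[v]/\langle v^3-v\rangle$). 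Given that, containment of $R$-submodules decomposes coordinatewise: $C^\perp \subseteq C$ if and only if $e_i C^\perp \subseteq e_i C$ for each $i$, i.e. $C_i^\perp \subseteq C_i$ for $i=1,2,3$.

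Next I would apply the preceding Lemma to each $C_i$ separately. Since $C = \langle (1+2v^2)f_1, (2v+2v^2)f_2, (v+2v^2)f_3\rangle$ is cyclic (negacyclic) over $R$, Propositions 16–17 tell us $C_1, C_2, C_3$ are ternary cyclic (negacyclic) codes with generator polynomials $f_1, f_2, f_3$ respectively. The Lemma then says $C_i^\perp \subseteq C_i$ holds if and only if $x^n - 1 \equiv 0 \pmod{f_i f_i^\ast}$ (respectively $x^n+1 \equiv 0 \pmod{f_i f_i^\ast}$). Combining this equivalence for each $i$ with the idempotent decomposition from the first paragraph yields exactly the claimed statement: $C^\perp \subseteq C$ iff $x^n-1 \equiv 0 \pmod{f_i f_i^\ast}$ (resp. $x^n+1$) for all $i=1,2,3$.

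The main obstacle — really the only non-bookkeeping point — is justifying that submodule containment over $R$ really does split across the idempotent components, i.e. that $C^\perp \subseteq C \iff C_i^\perp \subseteq C_i$ for each $i$. The forward direction is the delicate one: from $C^\perp \subseteq C$ one wants to extract $C_i^\perp \subseteq C_i$, and here I would multiply a codeword of $C^\perp$ (written in its $(1+2v^2)a \oplus (2v+2v^2)b \oplus (v+2v^2)c$ form, where $a \in C_1^\perp$ etc.) by the idempotent $e_i$ and use that $e_i \cdot C = e_i C_i$ together with the fact that multiplication by $e_i$ acts as (a unit multiple of) the identity on the $i$-th ternary component. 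The reverse direction is immediate since $C = e_1C_1 \oplus e_2C_2 \oplus e_3C_3$ and likewise for $C^\perp$, so componentwise containment assembles into $C^\perp \subseteq C$. Everything else — identifying the $C_i$ as cyclic/negacyclic with generators $f_i$, and the mod-$f_if_i^\ast$ criterion — is already available from the cited Propositions and the Lemma, so the write-up is short once the idempotent splitting is in hand.
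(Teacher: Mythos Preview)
Your proposal is correct and follows essentially the same route as the paper: both arguments use the orthogonal-idempotent decomposition of $C$ and $C^{\perp}$ to reduce the containment $C^{\perp}\subseteq C$ to the three ternary containments $C_i^{\perp}\subseteq C_i$, and then invoke the preceding Lemma on each component. The paper's proof is terser (it simply says ``by thinking $\bmod\,(1+2v^2)$'' etc.\ where you spell out the idempotent projection), but the underlying mechanism is identical.
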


\begin{proof}
Let $x^{n}-1\equiv 0\left( \func{mod}f_{i}f_{i}^{\ast }\right) $ ($%
x^{n}+1\equiv 0\left( \func{mod}f_{i}f_{i}^{\ast }\right) $) for $i=1,2,3.$
Then $C_{1}^{\perp }\subseteq C_{1},C_{2}^{\perp }\subseteq
C_{2},C_{3}^{\perp }\subseteq C_{3}.$ By using $(1+2v^{2})C_{1}^{\perp
}\subseteq (1+2v^{2})C_{1},$ $\left( 2v+2v^{2}\right) C_{2}^{\perp
}\subseteq \left( 2v+2v^{2}\right) C_{2},$ $\left( v+2v^{2}\right)
C_{3}^{\perp }\subseteq \left( v+2v^{2}\right) C_{3}$ $.$ We have $%
(1+2v^{2})C_{1}^{\perp }\oplus \left( 2v+2v^{2}\right) C_{2}^{\perp }\oplus
\left( v+2v^{2}\right) C_{3}^{\perp }$\newline
$\subseteq (1+2v^{2})C_{1}\oplus \left( 2v+2v^{2}\right) C_{2}\oplus \left(
v+2v^{2}\right) C_{3}.$ So, $<(1+2v^{2})h_{1}^{\ast }+\left(
2v+2v^{2}\right) h_{2}^{\ast }+\left( v+2v^{2}\right) h_{3}^{\ast
}>\subseteq $ $<(1+2v^{2})f_{1},\left( 2v+2v^{2}\right) f_{2},\left(
v+2v^{2}\right) f_{3}>.$ That is $C^{\perp }\subseteq C$.

Conversely, if $C^{\perp }\subseteq C$, then $(1+2v^{2})C_{1}^{\perp }\oplus
\left( 2v+2v^{2}\right) C_{2}^{\perp }\oplus \left( v+2v^{2}\right)
C_{3}^{\perp }\subseteq (1+2v^{2})C_{1}\oplus \left( 2v+2v^{2}\right)
C_{2}\oplus \left( v+2v^{2}\right) C_{3}.$ By thinking $\func{mod}(1+2v^{2}),%
\func{mod}\left( 2v+2v^{2}\right) $ and $\func{mod}\left( v+2v^{2}\right) $
respectively we have $C_{i}^{\perp }\subseteq C_{i}$ for $i=1,2,3$.
Therefore, $x^{n}-1\equiv 0\left( \func{mod}f_{i}f_{i}^{\ast }\right) $ ($%
x^{n}+1\equiv 0\left( \func{mod}f_{i}f_{i}^{\ast }\right) $) for $i=1,2,3.$
\end{proof}

\begin{corollary}
$C=(1+2v^{2})C_{1}\oplus \left( 2v+2v^{2}\right) C_{2}\oplus \left(
v+2v^{2}\right) C_{3}$ is a cyclic (negacyclic) code of length $n$ over $R$.
Then $C^{\perp }\subseteq C$ iff $C_{i}^{\perp }\subseteq C_{i}$ for $%
i=1,2,3 $.
\end{corollary}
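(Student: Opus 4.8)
The plan is to derive this corollary immediately from the Theorem just established together with the preceding Lemma. By that Lemma, for each $i\in\{1,2,3\}$ the ternary cyclic (negacyclic) code $C_{i}$ with generator polynomial $f_{i}$ satisfies $C_{i}^{\perp}\subseteq C_{i}$ if and only if $x^{n}-1\equiv 0\,(\func{mod}\,f_{i}f_{i}^{\ast})$ (respectively $x^{n}+1\equiv 0\,(\func{mod}\,f_{i}f_{i}^{\ast})$ in the negacyclic case). Consequently the statement ``$C_{i}^{\perp}\subseteq C_{i}$ for $i=1,2,3$'' is equivalent to ``$x^{n}-1\equiv 0\,(\func{mod}\,f_{i}f_{i}^{\ast})$ for $i=1,2,3$'' (resp. the $x^{n}+1$ version).

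On the other hand, the Theorem immediately preceding this corollary asserts precisely that $C^{\perp}\subseteq C$ holds if and only if $x^{n}-1\equiv 0\,(\func{mod}\,f_{i}f_{i}^{\ast})$ for $i=1,2,3$ (resp. $x^{n}+1\equiv 0\,(\func{mod}\,f_{i}f_{i}^{\ast})$). Chaining the two equivalences through the common polynomial condition yields $C^{\perp}\subseteq C\iff C_{i}^{\perp}\subseteq C_{i}$ for $i=1,2,3$, which is the claim. Here one writes $C=\left\langle (1+2v^{2})f_{1},(2v+2v^{2})f_{2},(v+2v^{2})f_{3}\right\rangle$ via the Proposition that gives the generator form of a cyclic (negacyclic) code over $R$ in terms of the generator polynomials $f_{1},f_{2},f_{3}$ of $C_{1},C_{2},C_{3}$, so that the hypotheses of the Theorem apply verbatim.

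Alternatively, and even more directly, I would observe that the proof of the Theorem already contains this equivalence as its backbone: the forward direction passes from the polynomial congruences to $C_{i}^{\perp}\subseteq C_{i}$ and then to $C^{\perp}\subseteq C$ using the decomposition $C^{\perp}=(1+2v^{2})C_{1}^{\perp}\oplus(2v+2v^{2})C_{2}^{\perp}\oplus(v+2v^{2})C_{3}^{\perp}$, while the converse reduces $C^{\perp}\subseteq C$ modulo the three orthogonal idempotent-type elements $1+2v^{2},\,2v+2v^{2},\,v+2v^{2}$ to recover $C_{i}^{\perp}\subseteq C_{i}$. Extracting that intermediate double implication, which uses the earlier Proposition on the dual of a linear code over $R$, is all that is needed; no new computation is required.

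There is no genuine obstacle here. The only point demanding a little care is that the ambient decomposition $R^{n}=(1+2v^{2})Z_{3}^{n}\oplus(2v+2v^{2})Z_{3}^{n}\oplus(v+2v^{2})Z_{3}^{n}$ splits the inner product diagonally across the three components, so that orthogonality to all of $C$ is equivalent to componentwise orthogonality of each ternary part to the corresponding $C_{i}$; this is exactly the content of the Proposition expressing $C^{\perp}$ as the direct sum of the $(1+2v^{2})C_{1}^{\perp}$, $(2v+2v^{2})C_{2}^{\perp}$, $(v+2v^{2})C_{3}^{\perp}$, which may simply be cited.
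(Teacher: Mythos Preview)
Your proposal is correct and mirrors exactly what the paper intends: the corollary is stated without a separate proof because it follows immediately from chaining Lemma~21 with Theorem~22 through the common divisibility condition $x^{n}\mp 1\equiv 0\ (\func{mod}\ f_{i}f_{i}^{\ast})$, and, as you also note, this intermediate equivalence is already explicit inside the proof of Theorem~22 via Proposition~19. No additional argument is needed.
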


\begin{example}
Let $\ n=6,R=Z_{3}+vZ_{3}+v^{2}Z_{3},v^{3}=v.$ We have $%
x^{6}-1=(2x^{2}+2)(x^{2}+2)(2x^{2}+1)=f_{1}f_{2}f_{3}$ in $Z_{3}\left[ x%
\right] $. Hence,%
\begin{eqnarray*}
f_{1}^{\ast } &=&2x^{2}+2=f_{1} \\
f_{2}^{\ast } &=&2x^{2}+1=f_{3} \\
f_{3}^{\ast } &=&x^{2}+2=f_{2}
\end{eqnarray*}
\end{example}

Let $C=\left\langle (1+2v^{2})f_{2},\left( 2v+2v^{2}\right) f_{2},\left(
v+2v^{2}\right) f_{3}\right\rangle .$ Obviously $x^{6}-1$ is divisibly by $%
f_{i}f_{i}^{\ast }$ for $i=2,3$. Thus we have $C^{\perp }\subseteq C.$

\begin{example}
Let $\ n=10,R=Z_{3}+vZ_{3}+v^{2}Z_{3},v^{3}=v.$ We have $x^{10}+1=\left(
x^{2}+1\right) \left( x^{4}+x^{3}+2x+1\right) \left( x^{4}+2x^{3}+x+1\right)
=g_{1}g_{2}g_{3}$ in $Z_{3}\left[ x\right] $. Hence,%
\begin{eqnarray*}
g_{1}^{\ast } &=&x^{2}+1=g_{1} \\
g_{2}^{\ast } &=&x^{4}+2x^{3}+x+1=g_{3} \\
g_{3}^{\ast } &=&x^{4}+x^{3}+2x+1=g_{2}
\end{eqnarray*}
\end{example}

Let $C=\left\langle (1+2v^{2})g_{2},\left( 2v+2v^{2}\right) g_{2},\left(
v+2v^{2}\right) g_{3}\right\rangle .$ Obviously $x^{10}+1$ is divisibly by $%
g_{i}g_{i}^{\ast }$ for $i=2,3$. Thus we have $C^{\perp }\subseteq C.$

\begin{theorem}
Let $C$ be linear code of length $n$ over $R$ with $\left\vert C\right\vert
=3^{3k_{1}+2k_{2}+k_{3}}$ and minimum distance $d$. Then $\phi \left(
C\right) $ is ternary linear $\left[ 3n,3k_{1}+2k_{2}+k_{3},d\right] $ code.
\end{theorem}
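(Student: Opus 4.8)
The plan is to read off all three parameters of $\phi(C)$ directly from facts already established, so that the statement becomes an immediate consequence of Theorems 1 and 2. First, recall from the proof of Theorem 2 that $\phi\colon R^{n}\to Z_{3}^{3n}$ is $Z_{3}$-linear and bijective. Hence $\phi(C)$ is a $Z_{3}$-linear subspace of $Z_{3}^{3n}$, i.e. a ternary linear code of length $3n$; this settles the length parameter with no further work.

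Next, for the dimension, I would use injectivity of $\phi$ to get $\left\vert \phi(C)\right\vert=\left\vert C\right\vert=3^{3k_{1}+2k_{2}+k_{3}}$. Since $\phi(C)$ is linear over the field $Z_{3}$, its cardinality equals $3^{\dim\phi(C)}$, and comparing exponents yields $\dim\phi(C)=3k_{1}+2k_{2}+k_{3}$. For the minimum distance, I would invoke Theorem 1: $\phi$ is an isometry from $(R^{n},\text{Lee weight})$ to $(Z_{3}^{3n},\text{Hamming weight})$, so $d_{H}(\phi(x),\phi(y))=d_{L}(x,y)$ for all $x,y\in R^{n}$. Applying this to pairs of distinct codewords of $C$ and taking the minimum gives $d_{H}(\phi(C))=d_{L}(C)=d$. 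Combining the three observations, $\phi(C)$ is a ternary linear $[3n,\,3k_{1}+2k_{2}+k_{3},\,d]$ code.

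There is essentially no hard step: the theorem is a bookkeeping corollary of Theorems 1 and 2, refining Theorem 2 by exhibiting the exact cardinality and hence the honest $Z_{3}$-dimension of the Gray image. The only points needing (minor) care are that "$d$" in the hypothesis must be read as the Lee distance of $C$, since that is the quantity the Gray map transports to Hamming distance, and the elementary fact that over a field the cardinality of a linear code determines its dimension.
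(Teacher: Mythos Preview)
Your proof is correct. The paper states this theorem without proof, evidently treating it as an immediate consequence of Theorems~1 and~2; your argument spells out precisely this deduction, so it matches the paper's intended (but unwritten) reasoning.
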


Using Theorem $14$ and Theorem $22$ we can construct quantum codes.

\begin{theorem}
Let $(1+2v^{2})C_{1}\oplus \left( 2v+2v^{2}\right) C_{2}\oplus \left(
v+2v^{2}\right) C_{3}$ be a cyclic (negacyclic) code of arbitrary length $n$
over $R$ with type $27^{k_{1}}9^{k_{2}}3^{k_{3}}.$ If $C_{i}^{\perp
}\subseteq C_{i}$ where $i=1,2,3$ then $C^{\perp }\subseteq C$ and there
exists a quantum error-correcting code with parameters $\left[ \left[
3n,3k_{1}+2k_{2}+k_{3}-3n,d_{L}\right] \right] $ where $d_{L}$ is the
minimum Lee weights of $C.$
\end{theorem}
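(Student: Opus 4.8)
The plan is to transfer the dual-containing property from $C$ down to its ternary Gray image $\phi \left( C\right) $ and then quote the CSS-type construction of Theorem $14$. First, from $C_{i}^{\perp }\subseteq C_{i}$ for $i=1,2,3$ one obtains $C^{\perp }\subseteq C$ at once: the earlier corollary already records the equivalence $C^{\perp }\subseteq C\Leftrightarrow C_{i}^{\perp }\subseteq C_{i}$, and alternatively, using the decomposition $C^{\perp }=(1+2v^{2})C_{1}^{\perp }\oplus \left( 2v+2v^{2}\right) C_{2}^{\perp }\oplus \left( v+2v^{2}\right) C_{3}^{\perp }$, one multiplies each inclusion $C_{i}^{\perp }\subseteq C_{i}$ by the corresponding orthogonal idempotent and sums. (That $C_{1},C_{2},C_{3}$ are cyclic, resp.\ negacyclic, so that the hypothesis is meaningful, follows from the propositions relating the $R$-structure of $C$ to the $Z_{3}$-structure of its components.)

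Second, I apply the Gray map. Since $\phi $ is $Z_{3}$-linear and injective and satisfies $\phi \left( C^{\perp }\right) =\phi \left( C\right) ^{\perp }$, the inclusion $C^{\perp }\subseteq C$ transports to $\phi \left( C\right) ^{\perp }=\phi \left( C^{\perp }\right) \subseteq \phi \left( C\right) $, so the ternary code $\phi \left( C\right) $ contains its dual. Its parameters are read off as follows: the type $27^{k_{1}}9^{k_{2}}3^{k_{3}}$ means $\left\vert C\right\vert =27^{k_{1}}9^{k_{2}}3^{k_{3}}=3^{3k_{1}+2k_{2}+k_{3}}$, so by the theorem computing the Gray image of a linear code of this cardinality, $\phi \left( C\right) $ is a ternary linear $\left[ 3n,3k_{1}+2k_{2}+k_{3},d_{H}\right] $ code, and since $\phi $ is a weight-preserving isometry we have $d_{H}=d_{L}$, the minimum Lee weight of $C$.

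Third, I invoke Theorem $14$ with $C_{1}:=\phi \left( C\right) $. Because $\phi \left( C\right) ^{\perp }\subseteq \phi \left( C\right) $, the ``in particular'' clause of that theorem, with ambient length $3n$ and dimension $3k_{1}+2k_{2}+k_{3}$, yields a quantum error-correcting code of length $3n$, dimension $3k_{1}+2k_{2}+k_{3}-3n$, and minimum distance $d_{L}$, which is exactly the claimed $\left[ \left[ 3n,3k_{1}+2k_{2}+k_{3}-3n,d_{L}\right] \right] $.

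Thus the proof is essentially an assembly of results already proved above; the two points that deserve genuine attention are that duality commutes with the Gray map, $\phi \left( C^{\perp }\right) =\phi \left( C\right) ^{\perp }$, and that $\phi $ is an isometry so that the ternary minimum distance equals $d_{L}\left( C\right) $ --- both already established. The remaining work is bookkeeping: turning the type $27^{k_{1}}9^{k_{2}}3^{k_{3}}$ into the exponent $3k_{1}+2k_{2}+k_{3}$, and carrying dual-containment through the three layers (componentwise over $Z_{3}$, then over $R$, then over $Z_{3}$ via $\phi $) before applying Theorem $14$.
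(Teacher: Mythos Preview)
Your proposal is correct and follows exactly the route the paper itself indicates: the paper gives no detailed proof but simply remarks, just before the statement, that the result follows ``Using Theorem~$14$ and Theorem~$22$.'' Your argument is precisely the natural expansion of that remark---pass from $C_{i}^{\perp }\subseteq C_{i}$ to $C^{\perp }\subseteq C$ via Corollary~$23$ (the consequence of Theorem~$22$), push the inclusion through $\phi $ using $\phi \left( C^{\perp }\right) =\phi \left( C\right) ^{\perp }$ and read off the parameters of $\phi \left( C\right) $ from Theorem~$26$, then apply the CSS construction of Theorem~$14$.
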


\begin{example}
Let $n=6.$ We have $x^{6}-1=(2x^{2}+2)(x^{2}+2)(2x^{2}+1)$ in $Z_{3}\left[ x%
\right] .$ Let $f_{1}\left( x\right) =f_{2}\left( x\right) =x^{2}+2$, $%
f_{3}=2x^{2}+1.$ Thus $C=<(1+2v^{2})f_{1},\left( 2v+2v^{2}\right) f_{2},$%
\newline
$\left( v+2v^{2}\right) f_{3}>$. $C$ is a linear cyclic code of length $6$.
\ The dual code $C^{\bot }=\left\langle (1+2v^{2})h_{1}^{\ast },\left(
2v+2v^{2}\right) h_{2}^{\ast },\left( v+2v^{2}\right) h_{3}^{\ast
}\right\rangle $ can be obtained of Proposition $20$.Clearly, $C^{\bot
}\subseteq C.$ Hence, we obtain a quantum code with parameters $\left[ \left[
18,6,2\right] \right] .$
\end{example}

\begin{example}
Let $n=8.$ We have $x^{8}-1=(x+1)(x+2)(x^{2}+1)(x^{2}+x+2)(x^{2}+2x+2)$ in $%
Z_{3}\left[ x\right] .$ Let $f_{1}\left( x\right) =f_{2}\left( x\right)
=f_{3}\left( x\right) =x^{2}+1.$ Thus $C=<(1+2v^{2})f_{1},\left(
2v+2v^{2}\right) f_{2},\left( v+2v^{2}\right) f_{3}>$. $C$ is a linear
cyclic code of length $8$. Hence, we obtain a quantum code with parameters $%
\left[ \left[ 24,12,2\right] \right] .$
\end{example}

\begin{example}
Let $n=12.$ We have $x^{12}-1=\left( x-1\right) ^{3}\left(
x^{3}+x^{2}+x+1\right) ^{3}$ in $Z_{3}\left[ x\right] .$ Let $f_{1}\left(
x\right) =f_{2}\left( x\right) =f_{3}\left( x\right) =x^{3}+x^{2}+x+1$.Thus $%
C=<(1+2v^{2})f_{1},\left( 2v+2v^{2}\right) f_{2},\left( v+2v^{2}\right)
f_{3}>$. $C$ is a linear cyclic code of length $12$. \ The dual code $%
C^{\bot }=\left\langle (1+2v^{2})h_{1}^{\ast },\left( 2v+2v^{2}\right)
h_{2}^{\ast },\left( v+2v^{2}\right) h_{3}^{\ast }\right\rangle $ can be
obtained of Proposition $20$.Clearly, $C^{\bot }\subseteq C.$ Hence, we
obtain a quantum code with parameters $\left[ \left[ 36,18,2\right] \right] .
$
\end{example}

Let $n=27.$ We have $x^{27}-1=\left( x-1\right)
^{3}(x^{3}-1)^{4}(x^{6}-2x^{3}+1)^{2}$ in $Z_{3}\left[ x\right] .$ Let $%
f_{1}\left( x\right) =f_{2}\left( x\right) =f_{3}(x)=x^{6}-2x^{3}+1.$ Hence,
we obtain a quantum code with parameters $\left[ \left[ 81,45,2\right] %
\right] .$

Let $n=30.$ We have $x^{30}-1=\left( x^{2}+2\right) ^{3}\left(
x^{4}+x^{3}+x^{2}+x+1\right) ^{3}(x^{4}+2x^{3}+$\newline
$x^{2}+2x+1)^{3}$ in $Z_{3}\left[ x\right] .$ Let $f_{1}\left( x\right)
=f_{3}\left( x\right) =x^{4}+x^{3}+x^{2}+x+1$, $f_{2}\left( x\right)
=x^{4}+2x^{3}+x^{2}+2x+1$. Hence, we obtain a quantum code with parameters $%
\left[ \left[ 90,66,2\right] \right] .$\newline

\begin{example}
Let $n=3.$ We have $x^{3}+1=\left( x+1\right) ^{3}$ in $Z_{3}\left[ x\right]
.$ Let $f_{1}\left( x\right) =f_{2}\left( x\right) =f_{3}\left( x\right)
=x+1.$ Thus $C=\left\langle (1+2v^{2})f_{1},\left( 2v+2v^{2}\right)
f_{2},\left( v+2v^{2}\right) f_{3}\right\rangle $. $C$ is a linear
negacyclic code of length $3$. \ The dual code $C^{\bot
}=<(1+2v^{2})h_{1}^{\ast },(2v+$\newline
$2v^{2})h_{2}^{\ast },\left( v+2v^{2}\right) h_{3}^{\ast }>$ can be obtained
of Proposition $20$.Clearly, $C^{\bot }\subseteq C.$ Hence, we obtain a
quantum code with parameters $\left[ \left[ 9,3,2\right] \right] .$
\end{example}

\begin{example}
Let $n=10.$ We have $x^{10}+1=\left( x^{2}+1\right) \left(
x^{4}+x^{3}+2x+1\right) (x^{4}+$\newline
$2x^{3}+x+1)$ in $Z_{3}\left[ x\right] .$ Let $f_{1}\left( x\right)
=x^{4}+x^{3}+2x+1,$ $f_{2}\left( x\right) =f_{3}\left( x\right)
=x^{4}+2x^{3}+x+1$. Thus $C=\left\langle (1+2v^{2})f_{1},\left(
2v+2v^{2}\right) f_{2},\left( v+2v^{2}\right) f_{3}\right\rangle $. $C$ is a
linear negacyclic code of length $10$. \ The dual code $C^{\bot
}=\left\langle (1+2v^{2})h_{1}^{\ast },\left( 2v+2v^{2}\right) h_{2}^{\ast
},\left( v+2v^{2}\right) h_{3}^{\ast }\right\rangle $ can be obtained of
Proposition $20$.Clearly, $C^{\bot }\subseteq C.$ Hence, we obtain a quantum
code with parameters $\left[ \left[ 30,6,4\right] \right] .$
\end{example}

\begin{example}
Let $n=12.$ We have $x^{12}+1=\left( x^{4}+1\right) \left( x^{2}+x+2\right)
\left( x^{2}+2x+2\right) $\newline
$\left( 2x^{2}+2x+1\right) \left( 2x^{2}+x+1\right) $ in $Z_{3}\left[ x%
\right] .$ Let $f_{1}\left( x\right) =x^{2}+x+2,$ $f_{2}\left( x\right)
=2x^{2}+x+1$, $f_{3}\left( x\right) =x^{2}+2x+2$. Thus $C=\left\langle
(1+2v^{2})f_{1},\left( 2v+2v^{2}\right) f_{2},\left( v+2v^{2}\right)
f_{3}\right\rangle $. $C$ is a linear negacyclic code of length $12$. \ The
dual code $C^{\bot }=<(1+2v^{2})h_{1}^{\ast },\left( 2v+2v^{2}\right)
h_{2}^{\ast },\left( v+2v^{2}\right) h_{3}^{\ast }>$ can be obtained of
Proposition $20$.Clearly, $C^{\bot }\subseteq C.$ Hence, we obtain a quantum
code with parameters $\left[ \left[ 36,24,2\right] \right] .$
\end{example}

\section{Constacyclic codes over $R$}

Let $\lambda =\alpha +\beta v+\gamma v^{2}$ be unit element of $R$. Note
that $\lambda ^{n}=1$ if $n$ even $\lambda ^{n}=\lambda $ if $n$ even. So we
only study $\lambda $-constacyclic codes of odd length.

\begin{proposition}
Let $\varrho $ be the map of $R\left[ x\right] /\left\langle
x^{n}-1\right\rangle $ into $R\left[ x\right] /\left\langle x^{n}-\lambda
\right\rangle $ defined by $\varrho (a(x))=a(\lambda x)$. If $n$ is odd,
then $\varrho $ is a ring isomorphism.
\end{proposition}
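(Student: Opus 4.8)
The plan is to produce an explicit two-sided inverse for $\varrho$ and to check that both $\varrho$ and this candidate inverse are well-defined ring homomorphisms; the isomorphism claim then follows formally. The candidate inverse is the map $\varrho'\colon R[x]/\langle x^{n}-\lambda\rangle \to R[x]/\langle x^{n}-1\rangle$ defined by the analogous substitution $\varrho'(a(x))=a(\lambda^{-1}x)$, which makes sense because $\lambda$ is a unit of $R$.

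First I would record the arithmetic fact that every unit $\lambda$ of $R$ satisfies $\lambda^{2}=1$. This can be read off directly from the list of the eight units given in the Preliminaries, or obtained conceptually from the decomposition $R=Z_{3}[v]/\langle v^{3}-v\rangle\cong Z_{3}\times Z_{3}\times Z_{3}$, under which a unit corresponds to a triple of units of $Z_{3}$, each squaring to $1$. Hence $\lambda^{-1}=\lambda$, and since $n$ is odd, $\lambda^{n+1}=(\lambda^{2})^{(n+1)/2}=1$, so also $\lambda^{n}=\lambda$ and $\lambda^{-n}=\lambda$.

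The one step needing care is well-definedness. Suppose $a(x)\equiv b(x)\pmod{x^{n}-1}$, say $a(x)-b(x)=(x^{n}-1)q(x)$ in $R[x]$. Substituting $\lambda x$ for $x$ gives $a(\lambda x)-b(\lambda x)=(\lambda^{n}x^{n}-1)q(\lambda x)=\lambda^{n}(x^{n}-\lambda^{-n})q(\lambda x)$, and because $\lambda^{-n}=\lambda$ this equals $\lambda^{n}(x^{n}-\lambda)q(\lambda x)$, which is divisible by $x^{n}-\lambda$. Thus $\varrho$ is well-defined on the quotient; the same computation with the roles of $1$ and $\lambda$ interchanged (again using $\lambda^{-1}=\lambda$, so $\lambda^{-n}x^{n}-\lambda=\lambda^{-n}(x^{n}-\lambda^{n+1})=\lambda^{-n}(x^{n}-1)$) shows $\varrho'$ is well-defined.

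Finally I would verify the routine algebraic points: $\varrho$ is additive and fixes $1$ trivially, and it is multiplicative since evaluating a product of polynomials at $\lambda x$ is the product of the evaluations, giving $\varrho((ab)(x))=(ab)(\lambda x)=a(\lambda x)b(\lambda x)=\varrho(a(x))\varrho(b(x))$. For the compositions, $\varrho'(\varrho(a(x)))=\varrho'(a(\lambda x))=a(\lambda\lambda^{-1}x)=a(x)$ and likewise $\varrho(\varrho'(a(x)))=a(x)$, so $\varrho$ is a bijective ring homomorphism, hence a ring isomorphism. I do not expect any genuine obstacle beyond bookkeeping of the exponents of $\lambda$; the entire statement rests on $\lambda^{n+1}=1$, which is exactly where the hypothesis that $n$ is odd is used.
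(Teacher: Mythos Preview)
Your proof is correct and follows essentially the same approach as the paper: both rest on the observation (which the paper states just before the proposition) that every unit $\lambda\in R$ satisfies $\lambda^{2}=1$, hence $\lambda^{n}=\lambda$ for odd $n$, so that the substitution $x\mapsto\lambda x$ carries $x^{n}-1$ to a unit multiple of $x^{n}-\lambda$. The paper compresses everything into the single biconditional $a(x)\equiv b(x)\pmod{x^{n}-1}\iff a(\lambda x)\equiv b(\lambda x)\pmod{x^{n}-\lambda}$, whereas you spell out the explicit inverse and the exponent bookkeeping, but the content is the same.
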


\begin{proof}
The proof is straightforward if $n$ is odd,

$\qquad a(x)\equiv b(x)(\func{mod}(x^{n}-1))$ iff $a(\lambda x)\equiv
b(\lambda x)(\func{mod}(x^{n}-\lambda ))$
\end{proof}

\begin{corollary}
I is an ideal of $R\left[ x\right] /\left\langle x^{n}-1\right\rangle $ if
and only if $\varrho (I)$ is an ideal of $R\left[ x\right] /\left\langle
x^{n}-\lambda \right\rangle .$
\end{corollary}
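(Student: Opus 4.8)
The plan is to read this off from the preceding Proposition, which asserts that for odd $n$ the substitution map $\varrho\colon R[x]/\langle x^{n}-1\rangle \to R[x]/\langle x^{n}-\lambda\rangle$, $a(x)\mapsto a(\lambda x)$, is a ring isomorphism. Once that is granted, the corollary is just the standard fact that a ring isomorphism induces an inclusion-preserving bijection between the ideals of its source and the ideals of its target, and that this bijection is implemented by the isomorphism itself. So the only thing to verify is that the image and the preimage of an ideal under a ring isomorphism are again ideals.

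First I would handle the forward implication. Suppose $I$ is an ideal of $R[x]/\langle x^{n}-1\rangle$. Since $\varrho$ is in particular an additive homomorphism, $\varrho(I)$ is a subgroup of the additive group of $R[x]/\langle x^{n}-\lambda\rangle$. To check absorption, let $r$ be arbitrary in $R[x]/\langle x^{n}-\lambda\rangle$ and let $\varrho(a)\in\varrho(I)$ with $a\in I$. By surjectivity of $\varrho$ write $r=\varrho(s)$; then $r\,\varrho(a)=\varrho(s)\varrho(a)=\varrho(sa)\in\varrho(I)$ because $sa\in I$. Hence $\varrho(I)$ is an ideal of $R[x]/\langle x^{n}-\lambda\rangle$.

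For the converse I would invoke that $\varrho^{-1}$ is again a ring isomorphism (the inverse of an isomorphism of rings is one), and run the identical argument with the two quotient rings interchanged: if $\varrho(I)$ is an ideal then $\varrho^{-1}(\varrho(I))=I$ is an ideal. Alternatively, a direct check works: given $a\in I$ and $s$ in $R[x]/\langle x^{n}-1\rangle$ we have $\varrho(sa)=\varrho(s)\varrho(a)\in\varrho(I)$ since $\varrho(I)$ absorbs multiplication, whence $sa\in\varrho^{-1}(\varrho(I))=I$ by injectivity of $\varrho$; additive closure of $I$ is obtained in the same fashion.

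There is no real obstacle here; the entire content sits in the preceding Proposition, and the one delicate point there — that $\varrho$ is well defined on the quotients — is precisely what forces the hypothesis that $n$ be odd, since then $\lambda^{n}=1$ and the congruence $a(x)\equiv b(x)\pmod{x^{n}-1}$ is equivalent to $a(\lambda x)\equiv b(\lambda x)\pmod{x^{n}-\lambda}$. Granting that, the corollary follows immediately.
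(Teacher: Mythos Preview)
Your proposal is correct and matches the paper's approach: the paper states this as an immediate corollary of the preceding Proposition (that $\varrho$ is a ring isomorphism when $n$ is odd) without giving any further proof, and you have simply spelled out the standard ideal-correspondence argument that the paper leaves implicit.
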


\begin{corollary}
Let $\overline{\varrho }$ be the permutation of $R^{n}$ with n odd, such
that $\overline{\varrho }(a_{0},a_{1},...,$\newline
$a_{n-1})=(a_{0},\lambda a_{1},\lambda ^{2}a_{2}...,\lambda ^{n-1}a_{n-1})$
and $\digamma $ be a subset of $R^{n}$ then $\digamma $ is a linear cyclic
code iff $\overline{\varrho }(\digamma )$ is a linear $\lambda $%
-constacyclic code.
\end{corollary}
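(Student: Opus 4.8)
The plan is to deduce the statement from the ring isomorphism $\varrho $ and the ideal correspondence already established, by recognizing $\overline{\varrho }$ as the coordinate form of $\varrho $. First I would fix the standard $R$-module identifications
\[
\iota \colon R^{n}\longrightarrow R[x]/\langle x^{n}-1\rangle ,\qquad (a_{0},\dots ,a_{n-1})\longmapsto a_{0}+a_{1}x+\cdots +a_{n-1}x^{n-1},
\]
and $\iota _{\lambda }\colon R^{n}\to R[x]/\langle x^{n}-\lambda \rangle $ given by the same formula. Under $\iota $ the cyclic shift $\sigma $ corresponds to multiplication by $x$, so a subset $\digamma \subseteq R^{n}$ is a linear cyclic code exactly when $\iota (\digamma )$ is an ideal of $R[x]/\langle x^{n}-1\rangle $; likewise, under $\iota _{\lambda }$ the $\lambda $-constacyclic shift $\nu $ corresponds to multiplication by $x$, so $\overline{\varrho }(\digamma )$ is a linear $\lambda $-constacyclic code exactly when $\iota _{\lambda }(\overline{\varrho }(\digamma ))$ is an ideal of $R[x]/\langle x^{n}-\lambda \rangle $; both facts are the polynomial characterizations recorded in the Preliminaries. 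I would also note that $\overline{\varrho }$ is a bijection of $R^{n}$ which is $R$-linear, since each coordinate is multiplied by the unit $\lambda ^{i}$; hence it sends $R$-submodules to $R$-submodules.

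The key point is the identity $\iota _{\lambda }\circ \overline{\varrho }=\varrho \circ \iota $. For a representative $a(x)=\sum_{i=0}^{n-1}a_{i}x^{i}$ of degree $<n$ one has
\[
\varrho (a(x))=a(\lambda x)=\sum_{i=0}^{n-1}a_{i}(\lambda x)^{i}=\sum_{i=0}^{n-1}\lambda ^{i}a_{i}\,x^{i},
\]
and the right-hand side again has degree $<n$, so no reduction modulo $x^{n}-\lambda $ intervenes; its coefficient vector is precisely $(a_{0},\lambda a_{1},\dots ,\lambda ^{n-1}a_{n-1})=\overline{\varrho }(a_{0},\dots ,a_{n-1})$. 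This is where the hypothesis that $n$ is odd enters: it guarantees $\lambda ^{n}=\lambda $, which is what makes $\varrho $ a well-defined ring isomorphism by the preceding Proposition. Combining this identity with the first paragraph yields the chain: $\digamma $ is a linear cyclic code if and only if $\iota (\digamma )$ is an ideal of $R[x]/\langle x^{n}-1\rangle $, if and only if (by the ideal-correspondence Corollary) $\varrho (\iota (\digamma ))=\iota _{\lambda }(\overline{\varrho }(\digamma ))$ is an ideal of $R[x]/\langle x^{n}-\lambda \rangle $, if and only if $\overline{\varrho }(\digamma )$ is a linear $\lambda $-constacyclic code. That is the assertion.

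I expect no real obstacle here; the one thing needing care is the bookkeeping behind $\iota _{\lambda }\circ \overline{\varrho }=\varrho \circ \iota $, namely checking that evaluating $a(\lambda x)$ on a representative of degree $<n$ never triggers the relation $x^{n}=\lambda $, so that the coefficientwise form of $\varrho $ is literally $\overline{\varrho }$. Beyond that the argument is purely a transcription of the already-proved Proposition (that $\varrho $ is a ring isomorphism for $n$ odd) and the already-proved Corollary (that $\varrho $ matches ideals with ideals) into the language of coordinate vectors.
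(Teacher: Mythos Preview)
Your argument is correct and follows exactly the route the paper intends: the paper states this Corollary without proof, as an immediate consequence of the preceding Proposition (that $\varrho$ is a ring isomorphism when $n$ is odd) and the preceding Corollary (that $\varrho$ carries ideals to ideals), and you have simply supplied the missing bookkeeping by identifying $\overline{\varrho}$ as the coordinate realization of $\varrho$ via $\iota_{\lambda}\circ\overline{\varrho}=\varrho\circ\iota$. There is nothing to add or to change.
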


\begin{corollary}
$C$ is a cyclic code of parameters $(n,3^{k},d)$ over $R$ iff $\overline{%
\varrho }(C)$ is a $\lambda $-constacyclic code of parameters $(n,3^{k},d)$
over $R,$ when $n$ is odd.
\end{corollary}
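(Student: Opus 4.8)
The plan is to obtain this as a quick consequence of Corollary 36 together with the observation that $\overline{\varrho}$ is a Hamming isometry of $R^{n}$, so that it transports the whole parameter triple $(n,3^{k},d)$ unchanged. Thus almost all of the work has already been done; what is left is a short bookkeeping argument.

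First I would dispose of the structural half. By Corollary 36, for $n$ odd a subset of $R^{n}$ is a linear cyclic code if and only if its image under $\overline{\varrho}$ is a linear $\lambda$-constacyclic code; applying this with the subset equal to $C$ gives ``$C$ cyclic $\iff$ $\overline{\varrho}(C)$ is $\lambda$-constacyclic'' outright. So only the equality of the numerical invariants needs checking, and since $\overline{\varrho}$ is a permutation of $R^{n}$ it is immediate that $\overline{\varrho}(C)$ has the same block length $n$ as $C$ and, being the bijective image of $C$, the same cardinality, i.e. $|\overline{\varrho}(C)|=|C|=3^{k}$.

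The remaining point, and the only one that really uses the hypotheses, is that $\overline{\varrho}$ preserves the minimum distance. Here I would first note that $\overline{\varrho}$ is additive: it acts on coordinates by $a_{i}\mapsto\lambda^{i}a_{i}$, so $\overline{\varrho}(x-y)=\overline{\varrho}(x)-\overline{\varrho}(y)$ for all $x,y\in R^{n}$. Next, because $\lambda$ is a unit of $R$, every power $\lambda^{i}$ is again a unit, hence $\lambda^{i}r=0$ if and only if $r=0$; therefore $x$ and $\overline{\varrho}(x)=(a_{0},\lambda a_{1},\dots,\lambda^{n-1}a_{n-1})$ always have the same support, hence the same (Hamming) weight. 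Combining these facts, $d_{H}(\overline{\varrho}(c),\overline{\varrho}(c'))=w_{H}(\overline{\varrho}(c-c'))=w_{H}(c-c')=d_{H}(c,c')$ for all $c,c'\in C$, and as $\overline{\varrho}$ maps onto $\overline{\varrho}(C)$ the two codes share the same minimum distance $d$. (If $d$ is intended in the Lee metric, the same conclusion holds: under $C=(1+2v^{2})C_{1}\oplus(2v+2v^{2})C_{2}\oplus(v+2v^{2})C_{3}$, multiplication by a unit scales each ternary component $C_{i}$ by a nonzero scalar, which by Theorem 8 amounts to a monomial transformation of the Gray image and is therefore again a Hamming isometry.)

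Assembling the three invariants with Corollary 36 yields the stated equivalence. I do not expect a genuine obstacle; the one spot deserving a line of care is the claim that every $\lambda^{i}$ is a unit of $R$, since this is exactly what upgrades $\overline{\varrho}$ from a mere linear bijection to a weight-preserving one, and it is also where the unit hypothesis on $\lambda$ (and, via Corollary 36, the oddness of $n$) is used.
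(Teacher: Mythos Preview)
Your proposal is correct and matches the paper's (implicit) approach: the paper states this result as a corollary with no proof, evidently regarding it as immediate from Corollary~36 together with the fact that $\overline{\varrho}$ is a weight-preserving bijection of $R^{n}$. Your explicit verification that each $\lambda^{i}$ is a unit (hence $\overline{\varrho}$ preserves supports and therefore Hamming weights) and your remark on the Lee metric via the idempotent decomposition are the natural way to fill in the details the paper omits.
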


\begin{theorem}
Let $\lambda $ be a unit in $R$. Let $C=(1+2v^{2})C_{1}\oplus \left(
2v+2v^{2}\right) C_{2}\oplus \left( v+2v^{2}\right) C_{3}$ be a linear code
of length $n$ over $R$. Then $C$ is a $\lambda $-constacyclic code of length 
$n$ over $R$ iff $C_{i}$ is either a cyclic code or a negacyclic code of
length $n$ over $Z_{3}$ for $i=1,2,3$.
\end{theorem}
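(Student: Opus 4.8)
The plan is to decompose the $\lambda$-constacyclic condition on $C$ into three separate conditions on the ternary codes $C_1, C_2, C_3$, exactly as was done for cyclic and negacyclic codes in Propositions 15 and 16. The key preliminary observation is that a unit $\lambda = \alpha + \beta v + \gamma v^2 \in R$ satisfies $\lambda^2 = 1$: since $\lambda^n = \lambda$ for $n$ even (as remarked before Proposition 26, reading the stray typo correctly), in particular $\lambda^2 = 1$, so under the CRT-style decomposition $R \cong Z_3 \times Z_3 \times Z_3$ induced by the idempotents $1+2v^2$, $2v+2v^2$, $v+2v^2$, each component of $\lambda$ is a square root of $1$ in $Z_3$, i.e. is $1$ or $-1=2$. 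Write $\lambda = (1+2v^2)\lambda_1 \oplus (2v+2v^2)\lambda_2 \oplus (v+2v^2)\lambda_3$ with each $\lambda_i \in \{1, 2\} \subseteq Z_3$.

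First I would take a codeword $m = (m_0, \dots, m_{n-1}) \in C$ and write $m_i = (1+2v^2)a_i + (2v+2v^2)b_i + (v+2v^2)c_i$ with $(a_i) \in C_1$, $(b_i)\in C_2$, $(c_i)\in C_3$. Applying the $\lambda$-constacyclic shift $\nu$ gives $(\lambda m_{n-1}, m_0, \dots, m_{n-2})$, and because the idempotents are orthogonal and multiply $\lambda$ componentwise to $\lambda_i$, this equals
\begin{equation*}
(1+2v^2)(\lambda_1 a_{n-1}, a_0, \dots, a_{n-2}) \oplus (2v+2v^2)(\lambda_2 b_{n-1}, b_0, \dots, b_{n-2}) \oplus (v+2v^2)(\lambda_3 c_{n-1}, c_0, \dots, c_{n-2}).
\end{equation*}
Since $C = (1+2v^2)C_1 \oplus (2v+2v^2)C_2 \oplus (v+2v^2)C_3$ and the three summands are independent, $\nu(m) \in C$ for all $m \in C$ if and only if the $\lambda_i$-constacyclic shift of each $C_i$ lands back in $C_i$. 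But a $1$-constacyclic shift is just the ordinary cyclic shift, and a $2$-constacyclic shift over $Z_3$ is the negacyclic shift since $2 = -1$. Hence each $C_i$ is closed under its shift precisely when it is a cyclic code (if $\lambda_i = 1$) or a negacyclic code (if $\lambda_i = 2$); in either case $C_i$ is "either a cyclic code or a negacyclic code of length $n$ over $Z_3$," which is the claimed statement. The converse direction is the same computation run backwards: if each $C_i$ is cyclic or negacyclic, one reassembles $\nu(m)$ from the three shifted components and concludes $\nu(C) = C$.

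The main obstacle — really the only subtle point — is pinning down that every unit $\lambda$ has all three CRT-components in $\{1,2\}$, so that "$\lambda$-constacyclic on $C_i$" genuinely reduces to either cyclic or negacyclic and nothing else can occur. This follows from $\lambda^2 = 1$ together with the fact that $Z_3$ has exactly two square roots of unity, but it should be stated explicitly (one can also just verify it against the list of eight units given in the Preliminaries). Everything else is the routine orthogonal-idempotent bookkeeping already rehearsed in Propositions 15 and 16, so I would keep that part brief and refer back to those proofs for the pattern rather than re-deriving the componentwise action of the shift in full detail.
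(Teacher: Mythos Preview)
Your proposal is correct and follows essentially the same route as the paper: decompose $\lambda$ via the orthogonal idempotents, observe that each component is a unit of $Z_{3}$ (hence $\pm 1$), and then reduce the $\lambda$-constacyclic shift on $C$ to componentwise cyclic/negacyclic shifts on $C_{1},C_{2},C_{3}$ exactly as in Propositions~15--16. Your version is in fact more explicit than the paper's (which writes ``$\lambda a_{n-1}$'' and only names $\alpha=\pm 1$, leaving the other two components implicit); one small slip to fix: your parenthetical reads the paper's typo as ``$\lambda^{n}=\lambda$ for $n$ even'' yet then concludes $\lambda^{2}=1$ --- the intended correction is $\lambda^{n}=1$ for $n$ even (and $\lambda^{n}=\lambda$ for $n$ odd), though your fallback verification against the list of eight units already suffices.
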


\begin{proof}
Let $\nu $ be the $\lambda $-constacyclic shift on $R^{n}.$ Let $C$ be a $%
\lambda $-constacyclic code of length $n$ over $R.$Let $\left(
a_{0},a_{1},...,a_{n-1}\right) \in C_{1},$ $\left(
b_{0},b_{1},...,b_{n-1}\right) \in C_{2}$ and $\left(
c_{0},c_{1},...,c_{n-1}\right) \in C_{3}.$ Then the corresponding element of 
$C$ is $(m_{0},m_{1},...,m_{n-1})=(1+2v^{2})(a_{0},a_{1},...,a_{n-1})+\left(
2v+2v^{2}\right) (b_{0},b_{1},...,$ $b_{n-1})+\left( v+2v^{2}\right)
(c_{0},c_{1},...,c_{n-1}).$ Since $C$ is a $\lambda $-constacyclic code so, $%
\nu \left( m\right) =(\lambda m_{n-1},m_{0},...,m_{n-2})\in C$ where $%
m_{i}=a_{i}+b_{i}v+v^{2}c_{i}$ for $i=0,1,...,n-1.$ Let $\lambda =\alpha
+v\beta +v^{2}\gamma ,$ where $\alpha ,\beta ,\gamma \in Z_{3}.$ $\nu \left(
m\right) =(1+2v^{2})(\lambda a_{n-1},a_{0},...,a_{n-2})+\left(
2v+2v^{2}\right) (\lambda b_{n-1},b_{0},...,$ $b_{n-2})+\left(
v+2v^{2}\right) (\lambda c_{n-1},c_{0},...,c_{n-2}).$ Since the units of $%
Z_{3}$ are $1$ and $-1,$ so $\alpha =\overline{+}1$. Therefore we have
obtained the desired result. The other side it is seen easily.
\end{proof}

\section{Skew Codes Over $R$}

We are interested in studying skew codes using the ring $%
R=Z_{3}+vZ_{3}+v^{2}Z_{3}$ where $v^{3}=v$. We define non-trivial ring
automorphism $\theta $ on the ring $R$ by $\theta \left( a+vb+v^{2}c\right)
=a+2bv+v^{2}c$ for all $a+vb+v^{2}c\in R.$

The ring $R[x,\theta ]=\{a_{0}+a_{1}x+...+a_{n-1}x^{n-1}$ : $a_{i}\in R,$ $%
n\in N\}$ is called a skew polynomial ring. This ring is a non-commutative
ring. The addition in the ring $R[x,\theta ]$ is the usual polynomial
addition and multiplication is defined using the rule, $(ax^{i})(bx^{j})=a%
\theta ^{i}(b)x^{i+j}$. Note that $\theta ^{2}(a)=a$ for all $a\in R$. This
implies that $\theta $ is a ring automorphism of order $2$.

\begin{definition}
A subset $C$ of $R^{n}$ is callled a skew cyclic code of length $n$ if $C$
satisfies the following conditions,
\end{definition}

$i)$ $C$ is a submodule of $R^{n}$,

$ii)$ If $c=\left( c_{0},c_{1},...,c_{n-1}\right) \in C$, then $\sigma
_{\theta }\left( c\right) =\left( \theta (c_{n-1)},\theta (c_{0}),...,\theta
(c_{n-2})\right) \in C.$

Let $(f(x)+(x^{n}-1))$ be an element in the set $R_{n}=R\left[ x,\theta %
\right] /(x^{n}-1)$ and let $r(x)\in R\left[ x,\theta \right] $. Define
multiplication from left as follows,

\begin{equation*}
r(x)(f(x)+(x^{n}-1))=r(x)f(x)+(x^{n}-1)
\end{equation*}%
for any $r(x)\in R\left[ x,\theta \right] $.

\begin{theorem}
$R_{n}$ is a left $R\left[ x,\theta \right] $-module where multiplication
defined as in above.
\end{theorem}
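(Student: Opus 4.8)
The plan is to verify the two module axioms that are non-trivial here: compatibility of the left $R[x,\theta]$-action with the quotient structure, and the associativity/distributivity of the action. First I would observe that $R_n = R[x,\theta]/(x^n-1)$ is defined as a quotient of the additive group of $R[x,\theta]$ by the subgroup $(x^n-1) = R[x,\theta](x^n-1)$, which is a \emph{left} ideal of $R[x,\theta]$; one checks it really is a left ideal, i.e. that $r(x)\cdot g(x)(x^n-1) \in (x^n-1)$ for all $r(x), g(x)$, which is immediate since $r(x)g(x) \in R[x,\theta]$. This guarantees that the left multiplication $r(x)\cdot(f(x) + (x^n-1)) := r(x)f(x) + (x^n-1)$ is well-defined: if $f(x) + (x^n-1) = f'(x) + (x^n-1)$, then $f - f' \in (x^n-1)$, hence $r(f-f') \in (x^n-1)$, so $rf + (x^n-1) = rf' + (x^n-1)$.

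Next I would check the module axioms one by one, all inherited from the ring structure of $R[x,\theta]$. Additivity in the first argument: $(r_1(x) + r_2(x))\cdot(f + (x^n-1)) = (r_1+r_2)f + (x^n-1) = r_1 f + r_2 f + (x^n-1) = r_1\cdot(f+(x^n-1)) + r_2\cdot(f+(x^n-1))$, using left-distributivity in $R[x,\theta]$. Additivity in the second argument: $r(x)\cdot\big((f_1 + (x^n-1)) + (f_2 + (x^n-1))\big) = r(f_1+f_2) + (x^n-1) = rf_1 + rf_2 + (x^n-1)$, using right-distributivity. Associativity: $\big(r_1(x)r_2(x)\big)\cdot(f + (x^n-1)) = (r_1 r_2) f + (x^n-1) = r_1(r_2 f) + (x^n-1) = r_1(x)\cdot\big(r_2(x)\cdot(f+(x^n-1))\big)$, using associativity of multiplication in $R[x,\theta]$. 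Finally, the identity $1 \in R[x,\theta]$ acts trivially: $1\cdot(f + (x^n-1)) = f + (x^n-1)$. Together these show $R_n$ is a left $R[x,\theta]$-module.

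The main point to be careful about — the step I would flag as the real content rather than bookkeeping — is the well-definedness, which rests on $(x^n-1)$ being a left ideal and not merely a two-sided one. Since $R[x,\theta]$ is non-commutative, a reader might worry; the resolution is simply that left multiplication of an arbitrary skew polynomial against $g(x)(x^n-1)$ stays inside $R[x,\theta]\cdot(x^n-1)$ by definition. (One does \emph{not} need $(x^n-1)$ to be two-sided for the left-module structure, though in fact here $x^n-1$ is central enough that $(x^n-1)$ is two-sided when $n$ is even, since $\theta^2 = \mathrm{id}$; this is not needed for the claim.) Everything else is a direct transcription of the ring axioms of $R[x,\theta]$ through the quotient map, so no genuine obstacle arises; I would present the proof as a short verification of these four identities plus the well-definedness remark.
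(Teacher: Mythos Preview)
Your proof is correct and complete: well-definedness follows from $(x^n-1)$ being a left ideal, and the remaining module axioms are inherited directly from the ring axioms of $R[x,\theta]$. The paper itself states this theorem without any proof, so there is no argument to compare against; your verification is the standard one and is exactly what is needed here.
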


\begin{theorem}
A code $C$ in $R_{n}$ is a skew cyclic code if and only if $C$ is a left $R%
\left[ x,\theta \right] $-submodule of the left $R\left[ x,\theta \right] $%
-module $R_{n}$.
\end{theorem}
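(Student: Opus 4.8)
The plan is to prove the standard characterization of skew cyclic codes as left submodules of $R_n = R[x,\theta]/(x^n-1)$, by unwinding what left multiplication by $x$ does to a codeword under the usual polynomial-to-vector correspondence. First I would fix the identification: to $c = (c_0, c_1, \dots, c_{n-1}) \in R^n$ associate the polynomial $c(x) = c_0 + c_1 x + \cdots + c_{n-1} x^{n-1} + (x^n - 1) \in R_n$. This is an $R$-module isomorphism of $R^n$ onto $R_n$ (as additive groups with scalar multiplication by $R$ on the left), so a subset $C \subseteq R^n$ is an $R$-submodule of $R^n$ if and only if its image $\widetilde C$ in $R_n$ is an $R$-submodule; this handles condition $(i)$ of the definition of skew cyclic code automatically and is identical to the classical (commutative) case.

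Next I would compute $x \cdot c(x)$ in $R_n$ using the skew multiplication rule $(ax^i)(bx^j) = a\,\theta^i(b)\,x^{i+j}$. Since $x$ has a trivial coefficient, $x \cdot c(x) = \sum_{i=0}^{n-1} \theta(c_i) x^{i+1}$ (because multiplying $x^1$ on the left by $c_i x^i$ — careful, it is $x \cdot (c_i x^i) = \theta(c_i) x^{i+1}$). Reducing modulo $x^n - 1$ turns the $x^n$ term $\theta(c_{n-1})x^n$ into $\theta(c_{n-1})$, giving $x \cdot c(x) = \theta(c_{n-1}) + \theta(c_0) x + \cdots + \theta(c_{n-2}) x^{n-1} + (x^n-1)$, which corresponds exactly to the vector $\sigma_\theta(c) = (\theta(c_{n-1}), \theta(c_0), \dots, \theta(c_{n-2}))$. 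So left multiplication by $x$ on $R_n$ is precisely the skew cyclic shift $\sigma_\theta$ transported through the identification.

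With that computed, the proof finishes by a routine closure argument in both directions. If $\widetilde C$ is a left $R[x,\theta]$-submodule of $R_n$, then it is closed under multiplication by $x$, hence under $\sigma_\theta$, and it is closed under the left $R$-action, hence an $R$-submodule; so $C$ is a skew cyclic code. Conversely, if $C$ is skew cyclic, $\widetilde C$ is an $R$-submodule and is closed under $\sigma_\theta$, i.e. under left multiplication by $x$; by induction it is closed under left multiplication by every monomial $x^j$, and then by additivity and the $R$-module structure it is closed under left multiplication by an arbitrary $r(x) = \sum r_j x^j \in R[x,\theta]$, since $r(x) \cdot f(x) = \sum_j (r_j x^j) \cdot f(x)$ and each term lands back in $\widetilde C$. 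Hence $\widetilde C$ is a left $R[x,\theta]$-submodule. I expect no serious obstacle here; the only place to be careful is keeping the order of multiplication straight in the noncommutative ring (left multiplication by $x$, so $\theta$ is applied to the coefficients of $f$, not the other way around) and noting that $\theta^n$ may be nontrivial in general but for the shift computation only $\theta^1$ appears, so the fact that $\theta$ has order $2$ is not even needed for this particular statement. One should also remark that for the left $R[x,\theta]$-module structure on $R_n$ to be well defined one uses Theorem stating $R_n$ is a left $R[x,\theta]$-module, already established.
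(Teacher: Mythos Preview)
Your argument is correct and is the standard proof of this equivalence: you identify left multiplication by $x$ in $R_n$ with the skew cyclic shift $\sigma_\theta$ via the computation $x\cdot c(x)=\sum_i \theta(c_i)x^{i+1}\equiv \theta(c_{n-1})+\theta(c_0)x+\cdots+\theta(c_{n-2})x^{n-1}\pmod{x^n-1}$, and then close up under arbitrary $r(x)\in R[x,\theta]$ by the usual induction on powers of $x$ together with $R$-linearity. The only delicate point, which you handle correctly, is the order of multiplication in the noncommutative ring, so that $\theta$ acts on the coefficients of $c(x)$ rather than on $r_j$.

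As for comparison with the paper: the paper states this theorem without proof, so there is nothing to compare against. Your write-up would serve perfectly well as the missing proof; it relies only on the preceding theorem (that $R_n$ is a left $R[x,\theta]$-module), exactly as you note.
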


\begin{theorem}
Let $C$ be a skew cyclic code in $R_{n}$ and let $f(x)$ be a polynomial in $%
C $ of minimal degree. If $f(x)$ is monic polynomial, then $C=\left(
f(x)\right) $ where $f(x)$ is a right divisor of $(x^{n}-1)$.
\end{theorem}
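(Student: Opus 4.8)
The plan is to mimic the classical proof that in a (commutative) polynomial ring, the generator of a cyclic code of minimal degree divides $x^n-1$, adapting it to the non-commutative skew setting where one must be careful that division is on a fixed side. First I would recall that, by Theorem~34, $C$ being skew cyclic means $C$ is a left $R[x,\theta]$-submodule of $R_n = R[x,\theta]/(x^n-1)$. Let $f(x)\in C$ be monic of minimal degree among nonzero elements of $C$ (such an $f$ exists since the degree is a non-negative integer and $C\neq 0$; if $C=\{0\}$ the statement is vacuous with $f=x^n-1$ or trivial).

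The core step is a right-division argument in $R[x,\theta]$. Because $f(x)$ is monic, for any $g(x)\in R[x,\theta]$ we can perform right division: there exist $q(x),r(x)\in R[x,\theta]$ with $g(x) = q(x)f(x) + r(x)$ and either $r(x)=0$ or $\deg r(x) < \deg f(x)$. (Here monicity of $f$ is exactly what makes the leading-term cancellation work despite non-commutativity; $\theta$ being an automorphism guarantees the leading coefficient of $q(x)f(x)$ is $\theta^{\deg q}$ of the lead of $q$, which is a unit-controlled quantity so we can solve for it.) Now apply this with $g(x)=x^n-1$: write $x^n-1 = q(x)f(x) + r(x)$. Reducing modulo $x^n-1$ in $R_n$, we get $0 = q(x)f(x) + r(x)$ in $R_n$, i.e. $r(x) \equiv -q(x)f(x) \pmod{x^n-1}$. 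Since $C$ is a left submodule and $f(x)\in C$, the element $q(x)f(x)$ lies in $C$, hence so does $r(x)$. But $r(x)=0$ or $\deg r(x)<\deg f(x)$; minimality of $\deg f$ forces $r(x)=0$. Therefore $x^n-1 = q(x)f(x)$ in $R[x,\theta]$, which says precisely that $f(x)$ is a right divisor of $x^n-1$.

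It remains to show $C = (f(x))$, the left submodule generated by $f(x)$. The inclusion $(f(x))\subseteq C$ is immediate since $f(x)\in C$ and $C$ is a left submodule. For the reverse inclusion, take any $c(x)\in C$ and right-divide by $f(x)$: $c(x) = q_1(x)f(x) + r_1(x)$ with $r_1=0$ or $\deg r_1 < \deg f$. As before, $q_1(x)f(x)\in C$, so $r_1(x) = c(x) - q_1(x)f(x)\in C$; minimality of degree again gives $r_1(x)=0$, so $c(x) = q_1(x)f(x)\in (f(x))$. Hence $C=(f(x))$.

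The main obstacle is making the right-division algorithm in $R[x,\theta]$ rigorous over the ring $R$, which is not a field (not even a domain): ordinarily Euclidean division needs the divisor's leading coefficient to be a unit, and here we only know $f$ is monic, i.e. leading coefficient $1$, which is fine, but one must check that at each step the coefficient being eliminated can be matched by choosing the appropriate coefficient of the quotient, using that $\theta$ is bijective (so $\theta^{i}$ is invertible as a map, and in particular $\theta^{i}(1)=1$). Monicity is what saves us — the construction proceeds by induction on $\deg g$ exactly as in the field case, with $\theta$-twists inserted in the right places — so no genuine difficulty arises beyond bookkeeping; the key point to emphasize is simply that division must be performed on the correct (right) side so that $q(x)f(x)$, rather than $f(x)q(x)$, appears, keeping us inside the left submodule $C$.
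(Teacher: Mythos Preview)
Your argument is correct and is precisely the standard one: right-division by the monic $f(x)$ in $R[x,\theta]$, applied first to $x^n-1$ to get right-divisibility and then to an arbitrary $c(x)\in C$ to get $C=(f(x))$, with minimality of $\deg f$ killing the remainders. The monicity of $f$ is exactly what makes the division algorithm go through over the non-domain $R$, and you handle the left/right bookkeeping properly.

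The paper itself states this theorem without proof, so there is no alternative argument to compare against; your write-up supplies the expected standard proof. One minor point: the result you invoke (that a skew cyclic code is the same as a left $R[x,\theta]$-submodule of $R_n$) is not Theorem~34 in the paper's numbering but the theorem stated immediately before the one you are proving; you should correct that cross-reference.
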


\begin{theorem}
A module skew cyclic code of length $n$ over $R$ is free iff it is generated
by a monic right divisor $f(x)$ of $x^{n}-1.$ Moreover, the set $%
\{f(x),xf(x),x^{2}f(x)$\newline
$,...,x^{n-\deg (f(x))-1}f(x)\}$ forms a basis of $C$ and the rank of $C$ is 
$n-\deg (f(x)).$
\end{theorem}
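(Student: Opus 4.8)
The plan is to prove both directions of the equivalence by exploiting the structure of skew cyclic codes over $R$ established in the preceding theorems, in particular the characterization (Theorem~\ref{...}) that a skew cyclic code $C$ in $R_n$ is the left submodule generated by a monic polynomial $f(x)$ of minimal degree that right-divides $x^n-1$. For the ``if'' direction, suppose $C=(f(x))$ where $f(x)$ is a monic right divisor of $x^n-1$ of degree $r$. I would first show that the set $B=\{f(x),xf(x),\dots,x^{n-r-1}f(x)\}$ spans $C$ as an $R$-module: given any element of $C$, it has the form $g(x)f(x) + (x^n-1)$ for some $g(x)\in R[x,\theta]$, and using the right-division algorithm in $R[x,\theta]$ by the monic polynomial $x^n-1$ I can reduce the representative to a polynomial of degree $< n$; then, writing $g(x)=q(x)\cdot\text{(something)}+\dots$, or more directly, performing right division of the degree-$<n$ representative by $f(x)$ (valid since $f$ is monic), I obtain a remainder of degree $<r$ lying in $C$, which by minimality of $\deg f$ must be zero, so the representative is a left $R[x,\theta]$-combination of $f(x)$ of degree $<n$, hence an $R$-linear combination of the elements of $B$. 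The key technical point here is that right division by a monic polynomial works over the noncommutative ring $R[x,\theta]$ exactly as in the commutative case, since the leading coefficient is a unit and $\theta$ is a ring automorphism.

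Next I would establish that $B$ is $R$-linearly independent. Suppose $\sum_{i=0}^{n-r-1} a_i x^i f(x) = 0$ in $R_n$ with $a_i\in R$, not all zero. Since $\deg(x^i f(x)) = r+i \le n-1$ for all $i$ in range, the left-hand side is a polynomial of degree at most $n-1$, so the congruence modulo $x^n-1$ forces it to be the zero polynomial in $R[x,\theta]$. Let $j$ be the largest index with $a_j\neq 0$; then the coefficient of $x^{r+j}$ in $\sum a_i x^i f(x)$ is $a_j\,\theta^{r+j}(\text{lead}(f)) = a_j\cdot\theta^{r+j}(1) = a_j \neq 0$, a contradiction. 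Hence $B$ is a basis, $C$ is free, and $\operatorname{rank}(C) = n-r = n-\deg(f(x))$.

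For the ``only if'' direction, suppose $C$ is a free module skew cyclic code. By Theorem~\ref{...} (the minimal-degree generator theorem), if the minimal-degree polynomial $f(x)$ in $C$ is monic then $C=(f(x))$ with $f(x)$ a right divisor of $x^n-1$, and we are done by the above. So the content is to rule out the case where no minimal-degree element is monic, i.e., where the leading coefficient of every minimal-degree element is a nonunit (a zero divisor) of $R$. In that situation I would argue that $C$ cannot be free: the structure of $R$ as a principal ideal ring with the three maximal ideals described in the Preliminaries means that a minimal-degree generator with zero-divisor leading coefficient produces, after multiplying by a suitable scalar (an annihilator), a lower-degree nonzero element — contradicting minimality — unless the code decomposes in a way that introduces torsion, which contradicts freeness. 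Concretely, one adapts the standard argument: if $f(x)$ has leading coefficient $\beta$ a zero-divisor, pick $\gamma\in R$ with $\gamma\beta = 0$, $\gamma\neq 0$; then $\gamma f(x)$ has degree $<\deg f$, and by minimality $\gamma f(x)=0$ in $R_n$, but $\gamma f(x)$ is already reduced (degree $<n$), so $\gamma f(x)=0$ in $R[x,\theta]$; this means $f(x)$ is killed by $\gamma$, so $C$ has nonzero torsion and is not free.

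The main obstacle I anticipate is the ``only if'' direction — specifically, making the torsion argument fully rigorous over the noncommutative, non-chain ring $R[x,\theta]$, since $R$ is a principal ideal ring but \emph{not} a chain ring, so one cannot invoke the cleaner structure theory available for chain-ring-based skew codes. One must be careful that ``free of rank $k$'' genuinely forces the existence of a monic generator; the cleanest route is probably to decompose $C$ via the idempotent-like generators $1+2v^2$, $2v+2v^2$, $v+2v^2$ (as in Proposition~\ref{...} and Theorem~\ref{...}) into a direct sum of three ternary skew cyclic codes $C_1,C_2,C_3$ over the field $Z_3$, where the classical theory applies and each $C_i$ is automatically free with a monic generator $f_i(x)\mid x^n-1$, and then check that $C$ is free over $R$ exactly when the three component ranks match up so that $f(x)=(1+2v^2)f_1(x)+(2v+2v^2)f_2(x)+(v+2v^2)f_3(x)$ is monic, i.e. $\deg f_1=\deg f_2=\deg f_3$. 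Assembling that reduction carefully, and confirming it is compatible with the skew structure (i.e., that $\theta$ respects the decomposition, which it does since $\theta$ fixes $v^2$ and the relevant scalars), is where the real work lies; the basis and independence claims are then routine.
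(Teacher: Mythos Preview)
The paper states this theorem without proof, so there is nothing to compare against directly. Your ``if'' direction is essentially correct: right division by the monic $h(x)$ satisfying $x^n-1=h(x)f(x)$ gives the spanning property for $B=\{f(x),xf(x),\dots,x^{n-\deg f-1}f(x)\}$, and the leading-coefficient argument establishes $R$-linear independence.

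The ``only if'' direction, however, has two genuine gaps. First, the torsion argument does not work over $R$: a free $R$-module \emph{does} contain nonzero elements annihilated by nonzero scalars, since already in $R$ itself one has $(1+2v^2)(2v+2v^2)=0$. So the implication ``$\gamma f(x)=0$ with $\gamma\neq 0$, hence $C$ is not free'' is simply false here. To get a contradiction you would need $\gamma$ to annihilate all of $C$, and that does not follow from your argument; indeed, without monicity of the minimal-degree element you do not even know that $C=(f(x))$ (Theorem~42 has monicity as a hypothesis, not a conclusion).

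Second, your fallback via the idempotent decomposition contains a factual error. The automorphism $\theta$ fixes $e_1=1+2v^2$, but it \emph{swaps} $e_2=2v+2v^2$ and $e_3=v+2v^2$: one computes $\theta(2v+2v^2)=4v+2v^2=v+2v^2$. Consequently $e_2C$ and $e_3C$ are not individually closed under left multiplication by $x$; rather $x\cdot e_2C\subseteq e_3C$ and $x\cdot e_3C\subseteq e_2C$. The decomposition can still be exploited (treat $e_1C$ as an ordinary cyclic code over $Z_3$ and handle the pair $e_2C\oplus e_3C$ together, or work with $x^2$ which commutes with all of $R$), but your parenthetical justification that ``$\theta$ fixes $v^2$ and the relevant scalars'' is incorrect for $e_2,e_3$, and the sketch as written does not go through.
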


\begin{theorem}
Let $n$ be odd and $C$ be a skew cyclic code of length $n$. Then $C$ is
equivalent to cyclic code of length $n$ over $R$.
\end{theorem}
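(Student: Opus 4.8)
The plan is to reduce the statement to the single identity $\sigma_{\theta}^{2}=\sigma^{2}$, where $\sigma$ is the ordinary cyclic shift on $R^{n}$ and $\sigma_{\theta}$ is the skew cyclic shift $\sigma_{\theta}(c_{0},\dots,c_{n-1})=(\theta(c_{n-1}),\theta(c_{0}),\dots,\theta(c_{n-2}))$. First I would compose $\sigma_{\theta}$ with itself on a generic vector $c=(c_{0},\dots,c_{n-1})\in R^{n}$; because $\theta$ has order $2$, i.e.\ $\theta^{2}=\mathrm{id}_{R}$ as already noted in this section, every entry produced by the second application is of the form $\theta^{2}(c_{j})=c_{j}$, and a short bookkeeping of indices gives $\sigma_{\theta}^{2}(c)=(c_{n-2},c_{n-1},c_{0},\dots,c_{n-3})=\sigma^{2}(c)$. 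Hence $\sigma_{\theta}^{2}=\sigma^{2}$ as permutations of $R^{n}$.

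Next I would invoke the hypothesis that $n$ is odd, so that $2$ is invertible modulo $n$; concretely, put $k=(n+1)/2$, which is an integer, so that $2k=n+1$. Since $\sigma^{n}=\mathrm{id}$ and powers of $\sigma_{\theta}$ (resp.\ of $\sigma$) commute, this yields
\[
\sigma_{\theta}^{2k}=\bigl(\sigma_{\theta}^{2}\bigr)^{k}=\bigl(\sigma^{2}\bigr)^{k}=\sigma^{2k}=\sigma^{n+1}=\sigma ,
\]
so the ordinary cyclic shift is itself a power of the skew cyclic shift.

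Finally, if $C$ is a skew cyclic code of length $n$ over $R$, then $\sigma_{\theta}(C)=C$ by definition, hence $\sigma_{\theta}^{2k}(C)=C$, i.e.\ $\sigma(C)=C$. Thus $C$ is already a linear cyclic code of length $n$ over $R$, and in particular it is equivalent to a cyclic code of length $n$ over $R$ (via the identity permutation).

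I do not anticipate a genuine obstacle; the only step requiring care is the index bookkeeping that establishes $\sigma_{\theta}^{2}=\sigma^{2}$, and it works precisely because $\theta^{2}=\mathrm{id}$ and $\gcd(2,n)=1$. It is worth remarking that one should argue at the level of the shift maps rather than through a putative ring isomorphism $R[x,\theta]/(x^{n}-1)\cong R[x]/(x^{n}-1)$: the element $x^{n}$ is not central in $R[x,\theta]$ (one has $x^{n}a=\theta(a)x^{n}$), so $(x^{n}-1)$ is only a left ideal and the isomorphism route would require extra justification that the permutation argument avoids.
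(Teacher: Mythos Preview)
Your argument is correct and is essentially the paper's own proof recast in terms of shift maps rather than skew-polynomial multiplication: both use $\gcd(2,n)=1$ to pick an exponent $2b\equiv 1\pmod n$ (you take $b=(n+1)/2$ explicitly, the paper invokes B\'ezout) and then observe that the $2b$-fold skew shift coincides with the ordinary cyclic shift because $\theta^{2}=\mathrm{id}$. Your closing caveat about $(x^{n}-1)$ being only a left ideal is a nice observation, though note that the paper's computation of $x^{2b}a(x)$ in the left module $R_{n}$ does not actually require two-sidedness, so both presentations are on equal footing.
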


\begin{proof}
Since $n$ is odd, $gcd(2,n)=1.$ Hence there exist integers $b,c$ such that $%
2b+nc=1.$ So $2b=1-nc=1+zn$ where $z>0.$ Let $%
a(x)=a_{0}+a_{1}x+...+a_{n-1}x^{n-1}$ be a codeword in $C.$ Note that $%
x^{2b}a(x)=\theta ^{2b}(a_{0})x^{1+zn}+\theta
^{2b}(a_{1})x^{2+zn}+...+\theta
^{2b}(a_{n-1})x^{n+zn}=a_{n-1}+a_{0}x+...+a_{n-2}x^{n-2}\in C.$ \ Thus $C$
is a cyclic code of length\ $n$.
\end{proof}

\begin{corollary}
Let $n$ be odd. Then the number of distinct \ skew cyclic codes of length $n$
over $R$ is equal to the number of ideals in $R\left[ x\right] /(x^{n}-1)$
because of Theorem 44. If $x^{n}-1=\tprod\limits_{i=0}^{r}p_{i}^{s_{i}}(x)$
where $p_{i}(x)$ are irreducible polynomials over $Z_{3}$. Then the number
of distinct skew cyclic codes of length $n$ over $R$ is $\tprod%
\limits_{i=0}^{r}(s_{i}+1)^{3}.$
\end{corollary}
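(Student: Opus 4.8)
The plan is to establish the corollary by combining the equivalence of skew cyclic codes with ordinary cyclic codes (Theorem 44) together with a counting argument on ideals of $R[x]/(x^{n}-1)$. First I would invoke Theorem 44: since $n$ is odd, every skew cyclic code of length $n$ over $R$ is equivalent to a cyclic code of length $n$ over $R$, and in fact the argument in that proof shows the underlying sets of skew cyclic codes coincide with the ideals of $R[x]/(x^{n}-1)$. Hence the number of distinct skew cyclic codes equals the number of ideals of $R[x]/(x^{n}-1)$, which reduces the problem to a purely commutative-algebra count.

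Next I would decompose the ring. Using the idempotent decomposition of $R = Z_{3}+vZ_{3}+v^{2}Z_{3}$ already exploited throughout the paper, namely $R \cong Z_{3} \times Z_{3} \times Z_{3}$ via the three components corresponding to $(1+2v^{2})$, $(2v+2v^{2})$, $(v+2v^{2})$, one gets
\begin{equation*}
R[x]/(x^{n}-1) \cong \bigl(Z_{3}[x]/(x^{n}-1)\bigr)^{3}.
\end{equation*}
Therefore the number of ideals of $R[x]/(x^{n}-1)$ is the cube of the number of ideals of $Z_{3}[x]/(x^{n}-1)$.

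Then I would count the ideals of $Z_{3}[x]/(x^{n}-1)$. Writing $x^{n}-1 = \prod_{i=0}^{r} p_{i}^{s_{i}}(x)$ with the $p_{i}(x)$ distinct irreducible polynomials over $Z_{3}$, the Chinese Remainder Theorem gives $Z_{3}[x]/(x^{n}-1) \cong \prod_{i=0}^{r} Z_{3}[x]/(p_{i}^{s_{i}}(x))$. Each factor $Z_{3}[x]/(p_{i}^{s_{i}}(x))$ is a finite chain ring whose ideals are exactly $(p_{i}^{j}(x))$ for $j = 0,1,\dots,s_{i}$, giving $s_{i}+1$ ideals; since ideals of a product of rings are products of ideals of the factors, the total is $\prod_{i=0}^{r}(s_{i}+1)$. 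Cubing, the number of distinct skew cyclic codes of length $n$ over $R$ is $\prod_{i=0}^{r}(s_{i}+1)^{3}$, as claimed.

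I do not anticipate a serious obstacle here: each step is standard. The only point requiring a little care is justifying that "equivalent to a cyclic code" may be upgraded to "the skew cyclic codes are precisely the cyclic codes (as subsets)" so that the count is exact rather than merely up to equivalence; this follows directly from the computation $x^{2b}a(x) = a_{n-1}+a_{0}x+\dots+a_{n-2}x^{n-2}$ in the proof of Theorem 44, which shows the skew shift, after raising to the appropriate power, equals the ordinary cyclic shift, so $C$ is closed under the skew shift if and only if it is an ideal of $R[x]/(x^{n}-1)$.
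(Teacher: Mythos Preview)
The paper offers no proof of this corollary beyond the phrase ``because of Theorem 44'' embedded in the statement, so your approach --- reduce to counting ideals of $R[x]/(x^{n}-1)$ via Theorem 44, then use the idempotent decomposition $R\cong Z_{3}^{3}$ and the CRT --- is exactly the intended one, and your counting of ideals of $R[x]/(x^{n}-1)$ as $\prod_{i}(s_{i}+1)^{3}$ is correct.

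However, there is a genuine gap in the step where you assert the biconditional ``$C$ is closed under the skew shift if and only if it is an ideal of $R[x]/(x^{n}-1)$.'' Theorem 44 (via $\sigma_{\theta}^{2b}=\sigma$) gives only the forward implication: every skew cyclic code is cyclic. For the converse you would need every cyclic code to be $\sigma_{\theta}$-invariant, and this fails. Writing $\sigma_{\theta}=\Theta\circ\sigma$ with $\Theta$ the coordinate-wise application of $\theta$, one has (since $\Theta$ and $\sigma$ commute and $\theta^{2}=\mathrm{id}$) that $\sigma_{\theta}^{n}=\Theta^{n}=\Theta$ when $n$ is odd. Thus closure under $\sigma_{\theta}$ forces closure under $\Theta$, an extra condition. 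Now $\theta$ fixes the idempotent $1+2v^{2}$ but swaps $2v+2v^{2}$ and $v+2v^{2}$; hence for $C=(1+2v^{2})C_{1}\oplus(2v+2v^{2})C_{2}\oplus(v+2v^{2})C_{3}$ one gets $\Theta(C)=C$ if and only if $C_{2}=C_{3}$. So for odd $n$ the skew cyclic codes correspond to pairs $(C_{1},C_{2})$ of ternary cyclic codes, which would give $\prod_{i}(s_{i}+1)^{2}$ rather than $\prod_{i}(s_{i}+1)^{3}$. (Already for $n=1$: $R$ has $8$ ideals but only $4$ are $\theta$-invariant.) The paper's one-line justification glosses over this same point, so the difficulty is not with your write-up but with the statement itself; your argument does not close the gap because the needed converse to Theorem 44 is simply not true.
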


\begin{example}
Let $n=12$ and $f(x)=x^{3}+x^{2}+x+1.$ Then $f(x)$ generates a skew cyclic
codes of length $12$. This code is equivalent to a cyclic code of length $12$%
. Since $x^{12}-1=\left( x-1\right) ^{3}\left( x^{3}+x^{2}+x+1\right) ^{3}$,
it follows that there are $4^{6}$ skew cyclic code of length $12$.
\end{example}

\begin{definition}
A subset $C$ of $R^{n}$ is called a skew quasi-cyclic code of length $n$ if $%
C$ satisfies the following conditions,
\end{definition}

\ $i)$ $C$ is a submodule of $R^{n}$,

$ii)$ If $e=\left(
e_{0,0},...,e_{0,l-1},e_{1,0},...,e_{1,l-1},...,e_{s-1,0},..,e_{s-1,l-1}%
\right) \in C$, then \newline
$\tau _{\theta ,s,l}\left( e\right) =\left( \theta (e_{s-1,0}),...,\theta
(e_{s-1,l-1}),\theta (e_{0,0}),...,\theta (e_{0,l-1}),...,\theta
(e_{s-2,0}),...,\theta (e_{s-2,l-1})\right) $\newline
$\in C$.

We note that $x^{s}-1$ is a two sided ideal in $R\left[ x,\theta \right] $
if $m|s$ where $m$ is the order of $\theta $ and equal to two. So $R\left[
x,\theta \right] /(x^{s}-1)$ is well defined.

The ring $R_{s}^{l}=(R\left[ x,\theta \right] /(x^{s}-1))^{l}$ is a left $%
R_{s}=R\left[ x,\theta \right] /(x^{s}-1)$ module by the following
multiplication on the left \newline
$f(x)(g_{1}(x),...,g_{l}(x))=(f(x)g_{1}(x),...f(x)g_{l}(x))$. If the map $%
\gamma $ is defined by

\begin{equation*}
\gamma :R^{n}\longrightarrow R_{s}^{l}
\end{equation*}

$\left(
e_{0,0},...,e_{0,l-1},e_{1,0},...,e_{1,l-1},...,e_{s-1,0},...,e_{s-1,l-1}%
\right) \mapsto $ $(c_{0}(x),...,c_{l-1}(x))$ such that $e_{j}(x)=\tsum%
\limits_{i=0}^{s-1}e_{i,j}x^{i}\in R_{s}^{l}$ where $j=0,1,...,l-1$ then the
map $\gamma $ gives a one to one correspondence $R^{n}$ and the ring $%
R_{s}^{l}$.

\begin{theorem}
A subset $C$ of $R^{n}$ is a skew quasi-cyclic code of length $n=sl$ and
index $l$ if and only if $\gamma (C)$ is a left $R_{s}$-submodule of $%
R_{s}^{l}$.
\end{theorem}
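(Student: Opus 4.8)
The plan is to establish the equivalence by translating the combinatorial shift condition defining skew quasi-cyclic codes into the module-theoretic language afforded by the map $\gamma$. The key observation is that under the correspondence $\gamma$, the skew quasi-cyclic shift $\tau_{\theta,s,l}$ corresponds precisely to multiplication by $x$ on the left in $R_s^l$. So first I would fix a vector $e=\left(e_{0,0},\ldots,e_{s-1,l-1}\right)\in R^n$ and write its image $\gamma(e)=(e_0(x),\ldots,e_{l-1}(x))$ where $e_j(x)=\sum_{i=0}^{s-1}e_{i,j}x^i$.

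Next I would compute $x\cdot e_j(x)$ in $R_s=R[x,\theta]/(x^s-1)$. Using the multiplication rule $(ax^i)(bx^j)=a\theta^i(b)x^{i+j}$, we get $x\cdot e_j(x)=\sum_{i=0}^{s-1}\theta(e_{i,j})x^{i+1}$, and reducing modulo $x^s-1$ (noting that $x^s=1$, and since $\theta$ has order $2$ and $2\mid s$ is needed for $x^s-1$ to be two-sided so the quotient is well-defined, the reduction is clean), the coefficient of $x^0$ becomes $\theta(e_{s-1,j})$ and the coefficient of $x^i$ for $i\geq 1$ becomes $\theta(e_{i-1,j})$. Comparing with the definition of $\tau_{\theta,s,l}$, this shows exactly that $\gamma(\tau_{\theta,s,l}(e)) = x\cdot\gamma(e)$, i.e. $\gamma$ intertwines the skew quasi-cyclic shift with left multiplication by $x$.

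With this intertwining identity in hand, the theorem follows formally. If $C$ is a skew quasi-cyclic code, then $\gamma(C)$ is an $R$-submodule of $R_s^l$ (since $\gamma$ is $R$-linear and $C$ is a submodule of $R^n$), and it is closed under multiplication by $x$ by the intertwining identity; since $R_s$ is generated as a ring by $R$ and $x$, it follows that $\gamma(C)$ is closed under multiplication by all of $R_s$, hence is a left $R_s$-submodule of $R_s^l$. Conversely, if $\gamma(C)$ is a left $R_s$-submodule, then $C=\gamma^{-1}(\gamma(C))$ is an $R$-submodule of $R^n$ (as $\gamma$ is an $R$-module isomorphism), and closure of $\gamma(C)$ under multiplication by $x$ gives, via the intertwining identity and the bijectivity of $\gamma$, that $C$ is closed under $\tau_{\theta,s,l}$; hence $C$ is a skew quasi-cyclic code.

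The main obstacle — though it is more a bookkeeping matter than a genuine difficulty — is being careful with the action of $\theta$ when reducing $x\cdot e_j(x)$ modulo $x^s-1$: one must confirm that wrapping the top coefficient around to the constant term produces exactly $\theta(e_{s-1,j})$ and not some higher power of $\theta$, which is where the hypothesis that $m=\mathrm{ord}(\theta)=2$ divides $s$ (ensuring $x^s-1$ is a two-sided ideal and the quotient ring is well-behaved) is used. Once the intertwining identity $\gamma\circ\tau_{\theta,s,l}=(x\cdot{-})\circ\gamma$ is verified, everything else is the standard transport-of-structure argument, entirely parallel to Theorem 40 for skew cyclic codes.
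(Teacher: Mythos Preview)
Your proposal is correct and follows essentially the same approach as the paper. The paper does not include a proof for this particular theorem, but the analogous result for skew quasi-constacyclic codes (where $x^s-1$ is replaced by $x^s-\lambda$) is proved later by exactly the method you describe: the key identity $\Lambda(\nabla_{\theta,\lambda,l}(e))=x\cdot\Lambda(e)$ is asserted, and then closure under the shift is transported to closure under left multiplication by $x$, which together with $R$-linearity yields the $R_s$-submodule structure, and conversely. Your version is in fact more explicit than the paper's, since you actually carry out the coefficient computation for $x\cdot e_j(x)$ and identify where the hypothesis $m\mid s$ enters.
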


A code $C$ is said to be skew constacyclic if $C$ is closed the under the
skew constacyclic shift $\sigma _{\theta ,\lambda }$ from $R^{n}$ to $R^{n}$
defined by $\sigma _{\theta ,\lambda }\left( \left(
c_{0},c_{1},...,c_{n-1}\right) \right) =\left( \theta \left( \lambda
c_{n-1}\right) ,\theta \left( c_{0}\right) ,...,\theta \left( c_{n-2}\right)
\right) .$

Privately, such codes are called skew cyclic and skew negacyclic codes when $%
\lambda $ is $1$ and $-1$, respectively.

\begin{theorem}
A code $C$ of length $n$ over $R$ is skew constacyclic iff the skew
polynomial representation of $C$ is a left ideal in $R\left[ x,\theta \right]
/(x^{n}-\lambda ).$
\end{theorem}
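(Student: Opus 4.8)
The plan is to transport the skew constacyclic shift $\sigma_{\theta,\lambda}$ on $R^{n}$ to the operation of left multiplication by $x$ on $R[x,\theta]/(x^{n}-\lambda)$, mimicking the treatment of skew cyclic codes given above. First I would fix the $R$-linear identification sending a word $c=(c_{0},c_{1},\dots ,c_{n-1})\in R^{n}$ to the residue class of $c(x)=c_{0}+c_{1}x+\cdots +c_{n-1}x^{n-1}$ in $R[x,\theta]/(x^{n}-\lambda)$; this is a bijection since every class has a unique representative of degree $<n$, and it is additive and commutes with left multiplication by the constants of $R$, so it carries $R$-submodules of $R^{n}$ to $R$-submodules of $R[x,\theta]/(x^{n}-\lambda)$.

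The key step is the computation that $x\cdot c(x)$, reduced modulo $x^{n}-\lambda$, is exactly the polynomial attached to $\sigma_{\theta,\lambda}(c)$. Applying the rule $(ax^{i})(bx^{j})=a\,\theta^{i}(b)\,x^{i+j}$ termwise gives
\begin{equation*}
x\cdot c(x)=\theta(c_{0})x+\theta(c_{1})x^{2}+\cdots +\theta(c_{n-2})x^{n-1}+\theta(c_{n-1})x^{n},
\end{equation*}
and replacing $x^{n}$ by $\lambda$ converts the last term into $\theta(c_{n-1})\lambda$; using that $R$ is commutative together with $\theta(\lambda)=\lambda$ (which is what makes $(x^{n}-\lambda)$ two-sided, so the quotient is a genuine ring) this equals $\theta(\lambda c_{n-1})$. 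Hence $x\cdot c(x)$ corresponds to $\big(\theta(\lambda c_{n-1}),\theta(c_{0}),\dots ,\theta(c_{n-2})\big)=\sigma_{\theta,\lambda}(c)$. I expect this wrap-around term to be the only subtle point: one must track where $\theta$ acts and justify that the new leading coefficient emerges as $\theta(\lambda c_{n-1})$ rather than $\lambda\theta(c_{n-1})$.

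Granting this identity, both implications follow as in the skew cyclic case. If $C$ is skew constacyclic, then by part (i) of the definition its image is an $R$-submodule of $R[x,\theta]/(x^{n}-\lambda)$, and by part (ii) together with the identity above it is stable under left multiplication by $x$; being additive and stable under left multiplication by every element of $R$ and by $x$, it is stable under left multiplication by every $\sum a_{i}x^{i}\in R[x,\theta]$, i.e. it is a left ideal. Conversely, a left ideal is in particular closed under left multiplication by the constants of $R$, so its preimage in $R^{n}$ is an $R$-submodule, and it is closed under left multiplication by $x$, so the preimage satisfies $\sigma_{\theta,\lambda}(C)\subseteq C$; hence $C$ is skew constacyclic. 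Finally I would note that taking $\lambda=1$ and $\lambda=-1$ recovers the skew cyclic and skew negacyclic instances discussed earlier.
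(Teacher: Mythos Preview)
The paper states this theorem without proof, so there is no argument in the paper to compare against. Your approach is the standard one and is correct: the $R$-linear bijection $c\mapsto c(x)$ carries $\sigma_{\theta,\lambda}$ to left multiplication by $x$, and from this both directions follow immediately, exactly as in the skew cyclic case (the paper's Theorem on $R_{n}$ being a left $R[x,\theta]$-module and its submodules). Your identification of the one delicate point---that the constant term after reduction is $\theta(c_{n-1})\lambda$, which must be reconciled with the paper's $\theta(\lambda c_{n-1})$ via the hypothesis $\theta(\lambda)=\lambda$---is apt and is precisely the condition needed for $(x^{n}-\lambda)$ to be two-sided so that the quotient is a ring and the phrase ``left ideal'' makes sense. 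You might note that the paper does not state this hypothesis explicitly here (it does impose $m\mid s$ later for skew quasi-constacyclic codes), so it is worth flagging that your argument, and indeed the theorem as written, tacitly requires $\theta(\lambda)=\lambda$; among the listed units of $R$ only those with vanishing $v$-coefficient satisfy this.
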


\section{The Gray Images of Skew Codes Over $R$}

\begin{proposition}
Let $\sigma _{\theta }$ be the skew cyclic shift on $R^{n}$, let $\phi $ be
the Gray map from $R^{n}$ to $Z_{3}^{3n}$ and let $\varphi $ be as in the
preliminaries. Then $\phi \sigma _{\theta }=\rho \varphi \phi $ where $\rho
(x,y,z)=(x,z,y)$ for every $x,y,z\in Z_{3}^{n}$.
\end{proposition}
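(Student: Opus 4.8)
The plan is to verify the identity $\phi\sigma_{\theta} = \rho\varphi\phi$ by applying both sides to an arbitrary element $(r_0,r_1,\dots,r_{n-1})\in R^n$ with $r_i = a_i+vb_i+v^2c_i$ and $a_i,b_i,c_i\in Z_3$, and checking that the resulting vectors in $Z_3^{3n}$ coincide coordinate by coordinate. This mirrors exactly the computation already carried out in Proposition 4 for the identity $\phi\sigma=\varphi\phi$, the only new ingredient being the automorphism $\theta$, which sends $a+vb+v^2c$ to $a+2bv+v^2c$, i.e. it fixes the $a$- and $c$-components and negates the $b$-component (since $2=-1$ in $Z_3$).

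First I would compute the left-hand side. We have $\sigma_{\theta}(r_0,\dots,r_{n-1}) = (\theta(r_{n-1}),\theta(r_0),\dots,\theta(r_{n-2}))$, where $\theta(r_i) = a_i + 2b_i v + c_i v^2$. Applying $\phi$ to this length-$n$ vector over $R$ and recalling $\phi(a+vb+v^2c) = (a, a+b+c, a+2b+c)$, the image is the concatenation of three blocks of length $n$: the first block is $(a_{n-1},a_0,\dots,a_{n-2})$; the second block has $j$-th entry of the form $a + 2b + c$ evaluated at the cyclically shifted indices, i.e. it equals $(a_{n-1}+2b_{n-1}+c_{n-1}, a_0+2b_0+c_0,\dots)$; the third block similarly has entries $a + 4b + c = a + b + c$ (using $4\equiv 1 \pmod 3$), i.e. $(a_{n-1}+b_{n-1}+c_{n-1}, a_0+b_0+c_0,\dots)$. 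So the effect of $\theta$ is precisely to interchange the roles of the second and third Gray-coordinate blocks, after the cyclic shift.

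Next I would compute the right-hand side. Starting from $\phi(r_0,\dots,r_{n-1})$, which by definition is the concatenation of the blocks $(a_0,\dots,a_{n-1})$, $(a_0+b_0+c_0,\dots,a_{n-1}+b_{n-1}+c_{n-1})$, $(a_0+2b_0+c_0,\dots,a_{n-1}+2b_{n-1}+c_{n-1})$, apply $\varphi$ to shift each of the three blocks cyclically, then apply $\rho$ to swap the last two blocks. The result is: first block $(a_{n-1},a_0,\dots,a_{n-2})$; second block $(a_{n-1}+2b_{n-1}+c_{n-1}, a_0+2b_0+c_0,\dots)$; third block $(a_{n-1}+b_{n-1}+c_{n-1}, a_0+b_0+c_0,\dots)$. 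Comparing with the left-hand side computed above, the two agree block by block, which establishes $\phi\sigma_{\theta} = \rho\varphi\phi$.

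There is no real obstacle here; the proof is a direct symbol-pushing verification of the same flavor as Propositions 4 and 6. The one point requiring a moment of care is the bookkeeping of which Gray-coordinate block is affected by $\theta$: one must use $2b \equiv -b$ and $4b\equiv b$ in $Z_3$ to see that $\theta$ acts on the triple $(a, a+b+c, a+2b+c)$ by fixing the first coordinate and transposing the second and third, which is exactly the map $\rho$. Once that observation is in place, matching the two sides is routine.
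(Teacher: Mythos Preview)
Your proposal is correct and follows essentially the same approach as the paper: a direct coordinate-by-coordinate verification, applying $\phi\sigma_{\theta}$ and $\rho\varphi\phi$ to an arbitrary element $(r_0,\dots,r_{n-1})$ with $r_i=a_i+vb_i+v^2c_i$ and comparing the three resulting blocks. Your added remark that $\theta$ acts on the Gray triple $(a,\,a+b+c,\,a+2b+c)$ by transposing the last two entries is exactly the mechanism behind the appearance of $\rho$, and makes the computation slightly more transparent than the paper's purely enumerative check.
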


\begin{proof}
Let $r_{i}=a_{i}+vb_{i}+v^{2}c_{i}$ be the elements of $R$, for $%
i=0,1,....,n-1.$ We have $\sigma _{\theta }\left(
r_{0},r_{1},...,r_{n-1}\right) =\left( \theta (r_{n-1}),\theta
(r_{0}),...,\theta (r_{n-2})\right) .$ If \ we apply $\phi $, we have 
\begin{eqnarray*}
\phi \left( \sigma _{\theta }\left( r_{0},...,r_{n-1}\right) \right) &=&\phi
(\theta (r_{n-1}),\theta (r_{0}),...,\theta (r_{n-2})) \\
&=&(a_{n-1},...,a_{n-2},a_{n-1}+2b_{n-1}+c_{n-1},...,a_{n-2}+2b_{n-2} \\
&&+c_{n-2},a_{n-1}+b_{n-1}+c_{n-1},...,a_{n-2}+b_{n-2}+c_{n-2})
\end{eqnarray*}%
On the other hand, $\phi
(r_{0},...,r_{n-1})=(a_{0},...,a_{n-1},a_{0}+b_{0}+c_{0},...,a_{n-1}+b_{n-1}+c_{n-1},a_{0}+2b_{0}+c_{0},...,a_{n-1}+2b_{n-1}+c_{n-1}) 
$. If we apply $\varphi ,$ we have

$\varphi \left( \phi \left( r_{0},r_{1},...,r_{n-1}\right) \right)
=(a_{n-1},...,a_{n-2},a_{n-1}+b_{n-1}+c_{n-1},...,a_{n-2}+b_{n-2}+c_{n-2},a_{n-1}+2b_{n-1}+c_{n-1},...,a_{n-2}+2b_{n-2}+c_{n-2}) 
$. If we apply $\rho $, we have $\rho (\varphi \left( \phi \left(
r_{0},r_{1},...,r_{n-1}\right) \right)
)=(a_{n-1},...,a_{n-2},a_{n-1}+2b_{n-1}+c_{n-1},...,a_{n-2}+2b_{n-2}+c_{n-2},a_{n-1}+b_{n-1}+c_{n-1},...,a_{n-2}+b_{n-2}+c_{n-2}) 
$. So, we have $\phi \sigma _{\theta }=\rho \varphi \phi $.
\end{proof}

\begin{theorem}
The Gray image a skew cyclic code over $R$ of length $n$ is permutation
equivalent to quasi-cyclic code of index $3$ over $Z_{3}$ with length $3n$.
\end{theorem}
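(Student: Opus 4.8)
The plan is to mimic the proof of Theorem~7 (the Gray image of a quasi-cyclic code over $R$ is a quasi-cyclic code over $Z_3$), but now carrying along the extra permutation $\rho$ that appeared in Proposition~\ref{} when we passed from the ordinary cyclic shift to the skew cyclic shift. The essential identity is already in hand: by the preceding Proposition we have $\phi\sigma_\theta = \rho\varphi\phi$, where $\varphi$ is the index-$3$ quasi-cyclic shift on $Z_3^{3n}$ and $\rho(x,y,z)=(x,z,y)$ simply swaps the second and third blocks of length $n$. So the whole argument is a short diagram chase together with the observation that $\rho$ is a permutation of coordinates that commutes with $\varphi$.

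First I would let $C$ be a skew cyclic code of length $n$ over $R$, so $\sigma_\theta(C)=C$. Applying $\phi$ and using $\phi\sigma_\theta=\rho\varphi\phi$ gives $\rho\varphi(\phi(C))=\phi(\sigma_\theta(C))=\phi(C)$. Since $\rho$ is a coordinate permutation (in fact an involution), applying $\rho$ to both sides yields $\varphi(\phi(C))=\rho(\phi(C))$. Now set $D=\rho(\phi(C))$. Then $\varphi(\phi(C))=D$, and applying $\rho$ once more, $\rho(\varphi(\phi(C)))=\rho(D)=\phi(C)$; but $\rho$ and $\varphi$ commute, since $\varphi$ acts as the same cyclic shift $\sigma$ inside each of the three length-$n$ blocks while $\rho$ merely permutes the blocks among themselves, so $\rho\varphi=\varphi\rho$ on $Z_3^{3n}$. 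Hence $\varphi(\rho(\phi(C)))=\phi(C)$, i.e. $\varphi(D)=\rho(\phi(C))=D$. Therefore $D=\rho(\phi(C))$ is a quasi-cyclic code of index $3$ over $Z_3$ of length $3n$, and $\phi(C)=\rho(D)$ is the image of this quasi-cyclic code under the fixed coordinate permutation $\rho$; that is, $\phi(C)$ is permutation equivalent to a quasi-cyclic code of index $3$. Linearity of $\phi(C)$ (hence of $D$) follows from linearity of $\phi$, already established in Theorem~2.

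I should also record why $\rho$ and $\varphi$ commute, since that is the one point where something must actually be checked rather than quoted: writing a vector of $Z_3^{3n}$ as $(x\mid y\mid z)$ with $x,y,z\in Z_3^n$, we have $\varphi(x\mid y\mid z)=(\sigma(x)\mid\sigma(y)\mid\sigma(z))$ and $\rho(x\mid y\mid z)=(x\mid z\mid y)$, so both $\rho\varphi$ and $\varphi\rho$ send $(x\mid y\mid z)$ to $(\sigma(x)\mid\sigma(z)\mid\sigma(y))$. This is the only ``computation'' and it is immediate. The main (very mild) obstacle is purely bookkeeping: being careful about the direction in which $\rho$ is applied, and noting that $\rho^2=\mathrm{id}$ so that ``permutation equivalent'' is symmetric and it does not matter whether we phrase the conclusion as $\phi(C)$ being a permuted quasi-cyclic code or $\rho(\phi(C))$ being quasi-cyclic. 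A converse statement (if $\rho(\phi(C))$ is quasi-cyclic then $C$ is skew cyclic) follows the same way using injectivity of $\phi$, but the stated theorem only asks for the one direction.
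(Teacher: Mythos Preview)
Your proposal is correct and follows essentially the same route as the paper: use the identity $\phi\sigma_\theta=\rho\varphi\phi$ from the preceding Proposition together with $\sigma_\theta(C)=C$ to obtain $\phi(C)=\rho\varphi(\phi(C))$, and then conclude permutation equivalence to a quasi-cyclic code of index~$3$. The paper's own proof stops at that equality and simply asserts the conclusion; your version is more careful in that you explicitly verify $\rho\varphi=\varphi\rho$ and use it to show that $D=\rho(\phi(C))$ satisfies $\varphi(D)=D$, which is exactly the step the paper leaves implicit.
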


\begin{proof}
Let $C$ be a skew cyclic codes over $S$ of length $n$. That is $\sigma
_{\theta }(C)=C$. If we apply $\phi $, we have $\phi (\sigma _{\theta
}(C))=\phi (C)$. From the Proposition $48$, $\phi (\sigma _{\theta
}(C))=\phi (C)=$ $\rho (\varphi (\phi (C)))$. So, $\phi (C)$ is permutation
equivalent to quasi-cyclic code of index $3$ over $Z_{3}$ with length $3n$.
\end{proof}

\begin{proposition}
Let $\tau _{\theta ,s,l}$ be skew quasi-cyclic shift on $R^{n}$, let $\phi $
be the Gray map from $R^{n}$ to $Z_{3}^{3n}$, let $\Gamma $ be as in the
preliminaries, let $\rho $ be as above. Then $\phi \tau _{\theta ,s,l}=\rho
\Gamma \phi $.
\end{proposition}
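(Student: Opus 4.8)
The plan is to mimic the proof of Proposition~48 verbatim, since $\tau_{\theta,s,l}$ is just the skew quasi-cyclic shift built blockwise from the same automorphism $\theta$ that appeared there. First I would fix a generic codeword $e = (e_{0,0},\dots,e_{0,l-1},e_{1,0},\dots,e_{s-1,0},\dots,e_{s-1,l-1})$ with $e_{i,j} = a_{i,j} + v b_{i,j} + v^2 c_{i,j}$, and compute $\tau_{\theta,s,l}(e)$ explicitly: it cyclically shifts the $s$ blocks (each of length $l$) by one block and applies $\theta$ coordinatewise, so that the $e_{i,j}$-entry becomes $\theta(e_{i-1,j})$ in cyclic index. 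Applying $\phi$ to this and using $\phi(\theta(a+vb+v^2c)) = \phi(a+2bv+v^2c) = (a, a+2b+c, a+b+c)$, I get a vector whose three length-$n$ blocks have the second and third coordinate-blocks swapped relative to what one would get without $\theta$, and with each of the three blocks block-shifted by the map $\mu$ of the preliminaries.

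Next I would compute the other side, $\rho\Gamma\phi(e)$. Starting from $\phi(e) = (a\text{-block}, (a+b+c)\text{-block}, (a+2b+c)\text{-block})$, apply $\Gamma$, which by definition applies $\mu$ to each of the three length-$n$ blocks; then apply $\rho$, which swaps the last two length-$n$ blocks. Comparing entry by entry with $\phi\tau_{\theta,s,l}(e)$ shows the two agree, because swapping-then-block-shifting equals block-shifting-then-swapping ($\mu$ acts within each block and $\rho$ permutes blocks, so they commute), and the $\theta$ is exactly accounted for by the swap $\rho$. Hence $\phi\tau_{\theta,s,l} = \rho\Gamma\phi$ as maps on $R^n$.

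This is entirely a bookkeeping argument; the only real care needed is to keep the triple indexing $(i,j)$ for the skew quasi-cyclic blocks consistent with the single indexing used in the definitions of $\phi$, $\Gamma$ and $\rho$, and to make sure the block-shift map $\mu$ (shifting length-$l$ sub-blocks) is not confused with the coordinate shift $\sigma$. So the main obstacle is purely notational — writing the long explicit vectors without an index error — rather than mathematical; structurally the proof is identical to that of Proposition~48, with $\sigma$ replaced by the blockwise shift and $\varphi$ replaced by $\Gamma$.
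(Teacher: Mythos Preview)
Your proposal is correct and matches the paper's approach exactly: the paper does not write out a proof at all, stating only that it is similar to the proof of the skew cyclic case (Proposition~50 in the paper's numbering), which is precisely the computation you outline with $\sigma$ replaced by the blockwise shift and $\varphi$ replaced by $\Gamma$.
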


\begin{theorem}
The Gray image a skew quasi-cyclic code over $R$ of length $n$ with index $l$
is permutation equivalent to $l$ quasi-cyclic code of index $3$ over $Z_{3}$
with length $3n$.
\end{theorem}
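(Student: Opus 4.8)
The plan is to mimic exactly the argument used for Theorem~49 (the skew cyclic case), replacing the cyclic shift by the quasi-cyclic shift and the map $\varphi$ by the map $\Gamma$. The key ingredient is Proposition~51, which states $\phi\tau_{\theta,s,l}=\rho\Gamma\phi$; this plays the role that Proposition~48 played in the skew cyclic proof. So the structure of the proof is: start with a skew quasi-cyclic code $C$ over $R$ of length $n=sl$ with index $l$, write down what the defining property gives us, apply the Gray map, invoke Proposition~51 to rewrite the left-hand side, and read off that $\phi(C)$ is fixed (up to the permutation $\rho$) by $\Gamma$, which by the definition in the preliminaries means it is an $l$-quasi-cyclic code of index $3$ over $Z_3$ of length $3n$.

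Concretely, first I would recall that $C$ being skew quasi-cyclic of index $l$ means $\tau_{\theta,s,l}(C)=C$. Applying the (bijective, linear) Gray map $\phi$ gives $\phi(\tau_{\theta,s,l}(C))=\phi(C)$. Then I would substitute Proposition~51: $\phi(\tau_{\theta,s,l}(C))=\rho(\Gamma(\phi(C)))$. Combining, $\rho(\Gamma(\phi(C)))=\phi(C)$. Since $\rho$ is the fixed-point-free-free coordinate permutation $\rho(x,y,z)=(x,z,y)$, this says precisely that $\Gamma(\phi(C))$ is the image of $\phi(C)$ under a fixed permutation of the $3n$ coordinates, i.e.\ $\phi(C)$ is permutation equivalent to a code $C'$ with $\Gamma(C')=C'$. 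By the definition given in the preliminaries, such a $C'$ is an $l$-quasi-cyclic code of index $3$ over $Z_3$ of length $3n$, and $\phi(C)$ is permutation equivalent to it. That completes the proof.

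I should also note that Proposition~51 itself (whose statement is given but not proved in the excerpt) is the analogue of Proposition~6 with $\theta$ inserted, and would be checked the same way: take a generic vector $e$ with $e_{i,j}=a_{i,j}+vb_{i,j}+v^2c_{i,j}$, apply $\tau_{\theta,s,l}$ then $\phi$, apply $\rho\Gamma\phi$ directly, and observe the two $3n$-tuples coincide; the only new effect of $\theta$ is to swap the second and third ternary blocks, which is exactly what the extra $\rho$ absorbs. Since the problem lets me assume results stated earlier, I may simply cite Proposition~51.

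The main obstacle here is essentially bookkeeping rather than mathematics: one must be careful that the permutation $\rho$ is genuinely a fixed permutation of coordinates independent of the code, so that "permutation equivalent" is legitimate, and that the block structure used by $\Gamma$ (dividing each of the three length-$n$ blocks into $s$ sub-blocks of length $l$) is respected after the permutation $\rho$, which only permutes the three big blocks among themselves and hence commutes with the sub-block decomposition. Once that is observed, there is no real difficulty; the proof is three lines, exactly parallel to Theorem~49.
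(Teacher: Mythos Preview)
Your proposal is correct and matches the paper's own approach exactly: the paper does not write out a separate proof but simply states that the argument is the same as in the skew cyclic case (the paper's Proposition~50 and Theorem~51), using the identity $\phi\tau_{\theta,s,l}=\rho\Gamma\phi$ in place of $\phi\sigma_\theta=\rho\varphi\phi$. Your reference numbers are each off by two from the paper's numbering, but the logic is identical.
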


\begin{proposition}
Let $\sigma _{\theta ,\lambda }$ be skew constacyclic shift on $R^{n}$, let $%
\phi $ be the Gray map from $R^{n}$ to $Z_{3}^{3n}$, let $\rho $ be as
above. Then $\phi \nu =\rho \phi \sigma _{\theta ,\lambda }$.
\end{proposition}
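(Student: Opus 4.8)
The plan is to mimic the direct coordinate computation used for $\phi\sigma_{\theta}=\rho\varphi\phi$ and for its skew quasi-cyclic analogue, after first isolating the one structural fact that does all the work: the Gray map turns the automorphism $\theta$ into a transposition of the last two Gray coordinates. Concretely, for any $w=a+vb+v^{2}c\in R$ we have $\theta(w)=a+2bv+v^{2}c$, whence $\phi(\theta(w))=(a,\,a+2b+c,\,a+4b+c)=(a,\,a+2b+c,\,a+b+c)$ in $Z_{3}$, which is exactly $\phi(w)=(a,\,a+b+c,\,a+2b+c)$ with its second and third entries interchanged. The first step is to record this, noting that it holds for every element of $R$ with no restriction.

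Second, I would lift it to $R^{n}$. Writing $\Theta\colon R^{n}\to R^{n}$ for the coordinatewise application of $\theta$, and recalling that the extended Gray map $\phi\colon R^{n}\to Z_{3}^{3n}$ arranges all first Gray-components, then all second, then all third into three consecutive length-$n$ blocks, the per-coordinate transposition above becomes the block transposition $\rho(x,y,z)=(x,z,y)$. Hence $\phi\circ\Theta=\rho\circ\phi$ on $R^{n}$.

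Third, I would observe that $\sigma_{\theta,\lambda}=\Theta\circ\nu$ directly from the definitions: $\nu(r_{0},\dots,r_{n-1})=(\lambda r_{n-1},r_{0},\dots,r_{n-2})$, and applying $\theta$ in each slot gives $(\theta(\lambda r_{n-1}),\theta(r_{0}),\dots,\theta(r_{n-2}))=\sigma_{\theta,\lambda}(r_{0},\dots,r_{n-1})$. Combining the three steps, $\phi\sigma_{\theta,\lambda}=\phi\Theta\nu=\rho\phi\nu$; composing with $\rho$ on the left and using $\rho^{2}=\mathrm{id}$ yields $\phi\nu=\rho\phi\sigma_{\theta,\lambda}$, which is the claim. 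If a self-contained argument is preferred, one can instead write out $\phi(\sigma_{\theta,\lambda}(r))$ explicitly, apply $\rho$, and match the result entry by entry against $\phi(\nu(r))$, exactly in the style of the proof of Proposition 48.

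I do not expect a real obstacle. In particular the unit $\lambda$ causes no trouble, since $\lambda r_{n-1}$ is simply one more element of $R$ and the identity $\phi\theta=(\text{transpose last two})\circ\phi$ is valid for every ring element independently of $\lambda$; likewise the cyclic displacement appearing in both $\nu$ and $\sigma_{\theta,\lambda}$ is identical, so it contributes nothing once $\Theta$ has been factored out. The only thing requiring attention is the usual index bookkeeping — keeping straight which of the $3n$ output coordinates lands in which of the three length-$n$ blocks — exactly as in the proofs of the preceding propositions relating $\phi$ to the skew cyclic and skew quasi-cyclic shifts.
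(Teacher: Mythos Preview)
Your argument is correct and matches the paper's own treatment, which simply says the proof is ``similar to the proof of Proposition~50'' (the direct coordinate computation for $\phi\sigma_{\theta}=\rho\varphi\phi$). Your factorization $\sigma_{\theta,\lambda}=\Theta\circ\nu$ together with the elementwise identity $\phi\circ\theta=(\text{swap last two})\circ\phi$ is just a tidy way of organizing that same computation; the only slip is the reference ``Proposition~48,'' which in the paper's numbering is Proposition~50.
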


\begin{theorem}
The Gray image a skew constacyclic code over $R$ of length $n$ is
permutation equivalent to a constacyclic code over $Z_{3}$ with length $3n$.
\end{theorem}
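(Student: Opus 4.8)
The plan is to imitate the template already used for the Gray images of skew cyclic and skew quasi-cyclic codes, with the preceding proposition (the commutation identity $\phi\nu=\rho\phi\sigma_{\theta,\lambda}$) in the role played there by the identities $\phi\sigma_{\theta}=\rho\varphi\phi$ and $\phi\tau_{\theta,s,l}=\rho\Gamma\phi$. So I would begin with a skew constacyclic code $C$ over $R$ of length $n$, i.e. one satisfying $\sigma_{\theta,\lambda}(C)=C$, and apply the Gray map to this invariance to obtain $\phi(\sigma_{\theta,\lambda}(C))=\phi(C)$.

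Next I would invoke the preceding proposition, $\phi\nu=\rho\phi\sigma_{\theta,\lambda}$. Since $\rho$ is the involution interchanging the second and third length-$n$ blocks of a word in $Z_{3}^{3n}$ (so $\rho^{-1}=\rho$), this rearranges to $\phi\sigma_{\theta,\lambda}=\rho\,\phi\nu$, and substituting $\sigma_{\theta,\lambda}(C)=C$ yields $\phi(C)=\rho(\phi(\nu(C)))$, equivalently $\rho(\phi(C))=\phi(\nu(C))$. At this point I would transport the ordinary $\lambda$-constacyclic shift on $R^{n}$ through $\phi$ to a block-wise constacyclic shift $\nu'$ on $Z_{3}^{3n}$ and argue, exactly as in the skew cyclic case, that $\rho(\phi(C))$ is closed under $\nu'$; using that $\rho$ commutes with $\nu'$ (the latter acting identically on each of the three blocks), one then concludes that $\phi(C)$ is invariant under $(\nu')^{2}$, hence—when $n$ is odd, so that $\gcd(2,3n)=1$—under $\nu'$ itself, or else directly that $\rho(\phi(C))$ is $\nu'$-invariant. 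Either way, $\phi(C)$ is permutation equivalent, via the coordinate permutation $\rho$, to a constacyclic code over $Z_{3}$ of length $3n$. The converse needs no separate treatment, since $\sigma_{\theta,\lambda}$ is a bijection of $C$.

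The step I expect to be the main obstacle is the bookkeeping involved in pinning down $\nu'$ and pushing $\rho$ through it. The automorphism $\theta$ sends $v\mapsto 2v$, which under $\phi$ is precisely what exchanges the coordinates $a+b+c$ and $a+2b+c$ and so produces the permutation $\rho$; meanwhile the unit $\lambda=\alpha+\beta v+\gamma v^{2}$, with $\alpha=\pm 1$ forced as in the proof of the constacyclic decomposition theorem, dictates whether the shift in each length-$n$ block is a genuine cyclic ($\alpha=1$) or negacyclic ($\alpha=-1$) shift, so a priori $\phi(C)$ is only constacyclic block by block. Checking that this block code is genuinely permutation equivalent to a single constacyclic code of length $3n$, and keeping the twist coming from $\theta$ from interfering with the twist coming from $\lambda$, is the delicate part; everything else is the same formal manipulation as in the two preceding theorems.
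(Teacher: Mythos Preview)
Your proposal is correct and is exactly the paper's approach: the paper's proof consists of the single sentence ``The proof is similar to the proof of Theorem~51,'' i.e., invoke Proposition~52 ($\phi\nu=\rho\phi\sigma_{\theta,\lambda}$) in place of Proposition~50 and repeat the template verbatim. Your extra care about transporting $\nu$ through $\phi$ to a block-wise shift $\nu'$ on $Z_{3}^{3n}$ and about the interaction of $\rho$ with $\lambda$ goes well beyond what the paper itself supplies.
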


The proof of Proposition $52$, $54$ and Theorem $53$, $55$ are similiar to
the proof Proposition $50$ and Theorem $51$.

\section{Quasi-constacyclic and Skew Quasi-constacyclic Codes over $R$}

Let $M_{s}=R\left[ x\right] /\left\langle x^{s}-\lambda \right\rangle $
where $\lambda $ is a unit element of $R.$

\begin{definition}
A subset $C$ of $R^{n}$ is a called a quasi-constacyclic code of length $%
n=ls $ with index $l$ if
\end{definition}

$i)$ $C$ is a submodule of $R^{n},$

$ii)$ if $e=\left(
e_{0,0},...,e_{0,l-1},e_{1,0},...,e_{1,l-1},...,e_{s-1,0},...,e_{s-1,l-1}%
\right) \in C$ then\newline
$\nabla _{\lambda ,l}\left( e\right) =\left( \lambda e_{s-1,0},...,\lambda
e_{s-1,l-1},e_{0,0},...,e_{0,l-1},e_{1,0},...,e_{1,l-1},...,e_{s-2,0},...,e_{s-2,l-1}\right) \in C. 
$

When $\lambda =1$ the quasi-constacyclic codes are just quasi-cyclic codes.

Since $x^{s}-\lambda =f_{1}(x)f_{2}(x)...f_{r}(x),$ it follows that

\begin{equation*}
(R\left[ x\right] /(x^{s}-\lambda ))^{l}\cong (R\left[ x\right]
/(f_{1}(x)))^{l}\times (R\left[ x\right] /(f_{2}(x)))^{l}\times ...\times (R%
\left[ x\right] /(f_{r}(x)))^{l}.
\end{equation*}

Every submodule of $(R\left[ x\right] /(x^{s}-\lambda ))^{l}$ is a direct
product of submodules of $(R\left[ x\right] /(f_{t}(x)))^{l}$ for $1\leq
t\leq r.$

\begin{theorem}
If $(s,3)=1$ then a quasi-constacyclic code of length $n=sl$ with index $l$
over $R$ is a direct product of linear codes over $R\left[ x\right]
/(f_{t}(x))$ for $1\leq t\leq r.$
\end{theorem}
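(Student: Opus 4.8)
The statement is essentially a structural/CRT decomposition of quasi-constacyclic codes, so the plan is to follow the standard Chinese Remainder Theorem (CRT) / semisimple decomposition argument, exactly in the spirit of the block-display that precedes the theorem. First I would set up the polynomial model: identify a quasi-constacyclic code $C$ of length $n=sl$ with index $l$ over $R$ with a $R[x]/\langle x^{s}-\lambda\rangle$-submodule of $M_{s}^{l}=(R[x]/\langle x^{s}-\lambda\rangle)^{l}$, via the map sending $e=(e_{0,0},\dots,e_{s-1,l-1})$ to the $l$-tuple $(e_{0}(x),\dots,e_{l-1}(x))$ with $e_{j}(x)=\sum_{i=0}^{s-1}e_{i,j}x^{i}$; under this identification the shift $\nabla_{\lambda,l}$ corresponds to multiplication by $x$ coordinatewise, so $C$ being quasi-constacyclic is equivalent to $\gamma(C)$ being closed under multiplication by $x$, i.e. being an $R[x]/\langle x^{s}-\lambda\rangle$-submodule. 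This is the same packaging used for skew quasi-cyclic codes in Theorem 47, so I would just cite/adapt it.

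Next I would invoke the hypothesis $(s,3)=1$. Since $\lambda$ is a unit of $R$ and $3$ is invertible-relevant here, the key point is that $x^{s}-\lambda$ factors into pairwise coprime factors $f_{1}(x)f_{2}(x)\cdots f_{r}(x)$ in $R[x]$ — the coprimality is exactly what $(s,3)=1$ buys us (it forces $x^{s}-\lambda$ to be squarefree / to have pairwise coprime irreducible-type factors over the relevant quotients, so no repeated factors appear). Given pairwise coprimality, CRT for the commutative ring $R[x]$ yields
\begin{equation*}
R[x]/\langle x^{s}-\lambda\rangle \;\cong\; \prod_{t=1}^{r} R[x]/\langle f_{t}(x)\rangle,
\end{equation*}
and taking the $l$-th power (which commutes with finite products) gives the displayed isomorphism of modules $M_{s}^{l}\cong\prod_{t=1}^{r}(R[x]/\langle f_{t}(x)\rangle)^{l}$. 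Then I would use the elementary module-theoretic fact already asserted in the excerpt: over a product ring $A=\prod A_{t}$ with orthogonal central idempotents $\varepsilon_{t}$, every submodule $N$ of a module $\prod N_{t}$ decomposes as $N=\prod (\varepsilon_{t}N)=\prod N_{t}'$ with $N_{t}'$ a submodule of $N_{t}$; applied to $N=\gamma(C)$ this gives $C\cong\prod_{t=1}^{r}C_{t}$ with each $C_{t}$ a linear code, i.e. an $R[x]/\langle f_{t}(x)\rangle$-submodule of $(R[x]/\langle f_{t}(x)\rangle)^{l}$, hence a linear code over $R[x]/\langle f_{t}(x)\rangle$ as claimed.

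The main obstacle is the coprimality claim: one must verify that $(s,3)=1$ genuinely implies $x^{s}-\lambda$ has no repeated factors in $R[x]$ so that CRT applies. The clean way is to reduce to the three maximal ideals of $R$ (the ring is semisimple-by-nilpotent here, with $R$ a principal ideal ring that is a product of three copies of $Z_{3}$ after the Gray-type decomposition $R\cong (1+2v^{2})Z_{3}\oplus(2v+2v^{2})Z_{3}\oplus(v+2v^{2})Z_{3}$): on each $Z_{3}$-component $\lambda$ becomes a unit of $Z_{3}$, i.e. $\pm1$, and $\gcd(x^{s}\mp1,\,sx^{s-1})=1$ in $Z_{3}[x]$ precisely because $s$ is a unit mod $3$, so $x^{s}-\lambda$ is squarefree over $Z_{3}$ in each component and therefore over $R$. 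Once that is granted, the rest of the argument is formal CRT and the projection-onto-idempotents bookkeeping, entirely parallel to the block-display preceding the theorem; I would write those steps briefly and refer back to that display rather than redoing the computation. Hence I expect the write-up to be short: set up $\gamma$, note the shift becomes multiplication by $x$, factor $x^{s}-\lambda$ using $(s,3)=1$, apply CRT, and read off the direct-product decomposition of submodules.
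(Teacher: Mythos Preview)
Your proposal is correct and follows exactly the approach the paper itself takes: the paper does not give a separate proof of this theorem but simply states it as a consequence of the CRT display immediately preceding it, namely $(R[x]/(x^{s}-\lambda))^{l}\cong \prod_{t=1}^{r}(R[x]/(f_{t}(x)))^{l}$ together with the assertion that every submodule of the left-hand side is a direct product of submodules of the factors. Your write-up in fact supplies more justification than the paper does, in particular the derivative argument on each $Z_{3}$-component showing that $(s,3)=1$ forces $x^{s}-\lambda$ to be squarefree so that the factors $f_{t}(x)$ are genuinely pairwise coprime and CRT applies; the paper leaves this implicit.
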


Let $x^{s}-\lambda =f_{1}(x)f_{2}(x)...f_{r}(x)$ be the factorization of $%
x^{s}-\lambda $ into irreducible polynomials.Thus, if $(s,3)=1$ and $C_{i}$
is a linear code of length $l$ over $R\left[ x\right] /(f_{t}(x))$ for $%
1\leq t\leq r,$ then $\tprod\limits_{t=1}^{r}C_{t}$ is a quasi-constacyclic
code of length $n=sl$ over $R$ with $\tprod\limits_{t=1}^{r}\left\vert
C_{t}\right\vert $ codewords.

Define a map $\chi :R^{n}\rightarrow M_{s}^{l}$ by $\chi \left( e\right)
=\left( e_{0}\left( x\right) ,e_{1}\left( x\right) ,...,e_{l-1}\left(
x\right) \right) $ where $e_{j}\left( x\right)
=\sum_{i=o}^{s-1}e_{ij}x^{i}\in M_{s}$, $j=0,1,...,l-1.$

\begin{lemma}
Let $\chi \left( C\right) $ denote the image of $C$ under $\chi .$ The map $%
\chi $ induces a one to one correspondence between quasi-constacyclic codes
over $R$ of length $n$ with index $l$ and linear codes over $M_{s}$ of
length $l.$
\end{lemma}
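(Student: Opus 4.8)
The plan is to exhibit $\chi$ as an $R$-module isomorphism $R^{n}\to M_{s}^{l}$ under which the quasi-constacyclic shift $\nabla_{\lambda,l}$ becomes componentwise multiplication by $x$, and then to note that an $R$-submodule of $M_{s}^{l}$ that is stable under multiplication by $x$ is exactly an $M_{s}$-submodule, i.e.\ a linear code over $M_{s}$ of length $l$. The correspondence of the Lemma is then simply the restriction of $\chi$ (and of $\chi^{-1}$) to these two families.

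First I would check that $\chi$ is a bijection. Writing $e\in R^{n}$ blockwise as $e=(e_{0,0},\dots,e_{s-1,l-1})$ with $e_{i,j}\in R$, the map sends $e$ to the $l$-tuple whose $j$-th entry is $e_{j}(x)=\sum_{i=0}^{s-1}e_{i,j}x^{i}$. Since every residue class in $M_{s}=R[x]/\langle x^{s}-\lambda\rangle$ has a unique representative of degree less than $s$, the assignment $(e_{0,j},\dots,e_{s-1,j})\mapsto e_{j}(x)$ is a bijection $R^{s}\to M_{s}$, and taking $l$ independent copies shows $\chi$ is a bijection $R^{n}\to M_{s}^{l}$. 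It is visibly additive and $R$-linear, where $R$ acts on $M_{s}^{l}$ through the natural (injective, since $x^{s}-\lambda$ is monic of positive degree) ring homomorphism $R\hookrightarrow M_{s}$; hence $\chi$ is an $R$-module isomorphism.

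Next I would compute the effect of $\nabla_{\lambda,l}$ on the $j$-th block. In $M_{s}$ one has $x\,e_{j}(x)=\sum_{i=0}^{s-1}e_{i,j}x^{i+1}=\lambda e_{s-1,j}+e_{0,j}x+\dots+e_{s-2,j}x^{s-1}$, using $x^{s}=\lambda$, so the coefficient vector of $x\,e_{j}(x)$ is $(\lambda e_{s-1,j},e_{0,j},\dots,e_{s-2,j})$ --- precisely the $j$-th block of $\nabla_{\lambda,l}(e)$. Therefore $\chi(\nabla_{\lambda,l}(e))=x\cdot\chi(e)$ for all $e\in R^{n}$, with $x$ acting diagonally on $M_{s}^{l}$; equivalently $\chi^{-1}(x\cdot d)=\nabla_{\lambda,l}(\chi^{-1}(d))$ for all $d\in M_{s}^{l}$.

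Finally I would assemble the statement. If $C\subseteq R^{n}$ is quasi-constacyclic, then $\chi(C)$ is an $R$-submodule of $M_{s}^{l}$ (image of a submodule under a module isomorphism), and by the previous step it is stable under multiplication by $x$; since every element of $M_{s}$ is an $R$-linear combination of $1,x,\dots,x^{s-1}$, stability under $R$ and under $x$ forces stability under all of $M_{s}$, so $\chi(C)$ is a linear code over $M_{s}$ of length $l$. Conversely, if $D\subseteq M_{s}^{l}$ is an $M_{s}$-submodule, then $\chi^{-1}(D)$ is an $R$-submodule of $R^{n}$ which, by the displayed identity, is closed under $\nabla_{\lambda,l}$, hence quasi-constacyclic. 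Thus $\chi$ and $\chi^{-1}$ restrict to mutually inverse maps between the two classes, which is the desired one-to-one correspondence. The only step that needs a moment's care is this last equivalence --- that an $R$-submodule closed under multiplication by $x$ is the same thing as an $M_{s}$-submodule --- and even this is immediate once one records the spanning set $1,x,\dots,x^{s-1}$ of $M_{s}$ over $R$; everything else is the bookkeeping of identifying coefficient vectors with polynomials modulo $x^{s}-\lambda$.
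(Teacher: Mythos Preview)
Your proof is correct. The paper states this lemma without proof; the closest thing it offers is the proof of the skew analogue (the theorem that $C\subseteq R^{n}$ is skew quasi-constacyclic iff $\Lambda(C)$ is a left $M_{\theta,s}$-submodule of $M_{\theta,s}^{l}$), and that argument is precisely yours specialized to $\theta=\mathrm{id}$: establish that the coefficient-to-polynomial map is an $R$-module isomorphism, check that the shift $\nabla_{\lambda,l}$ corresponds to multiplication by $x$, and conclude that shift-invariant $R$-submodules correspond to $M_{s}$-submodules. Your version is actually more explicit than the paper's, in particular in spelling out why $R$-stability together with $x$-stability forces $M_{s}$-stability via the spanning set $1,x,\dots,x^{s-1}$.
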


We define a conjugation map on $M_{s}$ as one that acts as the identity on
the elements of $R$ and that sends $x$ to $x^{-1}=x^{s-1}$, and extended
linearly.

We define on $R^{n=sl}$ the usual Euclidean inner product for

\begin{equation*}
e=\left(
e_{0,0},...,e_{0,l-1},e_{1,0},...,e_{1,l-1},...,e_{s-1,0},...,e_{s-1,l-1}%
\right)
\end{equation*}

and

\begin{equation*}
c=\left(
c_{0,0},...,c_{0,l-1},c_{1,0},...,c_{1,l-1},...,c_{s-1,0},...,c_{s-1,l-1}%
\right)
\end{equation*}%
we define $e.c=\sum_{i=0}^{s-1}\sum_{j=0}^{l-1}e_{ij}c_{ij}.$

On $M_{s}^{l}$, we define the Hermitian inner product \ for $a(x)=\left(
a_{0}(x),a_{1}(x),...,a_{l-1}(x)\right) $ and $b(x)=\left(
b_{0}(x),b_{1}(x),...,b_{l-1}(x)\right) ,$

\begin{equation*}
\left\langle a,b\right\rangle =\sum_{j=0}^{l-1}a_{j}(x)\overline{b_{j}(x)}.
\end{equation*}

\begin{theorem}
Let $e,c\in R^{n}.$ Then $\left( \nabla _{\lambda ,l}^{k}\left( e\right)
\right) .c=0$ for all $k=0,...,s-1$ iff $\left\langle \chi \left( e\right)
,\chi \left( c\right) \right\rangle =0.$
\end{theorem}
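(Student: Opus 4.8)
The plan is to expand both sides of the claimed equivalence in coordinates and match the coefficients term by term. First I would fix notation: write $e_i(x) = \sum_{i'=0}^{s-1} e_{i'j}x^{i'}$ and $c_j(x) = \sum_{i'=0}^{s-1} c_{i'j}x^{i'}$ for each column index $j$, so that $\chi(e) = (e_0(x),\dots,e_{l-1}(x))$ and similarly for $c$. The key combinatorial observation is that applying the quasi-constacyclic shift $\nabla_{\lambda,l}$ once to $e$ cyclically rotates the blocks $e_{0,\cdot},\dots,e_{s-1,\cdot}$ by one position while multiplying the block that wraps around by $\lambda$; iterating $k$ times rotates by $k$ positions with the appropriate powers of $\lambda$ on the wrapped blocks. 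Concretely, the $i$-th block of $\nabla_{\lambda,l}^{k}(e)$ is $\lambda^{\varepsilon} e_{(i-k)\bmod s,\cdot}$ where the exponent $\varepsilon$ records how many times that block has been cycled past position $0$.

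Next I would compute the Euclidean inner product $\bigl(\nabla_{\lambda,l}^{k}(e)\bigr).c = \sum_{i=0}^{s-1}\sum_{j=0}^{l-1} \lambda^{\varepsilon(i,k)} e_{(i-k)\bmod s,j}\, c_{i,j}$. Reindexing the outer sum by $i' = (i-k)\bmod s$ turns this into $\sum_{j=0}^{l-1}\sum_{i'=0}^{s-1} e_{i',j}\, c_{(i'+k)\bmod s,j}\,\lambda^{(\text{wrap count})}$, which one recognizes as the coefficient of $x^{k}$ (equivalently $x^{-k}$, up to the conjugation convention) in $\sum_{j} e_j(x)\overline{c_j(x)}$ computed in $M_s = R[x]/\langle x^s - \lambda\rangle$. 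The point is that in $M_s$ the relation $x^s = \lambda$ forces exactly the powers of $\lambda$ that the constacyclic shift introduces, so the coefficient of $x^k$ in $\langle \chi(e),\chi(c)\rangle$ is precisely $\bigl(\nabla_{\lambda,l}^{k}(e)\bigr).c$. Since $\langle \chi(e),\chi(c)\rangle$ is a polynomial in $M_s$ of degree at most $s-1$, it is zero if and only if all its coefficients — i.e. all the products $\bigl(\nabla_{\lambda,l}^{k}(e)\bigr).c$ for $k=0,\dots,s-1$ — vanish. That gives both directions of the equivalence at once.

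I expect the main obstacle to be bookkeeping the conjugation convention on $M_s$ correctly against the shift direction: the conjugation sends $x \mapsto x^{-1} = x^{s-1}$, so $\overline{c_j(x)} = \sum_{i'} c_{i',j} x^{s-i'}$ (with $x^s = \lambda$ handled in $M_s$), and one must verify that multiplying $e_j(x)$ by this and collecting the coefficient of $x^k$ reproduces exactly the shift $\nabla_{\lambda,l}^{k}$ rather than $\nabla_{\lambda,l}^{s-k}$ or the shift of $c$ instead of $e$. I would pin this down by checking the cases $k=0$ and $k=1$ explicitly before asserting the general pattern. The only other thing to be careful about is that $\lambda$ is a unit (so powers of $\lambda$ are invertible and the reindexing is legitimate) and that the arithmetic takes place over the commutative ring $R$, which it does, so no noncommutativity issues arise here. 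Once the coefficient-matching identity $\bigl[x^k\bigr]\langle \chi(e),\chi(c)\rangle = \bigl(\nabla_{\lambda,l}^{k}(e)\bigr).c$ is established for all $k$, the theorem follows immediately.
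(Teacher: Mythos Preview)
Your approach is correct and is essentially the standard coefficient-matching argument for this type of result. The paper does not actually supply a proof of this particular theorem (the quasi-constacyclic case), but it does prove the skew analogue a few results later, and there the argument is exactly what you outline: expand $\sum_j e_j(x)\,\overline{c_j(x)}$ in $M_s$, collect the coefficient of $x^h$, and identify it with the Euclidean product of a shifted $e$ against $c$, so that vanishing of the Hermitian product is equivalent to vanishing of all $s$ shifted Euclidean products. Your plan specialises that computation to the case $\theta = \mathrm{id}$, and your caution about pinning down whether the coefficient of $x^k$ corresponds to $\nabla_{\lambda,l}^{k}$ or $\nabla_{\lambda,l}^{s-k}$ is well placed --- the paper's own proof in the skew case in fact lands on $\nabla^{s-h}$ before reindexing, so do check the small cases as you propose. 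One further wrinkle worth noting: the paper's conjugation sends $x\mapsto x^{s-1}$ and calls this $x^{-1}$, but in $R[x]/\langle x^s-\lambda\rangle$ one has $x\cdot x^{s-1}=\lambda$, not $1$; this does not break the argument (the powers of $\lambda$ still line up with the constacyclic wrap), but be aware that the notation $x^{-1}$ is being used loosely.
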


\begin{corollary}
Let $C$ be a quasi-constacyclic code of length $sl$ with index $l$ over $R$
and $\chi \left( C\right) $ be its image in $M_{s}^{l}$ under $\chi .$ Then $%
\chi \left( C\right) ^{\bot }=\chi \left( C^{\bot }\right) ,$ where the dual
in $R^{sl}$ is taken with respect to the Euclidean inner product, while the
dual in $M_{s}^{l}$ is taken with respect to the Hermitian inner product.
The dual of a quasi-constacyclic code of length $sl$ with index $l$ over $R$
is a quasi-constacyclic code of length $sl$ with index $l$ over $.$
\end{corollary}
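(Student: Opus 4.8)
The plan is to deduce this corollary directly from Theorem 62 together with Lemma 59 and a short duality argument. First I would unwind the statement of Theorem 62: for $e,c\in R^{n}$, the condition $\left(\nabla_{\lambda,l}^{k}(e)\right).c=0$ for all $k=0,\dots,s-1$ is equivalent to $\langle\chi(e),\chi(c)\rangle=0$. Fix $c\in C$. If $e\in C^{\perp}$ (Euclidean dual in $R^{sl}$), then since $C$ is quasi-constacyclic we have $\nabla_{\lambda,l}^{k}(c)\in C$ for every $k$, hence $e.\nabla_{\lambda,l}^{k}(c)=0$. Because the Euclidean inner product on $R^{n}$ is symmetric, $e.\nabla_{\lambda,l}^{k}(c)=\nabla_{\lambda,l}^{k}(c).e$... wait, Theorem 62 is stated with the shift applied to the \emph{first} argument; so the correct move is to run it with the roles reversed: $\left(\nabla_{\lambda,l}^{k}(c)\right).e=0$ for all $k$, which by Theorem 62 gives $\langle\chi(c),\chi(e)\rangle=0$, i.e. $\chi(e)\in\chi(C)^{\perp_{H}}$ since $c$ was arbitrary in $C$. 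This shows $\chi(C^{\perp})\subseteq\chi(C)^{\perp_{H}}$.

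For the reverse inclusion I would use a cardinality (or rank) count: $\chi$ is a bijection (Lemma 59), so $|\chi(C^{\perp})|=|C^{\perp}|$. Over the finite ring $R$ one has $|C|\cdot|C^{\perp}|=|R|^{n}$ for linear codes with respect to the Euclidean product (this follows from $R$ being a finite Frobenius ring, or can be checked directly from the decomposition $C=(1+2v^{2})C_{1}\oplus(2v+2v^{2})C_{2}\oplus(v+2v^{2})C_{3}$ and Proposition 19); likewise, the Hermitian dual on $M_{s}^{l}$ satisfies $|\chi(C)|\cdot|\chi(C)^{\perp_{H}}|=|M_{s}|^{l}=|R|^{sl}=|R|^{n}$, using that the conjugation map $x\mapsto x^{s-1}$ is a module automorphism of $M_{s}$ so the Hermitian form is nondegenerate. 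Combining, $|\chi(C^{\perp})|=|C^{\perp}|=|R|^{n}/|C|=|R|^{n}/|\chi(C)|=|\chi(C)^{\perp_{H}}|$, and together with the inclusion $\chi(C^{\perp})\subseteq\chi(C)^{\perp_{H}}$ this forces $\chi(C^{\perp})=\chi(C)^{\perp_{H}}$.

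Finally, for the last sentence — that $C^{\perp}$ is again quasi-constacyclic of length $sl$ with index $l$ — I would argue as follows. By Lemma 59, a subset of $R^{n}$ is quasi-constacyclic of index $l$ iff its $\chi$-image is a linear (hence $M_{s}$-submodule, by the proof of Lemma 59) code in $M_{s}^{l}$; equivalently iff $\chi(C)$ is closed under multiplication by $x$. Since $\chi(C^{\perp})=\chi(C)^{\perp_{H}}$, it suffices to check that the Hermitian dual of an $M_{s}$-submodule of $M_{s}^{l}$ is again an $M_{s}$-submodule. For $a\in\chi(C)^{\perp_{H}}$ and any $b\in\chi(C)$, using $\overline{x}=x^{-1}$ and $x^{-1}b\in\chi(C)$ one gets $\langle x\,a,b\rangle=\sum_{j}x a_{j}\overline{b_{j}}=x\sum_{j}a_{j}\overline{x^{-1}b_{j}}=x\langle a,x^{-1}b\rangle=0$, so $x\,a\in\chi(C)^{\perp_{H}}$; hence $\chi(C^{\perp})$ is an $M_{s}$-submodule and $C^{\perp}$ is quasi-constacyclic by Lemma 59.

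The main obstacle is getting the bookkeeping of Theorem 62 right — specifically making sure the shift is applied to the correct argument when feeding the orthogonality relations $e.\nabla_{\lambda,l}^{k}(c)=0$ into it, and being careful that $\nabla_{\lambda,l}^{k}(c)\in C$ for \emph{all} $k$ (which is immediate since $C$ is closed under $\nabla_{\lambda,l}$, hence under all its powers). The cardinality identity $|C|\,|C^{\perp}|=|R|^{n}$ is standard for finite commutative (Frobenius) rings, but if one wants to avoid invoking it, it can be replaced by a direct count through the $C_{1},C_{2},C_{3}$ decomposition, which adds length but no real difficulty.
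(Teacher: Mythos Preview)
The paper states this corollary without proof, treating it as an immediate consequence of the preceding Theorem (your Theorem 62) and Lemma 59; your proposal is precisely the standard way to unpack that implication, so the overall strategy is correct and aligned with what the paper intends.

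Two small points worth tightening. First, in the inclusion step you obtain $\langle\chi(c),\chi(e)\rangle=0$ but need $\langle\chi(e),\chi(c)\rangle=0$; this is fine because the conjugation on $M_{s}$ is an involution and $\overline{\langle a,b\rangle}=\langle b,a\rangle$, but you should say so explicitly. Second, in your last computation the identity you wrote, $\sum_{j}x a_{j}\overline{b_{j}}=x\sum_{j}a_{j}\overline{x^{-1}b_{j}}$, is off by a factor of $x$ (since $\overline{x^{-1}b_{j}}=x\,\overline{b_{j}}$); the cleaner argument is simply that the Hermitian form is $M_{s}$-linear in the first slot, so $\langle xa,b\rangle=x\langle a,b\rangle=0$ for $b\in\chi(C)$, which already shows $\chi(C)^{\perp_{H}}$ is an $M_{s}$-submodule. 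With these fixes your argument is complete.
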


From [27] we get the following results.

\begin{theorem}
Let C be a quasi-constacyclic code of length n=sl with index l over R. Let $%
C^{\bot }$ is the dual of C. If $C=C_{1}\oplus C_{2}\oplus ...\oplus C_{r}$
then $C^{\bot }=C_{1}^{\bot }\oplus C_{2}^{\bot }\oplus ...\oplus
C_{r}^{\bot }.$
\end{theorem}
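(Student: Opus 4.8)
The plan is to reduce the statement to the corresponding decomposition result over the individual factor rings, using the Chinese-remainder-type splitting already displayed in the excerpt. Recall that since $x^{s}-\lambda = f_{1}(x)f_{2}(x)\cdots f_{r}(x)$, we have the isomorphism
\begin{equation*}
M_{s}^{l} \cong \prod_{t=1}^{r}\bigl(R[x]/(f_{t}(x))\bigr)^{l},
\end{equation*}
and by Theorem~57 (together with Lemma~59) a quasi-constacyclic code $C$ of length $n=sl$ with index $l$ over $R$ corresponds, under $\chi$, to a linear code over $M_{s}$ of length $l$, which in turn decomposes as $C = C_{1}\oplus C_{2}\oplus\cdots\oplus C_{r}$ with each $C_{t}$ a linear code over $R[x]/(f_{t}(x))$ of length $l$. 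So the content of the theorem is entirely about how the Hermitian dual on $M_{s}^{l}$ interacts with this product decomposition.

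First I would pin down how the conjugation map $\bar{\,\cdot\,}$ on $M_{s}$ (identity on $R$, $x\mapsto x^{-1}=x^{s-1}$) acts on the factors. The key observation is that conjugation permutes the irreducible factors $f_{t}(x)$ of $x^{s}-\lambda$: if $f(x)$ divides $x^{s}-\lambda$, then its ``conjugate'' $\bar{f}(x) := x^{\deg f}f(x^{-1})$ (up to a unit) also divides $x^{s}-\bar{\lambda}$, and since we are working over $R^{sl}$ with the Euclidean inner product matched to the Hermitian inner product via Theorem~60 and Corollary~61, the bookkeeping is consistent. In the setting borrowed from [27] one checks that conjugation either fixes each component $R[x]/(f_{t}(x))$ or swaps a component with another; in either case the Hermitian pairing $\langle a,b\rangle = \sum_{j}a_{j}(x)\overline{b_{j}(x)}$ restricted to the relevant factor(s) is a nondegenerate pairing, and the dual of a subproduct is the product of the duals of the pieces. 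This gives $C^{\bot} = C_{1}^{\bot}\oplus C_{2}^{\bot}\oplus\cdots\oplus C_{r}^{\bot}$, where $C_{t}^{\bot}$ is the dual of $C_{t}$ computed in $\bigl(R[x]/(f_{t(x)})\bigr)^{l}$ (possibly paired with the conjugate factor), exactly as asserted. By Corollary~61 this $C^{\bot}$ is again a quasi-constacyclic code of the same length and index over $R$, so the statement is internally consistent.

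The steps, in order, would be: (1) invoke $\chi$ and Theorem~57 to write $C = \bigoplus_{t} C_{t}$ under the product decomposition of $M_{s}^{l}$; (2) by Corollary~61, identify $C^{\bot}$ (Euclidean dual in $R^{sl}$) with the Hermitian dual $\chi(C)^{\bot}$ in $M_{s}^{l}$; (3) observe that the Hermitian inner product is ``block-diagonal'' with respect to the decomposition $M_{s}^{l} \cong \prod_{t}(R[x]/(f_{t}(x)))^{l}$ in the sense that the pairing between the $t$-th block and the $t'$-th block vanishes unless the $t'$-th factor is the conjugate of the $t$-th; (4) conclude that taking duals commutes with the direct sum, i.e.\ $\chi(C)^{\bot} = \bigoplus_{t} C_{t}^{\bot}$; and (5) translate back via $\chi^{-1}$.

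The main obstacle is step~(3): making precise the claim that conjugation respects the factorization and that the Hermitian pairing is correspondingly block-orthogonal. One must show that for $t\neq t'$ with $f_{t'}$ not the conjugate of $f_{t}$, and any $a$ supported in the $t$-th factor and $b$ supported in the $t'$-th factor, $\langle a,b\rangle = 0$ in $M_{s}$ — equivalently $a_{j}(x)\overline{b_{j}(x)} \equiv 0$, which follows because $\overline{b_{j}(x)}$ lies in the ideal generated by $\bar{f}_{t'}(x)$ while $a_{j}(x)$ is a multiple of the complementary idempotent supported away from $\bar{f}_{t'}$. Getting the idempotent/annihilator bookkeeping right (and handling the self-conjugate factors, where the argument is a genuine nondegeneracy statement rather than a vanishing one) is the only delicate part; everything else is formal manipulation of the isomorphisms already established, and the paper explicitly defers to [27] for the underlying ring-theoretic facts, so I would cite [27] for the orthogonality of distinct blocks and then assemble the conclusion as above.
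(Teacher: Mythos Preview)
Your proposal is correct, and in fact it goes well beyond what the paper does: the paper gives no proof at all for this theorem, simply prefacing it with ``From [27] we get the following results'' and stating it without argument. Your outline---CRT decomposition of $M_{s}^{l}$, identification of the Euclidean and Hermitian duals via Corollary~61, block-orthogonality of the Hermitian pairing with respect to the factorization (handling self-conjugate and conjugate-pair factors separately), and reassembly---is exactly the argument one finds in Ling--Sol\'e [27], so your approach and the paper's implicit approach coincide; you have just made explicit what the paper leaves to the citation.
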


\begin{theorem}
Let $C=C_{1}\oplus C_{2}\oplus ...\oplus C_{r}$ be a quasi-constacyclic code
of length n=sl with index l over R where $C_{t}$ is a free linear code of
length l with rank $k_{t}$ over $R\left[ x\right] /(f_{t}(x))$for $1\leq
t\leq r.$ Then C is a $\kappa $-generator quasi-constacyclic code and $%
C^{\bot }$ is an $(l-\kappa ^{\prime })$-generator quasi-constacyclic code
where $\kappa =\max_{t}(k_{t})$ and $\kappa ^{\prime }=\min_{t}(k_{t}).$
\end{theorem}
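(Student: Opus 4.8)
The plan is to push the statement onto the module side via the correspondence $\chi$ of the Lemma above and then argue componentwise through the Chinese Remainder splitting. Recall that $\chi$ identifies $C$ with a linear code $\chi(C)$ over $M_s=R[x]/(x^s-\lambda)$ of length $l$, and that $C$ being a $\kappa$-generator quasi-constacyclic code means precisely that $\chi(C)$ is generated by $\kappa$ elements as an $M_s$-module. Since $x^s-\lambda=f_1(x)\cdots f_r(x)$, we have $M_s^l\cong\prod_{t=1}^r (R[x]/(f_t(x)))^l$, and under this isomorphism $\chi(C)$ corresponds to $C_1\times\cdots\times C_r$.

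First I would count the generators of $C$ itself. Each $C_t$ is free of rank $k_t$ over $R[x]/(f_t(x))$, so it has a basis $\{b_{t,1},\dots,b_{t,k_t}\}$ and, being free of rank $k_t$, admits no generating set of fewer than $k_t$ elements (by Nakayama after localizing at a maximal ideal). Put $\kappa=\max_t k_t$ and pad each basis by zeros, $b_{t,j}:=0$ for $k_t<j\le\kappa$; the enlarged set still generates $C_t$. Recombining across components via CRT, the $\kappa$ vectors $g_j=(b_{1,j},\dots,b_{r,j})$, $1\le j\le\kappa$, generate $\prod_t C_t\cong\chi(C)$, so $C$ is at most a $\kappa$-generator quasi-constacyclic code. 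For the matching lower bound, fix $t_0$ with $k_{t_0}=\kappa$: the projection $M_s^l\to (R[x]/(f_{t_0}(x)))^l$ sends any generating set of $\chi(C)$ onto a generating set of $C_{t_0}$, which cannot have fewer than $\kappa$ elements. Hence $C$ is exactly a $\kappa$-generator quasi-constacyclic code.

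For the dual I would invoke the preceding Theorem to write $C^\perp=C_1^\perp\oplus\cdots\oplus C_r^\perp$, and the earlier Corollary to know that $C^\perp$ is again quasi-constacyclic of length $sl$ and index $l$, so the same counting applies to it. The one new ingredient needed is the rank of each $C_t^\perp$: since $(s,3)=1$ and $R$ is a product of copies of the field $Z_3$, the ring $R[x]/(f_t(x))$ is a finite product of finite chain rings (in particular a finite Frobenius ring), over which a free submodule of a free module is a direct summand — lift a basis of its residue-field image at each local factor. Consequently the dual $C_t^\perp$ of the rank-$k_t$ free submodule $C_t\subseteq (R[x]/(f_t(x)))^l$ is itself free of rank $l-k_t$. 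Applying the first paragraph's argument to $C^\perp\cong\prod_t C_t^\perp$ gives that $C^\perp$ is a $\max_t(l-k_t)$-generator quasi-constacyclic code, and $\max_t(l-k_t)=l-\min_t k_t=l-\kappa'$.

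The main obstacle is precisely this structural input about $C_t^\perp$. One must check that $R[x]/(f_t(x))$ is tame enough for orthogonal complements of free submodules to be free of complementary rank, and also that the Hermitian inner product on $M_s^l$ is compatible with the CRT decomposition — the conjugation $x\mapsto x^{s-1}$ permutes the factors $f_t$, pairing the block of $f_t$ with the block of $f_t^{*}$ — so that the $C_t^\perp$ appearing in the cited decomposition are genuinely duals of rank-$k_t$ free modules inside rank-$l$ free modules over such rings. These are exactly the facts supplied by [27]; once they are granted, the generator counts follow from the elementary padding-and-projection bookkeeping above.
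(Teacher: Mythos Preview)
The paper does not actually supply a proof of this theorem: it is one of two results stated immediately after the sentence ``From [27] we get the following results,'' and no argument is given beyond that citation. Your proposal is correct and is precisely the Ling--Sol\'e argument that [27] carries out --- CRT-split $M_s^l$, pad component bases to $\kappa$ generators for the upper bound, project onto the component with $k_{t_0}=\kappa$ for the lower bound, and repeat for $C^\perp$ after checking that each $C_t^\perp$ is free of rank $l-k_t$ --- so you have essentially reconstructed the proof the paper defers to.
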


Let $M_{\theta ,s}=R\left[ x,\theta \right] /\left\langle x^{s}-\lambda
\right\rangle $ where $\lambda $ is a unit element of $R.$ Let $\theta $ be
an automorphism of $R$ with $\left\vert \left\langle \theta \right\rangle
\right\vert =m=2.$

\begin{definition}
A subset $C$ of $R^{n}$ is a called a skew quasi-constacyclic code of length 
$n=ls,m|s,$ with index $l$ if
\end{definition}

$i)$ $C$ is a submodule of $R^{n},$

$ii)$ if $e=\left(
e_{0,0},...,e_{0,l-1},e_{1,0},...,e_{1,l-1},...,e_{s-1,0},...,e_{s-1,l-1}%
\right) \in C$ then\newline
$\nabla _{\theta ,\lambda ,l}\left( e\right) =(\theta \left( \lambda
e_{s-1,0}\right) ,...,\theta \left( \lambda e_{s-1,l-1}\right) ,\theta
\left( e_{0,0}\right) ,...,\theta \left( e_{0,l-1}\right) ,\theta \left(
e_{1,0}\right) ,...,$\newline
$\theta \left( e_{1,l-1}\right) ,...,\theta \left( e_{s-2,0}\right)
,...,\theta \left( e_{s-2,l-1}\right) )\in C.$

When $\lambda =1$ the skew quasi-constacyclic codes are just skew
quasi-cyclic codes.

The ring $M_{\theta ,s}^{l}$ is a left $M_{\theta ,s}$ module where we
define multiplication from left by $%
f(x)(g_{1}(x),...,g_{l}(x))=(f(x)g_{1}(x),...f(x)g_{l}(x)).$

Define a map $\Lambda :R^{n}\rightarrow M_{\theta ,s}^{l}$ by $\Lambda
\left( e\right) =\left( e_{0}\left( x\right) ,e_{1}\left( x\right)
,...,e_{l-1}\left( x\right) \right) $ where $e_{j}\left( x\right)
=\sum_{i=o}^{s-1}e_{ij}x^{i}\in M_{\theta ,s}$, $j=0,1,...,l-1.$

\begin{lemma}
Let $\Lambda \left( C\right) $ denote the image of $C$ under $\Lambda .$ The
map $\Lambda $ induces a one to one correspondence between
quasi-constacyclic codes over $R$ of length $n$ with index $l$ and linear
codes over $M_{\theta ,s}$ of length $l.$

\begin{theorem}
A subset $C$ of $R^{n}$ is a skew quasi-constacyclic code of length $n=ls$
with index $l$ iff is a left submodule of the ring $M_{\theta ,s}^{l}.$

\begin{proof}
Let $C$ be a skew quasi-constacyclic code of index $l$ over $R.$Suppose that 
$\Lambda \left( C\right) $ forms a submodule of $M_{\theta ,s}^{l}$. $%
\Lambda \left( C\right) $ is closed under addition and scalar
multiplication. Let $\Lambda \left( e\right) =\left( e_{0}\left( x\right)
,e_{1}\left( x\right) ,...,e_{l-1}\left( x\right) \right) \in \Lambda \left(
C\right) $ for $e=\left(
e_{0,0},...,e_{0,l-1},e_{1,0},...,e_{1,l-1},...,e_{s-1,0},...,e_{s-1,l-1}%
\right) \in C.$ Then $x\Lambda \left( e\right) \in \Lambda \left( C\right) .$
By linearity it follows that $r\left( x\right) \Lambda \left( e\right) \in
\Lambda \left( C\right) $ for any $r(x)\in M_{\theta ,s}.$ Therefore, $%
\Lambda \left( C\right) $ is a left module of $M_{\theta ,s}^{l}.$

Conversely, suppose $E$ is an $M_{\theta ,s}$ left submodule of $M_{\theta
,s}^{l}.$ Let $C=\Lambda ^{-1}\left( E\right) =\left\{ e\in R^{n}:\Lambda
\left( e\right) \in E\right\} .$ We claim that $C$ is a skew
quasi-constacyclic code of $R.$ Since $\Lambda $ is a isomorphism, $C$ is a
linear code of length $n$ over $R.$ Let $e=\left(
e_{0,0},...,e_{0,l-1},e_{1,0},...,e_{1,l-1},...,e_{s-1,0},...,e_{s-1,l-1}%
\right) \in C.$ Then $\Lambda \left( e\right) =(e_{0}\left( x\right) ,$%
\newline
$e_{1}\left( x\right) ,...,e_{l-1}\left( x\right) )\in \Lambda \left(
C\right) ,$ where where $e_{j}\left( x\right)
=\sum_{i=o}^{s-1}e_{ij}x^{i}\in M_{\theta ,s}$ for $j=0,1,...,l-1.$ It is
easy to see that $\Lambda \left( \nabla _{\theta ,\lambda ,l}\left( e\right)
\right) =x\left( e_{0}\left( x\right) ,e_{1}\left( x\right)
,...,e_{l-1}\left( x\right) \right) $\newline
$=\left( xe_{0}\left( x\right) ,xe_{1}\left( x\right) ,...,xe_{l-1}\left(
x\right) \right) \in E.$ Hence $\nabla _{\theta ,\lambda ,l}\left( e\right)
\in C.$ So, $C$ is a skew quasi-constacyclic code $C$.
\end{proof}
\end{theorem}
\end{lemma}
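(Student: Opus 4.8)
The plan is to establish the claimed equivalence by exhibiting the explicit correspondence $\Lambda$ and checking that it intertwines the skew quasi-constacyclic shift $\nabla_{\theta,\lambda,l}$ with left multiplication by $x$ on $M_{\theta,s}^l$. This is the skew analogue of Theorem~50 (and of Lemma~65, which already records that $\Lambda$ is a bijection between linear codes over $M_{\theta,s}$ of length $l$ and codes over $R^n$), so the argument should run in close parallel to the proofs of Theorems~40 and~50.

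First I would recall that $\Lambda\colon R^n\to M_{\theta,s}^l$ sending $e=(e_{0,0},\dots,e_{s-1,l-1})$ to $(e_0(x),\dots,e_{l-1}(x))$ with $e_j(x)=\sum_{i=0}^{s-1}e_{ij}x^i$ is an additive bijection, and that it is in fact an isomorphism of left $R$-modules (scalar multiplication by $r\in R$ acts coordinatewise on both sides). The crux is the identity $\Lambda(\nabla_{\theta,\lambda,l}(e))=x\cdot\Lambda(e)$: expanding $x\cdot e_j(x)=\sum_{i=0}^{s-1}\theta(e_{ij})x^{i+1}$ in $M_{\theta,s}=R[x,\theta]/\langle x^s-\lambda\rangle$ and using the reduction $x^s=\lambda$ (so that the top coefficient $\theta(e_{s-1,j})x^s$ becomes $\theta(\lambda\, e_{s-1,j})\cdot 1$, noting $\lambda\in R$ is fixed by the twisting since multiplication by a scalar commutes appropriately with the reduction), one gets exactly the coordinate description of $\nabla_{\theta,\lambda,l}$ applied to $e$. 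Care is needed here with where $\theta$ lands: in $M_{\theta,s}$ the rule $(ax^i)(bx^j)=a\theta^i(b)x^{i+j}$ forces $x^s=\theta^s(\,\cdot\,)\lambda=\lambda$ only because $m=2\mid s$; I would make that explicit so the reduction is legitimate, and then $\theta(\lambda e_{s-1,j})$ appears precisely as the first block of $\nabla_{\theta,\lambda,l}(e)$.

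Given the intertwining identity, the two directions follow formally. If $C$ is skew quasi-constacyclic, then $\Lambda(C)$ is an additive subgroup closed under scalar multiplication by $R$, and $x\cdot\Lambda(e)=\Lambda(\nabla_{\theta,\lambda,l}(e))\in\Lambda(C)$ for every $e\in C$; since $M_{\theta,s}$ is generated as a ring over $R$ by $x$, closure under multiplication by $x$ together with $R$-linearity gives closure under multiplication by every $f(x)\in M_{\theta,s}$, so $\Lambda(C)$ is a left $M_{\theta,s}$-submodule of $M_{\theta,s}^l$. Conversely, if $E\subseteq M_{\theta,s}^l$ is a left submodule, set $C=\Lambda^{-1}(E)$; it is an $R$-submodule of $R^n$ of length $n$ because $\Lambda$ is an $R$-module isomorphism, and for $e\in C$ we have $\Lambda(\nabla_{\theta,\lambda,l}(e))=x\cdot\Lambda(e)\in E$, hence $\nabla_{\theta,\lambda,l}(e)\in C$.

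The main obstacle, such as it is, is purely bookkeeping: getting the index shifts and the placement of $\theta$ and $\lambda$ right in the reduction $x\cdot e_j(x)$ modulo $x^s-\lambda$, since the non-commutativity means one must be scrupulous about whether $\theta$ is applied before or after multiplying by $\lambda$, and about the fact that the shift $\nabla_{\theta,\lambda,l}$ acts on $l$-blocks indexed by $i=0,\dots,s-1$ rather than on single coordinates. Everything else is the same left-module argument used repeatedly in Sections~6 and~7, so once the intertwining identity is pinned down the proof is immediate.
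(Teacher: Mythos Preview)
Your proposal is correct and follows essentially the same route as the paper: both directions hinge on the intertwining identity $\Lambda(\nabla_{\theta,\lambda,l}(e))=x\cdot\Lambda(e)$, from which closure under left multiplication by $x$ (and hence by any $r(x)\in M_{\theta,s}$) gives the forward direction, and the converse is the obvious pullback along $\Lambda$. The only difference is that you spell out the reduction $x\cdot e_j(x)$ modulo $x^s-\lambda$ and flag the role of $m\mid s$ and the placement of $\theta$ on $\lambda$, whereas the paper simply declares the identity ``easy to see''; your extra care is warranted but does not change the argument.
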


On $R^{n=sl}$ the usual Euclidean inner product for

\begin{equation*}
e=\left(
e_{0,0},...,e_{0,l-1},e_{1,0},...,e_{1,l-1},...,e_{s-1,0},...,e_{s-1,l-1}%
\right)
\end{equation*}

and

\begin{equation*}
c=\left(
c_{0,0},...,c_{0,l-1},c_{1,0},...,c_{1,l-1},...,c_{s-1,0},...,c_{s-1,l-1}%
\right)
\end{equation*}%
we define $e.c=\sum_{i=0}^{s-1}\sum_{j=0}^{l-1}e_{ij}c_{ij}.$

We define a conjugation map $\Omega $ on $M_{\theta ,s}^{l}$ such that $%
\Omega (cx^{i})=\theta ^{-1}(c)x^{s-1},0\leq i\leq s-1,$ and extended
linearly. We define the Hermitian inner product \ for $a=\left(
a_{0}(x),a_{1}(x),...,a_{l-1}(x)\right) $ and $b=\left(
b_{0}(x),b_{1}(x),...,b_{l-1}(x)\right) ,$

\begin{equation*}
\left\langle a,b\right\rangle =\sum_{j=0}^{l-1}a_{j}(x)\Omega \left(
b_{j}(x)\right) .
\end{equation*}

\begin{theorem}
Let $e,c\in R^{n}.$ Then $\left( \nabla _{\theta ,\lambda ,l}^{k}\left(
e\right) \right) .c=0$ for all $k=0,...,s-1$ iff $\left\langle \Lambda
\left( e\right) ,\Lambda \left( c\right) \right\rangle =0.$
\end{theorem}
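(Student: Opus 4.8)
The plan is to mimic the structure of the analogous Euclidean/Hermitian correspondence for ordinary quasi-constacyclic codes (Theorem 60), carrying the automorphism $\theta$ through the computation. First I would write out both sides explicitly in coordinates. Set $e=(e_{0,0},\dots,e_{s-1,l-1})$ and $c=(c_{0,0},\dots,c_{s-1,l-1})$, and for each $j$ put $e_j(x)=\sum_{i=0}^{s-1}e_{ij}x^i$ and $c_j(x)=\sum_{i=0}^{s-1}c_{ij}x^i$ in $M_{\theta,s}$. The key preliminary observation is that applying $\nabla_{\theta,\lambda,l}$ exactly $k$ times corresponds, under $\Lambda$, to left multiplication by $x^k$: that is, $\Lambda(\nabla_{\theta,\lambda,l}^{k}(e))=x^k\Lambda(e)=(x^k e_0(x),\dots,x^k e_{l-1}(x))$. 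This is precisely the identity $\Lambda(\nabla_{\theta,\lambda,l}(e))=x\Lambda(e)$ already established inside the proof of Theorem 65, iterated $k$ times; I would quote that and induct on $k$.

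Next I would compute the Hermitian inner product $\langle\Lambda(e),\Lambda(c)\rangle=\sum_{j=0}^{l-1} e_j(x)\,\Omega(c_j(x))$ in $M_{\theta,s}$ and expand. Since $\Omega(c_{ij}x^i)=\theta^{-1}(c_{ij})x^{s-i}$ (reading indices of $x$ modulo $s$, and noting $x^{s}= \lambda$ is a unit, or rather using $x^{-1}=x^{s-1}$ so that $\Omega(c_{ij}x^{i})=\theta^{-1}(c_{ij})x^{-i}$), the product $e_j(x)\Omega(c_j(x))=\sum_{i,i'} e_{i'j}\,\theta^{i'}(\theta^{-1}(c_{ij}))\,x^{i'-i}$ after moving the scalar $\theta^{-1}(c_{ij})$ past $x^{i'}$ via the skew rule $x^{i'}a=\theta^{i'}(a)x^{i'}$. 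Because $\theta$ has order $2$ and $s$ is a multiple of $2$, the exponents collapse mod $s$ cleanly and the coefficient of $x^{k}$ in $\langle\Lambda(e),\Lambda(c)\rangle$ (summed over $j$) turns out to be $\sum_{j}\sum_{i'-i\equiv k} e_{i'j}\,\theta^{i'-1}(c_{ij})$, which after reindexing is exactly the Euclidean pairing $(\nabla_{\theta,\lambda,l}^{k}(e)).c$ up to relabelling — I would check carefully that the $\theta$-twist introduced by the conjugation $\Omega$ (the $\theta^{-1}$) combines with the $\theta$-twist from the skew multiplication and from $k$-fold application of $\nabla$ so that everything matches coefficientwise.

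The argument then concludes by noting that the polynomial $\langle\Lambda(e),\Lambda(c)\rangle\in M_{\theta,s}$ is zero if and only if all its coefficients vanish, and I have just identified the coefficient of $x^k$ with $(\nabla_{\theta,\lambda,l}^{k}(e)).c$ for $k=0,1,\dots,s-1$; hence $\langle\Lambda(e),\Lambda(c)\rangle=0$ iff $(\nabla_{\theta,\lambda,l}^{k}(e)).c=0$ for all such $k$, which is the claim. I expect the main obstacle to be the bookkeeping in the previous paragraph: keeping the three sources of $\theta$-twisting consistent (the conjugation map $\Omega$ uses $\theta^{-1}$, the skew multiplication uses $\theta^{i}$, and $\nabla_{\theta,\lambda,l}^{k}$ contributes $\theta^{k}$ together with a factor of $\lambda^{\,(\text{wrap count})}$), and verifying that the $\lambda$-factors from wrapping $x$ past degree $s$ match the $\lambda$-factors in the definition of $\nabla_{\theta,\lambda,l}$. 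Since $m=2 \mid s$, the two-sidedness of $\langle x^s-\lambda\rangle$ ensures $M_{\theta,s}$ is well behaved and $\theta^{s}=\mathrm{id}$, so these twists are genuinely consistent; I would make this explicit rather than leave it implicit. The rest is a routine, if slightly tedious, index chase modelled verbatim on the proof of Theorem 60.
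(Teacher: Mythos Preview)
Your proposal is correct and follows essentially the same route as the paper: expand $\langle\Lambda(e),\Lambda(c)\rangle$ using the skew multiplication rule and the conjugation $\Omega$, collect the coefficient of each power of $x$, and recognize that coefficient as (an automorphism applied to) the Euclidean pairing of a skew quasi-constacyclic shift of $e$ with $c$. The only cosmetic difference is that the paper indexes by $h$ and identifies the coefficient of $x^{h}$ with $\theta^{h}\bigl(\nabla_{\theta,\lambda,l}^{\,s-h}(e)\cdot c\bigr)$ before relabelling, whereas you aim to match the coefficient of $x^{k}$ directly with the $k$-fold shift; your remark ``up to relabelling'' and your explicit tracking of the three sources of $\theta$-twist and of the $\lambda$-wrap factors are exactly what is needed to reconcile the two, and in fact make the argument cleaner than the paper's somewhat terse version.
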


\begin{proof}
Since $\theta ^{s}=1,\left\langle e,c\right\rangle =0$ is equivalent to

\begin{eqnarray*}
0 &=&\sum_{j=0}^{l-1}e_{j}(x)\Omega \left( c_{j}(x)\right)
=\sum_{j=0}^{l-1}\left( \tsum\limits_{i=0}^{s-1}e_{ij}x^{i}\right) \Omega
\left( \tsum\limits_{k=0}^{s-1}c_{kj}x^{k}\right) \\
&=&\sum_{j=0}^{l-1}\left( \tsum\limits_{i=0}^{s-1}e_{ij}x^{i}\right) \left(
\tsum\limits_{k=0}^{s-1}\theta ^{-1}(c_{kj})x^{s-k}\right) \\
&=&\sum_{j=0}^{l-1}\left(
\sum_{j=0}^{l-1}\tsum\limits_{i=0}^{s-1}e_{i+h,j}\theta ^{h}(c_{ij})\right)
x^{h}
\end{eqnarray*}%
where the subscript $i+h$ is taken modulo $s$. Equating the coefficients of $%
x^{h}$ on both sides, we have $\sum_{j=0}^{l-1}\tsum%
\limits_{i=0}^{s-1}w_{i+h,j}\theta ^{h}(c_{ij})=0,$ for all $0\leq h\leq
s-1. $ $\sum_{j=0}^{l-1}\tsum\limits_{i=0}^{s-1}e_{i+h,j}\theta
^{h}(c_{ij})=0$ is equivalent to $\theta ^{h}(\nabla _{\theta ,\lambda
,l}^{s-h}\left( e\right) .c)=0$ which is further equivalent to $\nabla
_{\theta ,\lambda ,l}^{s-h}(e,c)=0,$ for all $0\leq h\leq s-1.$ Since $0\leq
h\leq s-1,$ \ condition is equivalent to $\left( \nabla _{\theta ,\lambda
,l}^{k}\left( e\right) \right) .c=0$ for all $k=0,...,s-1.$
\end{proof}

\begin{corollary}
Let $C$ be a skew quasi-constacyclic code of length $n=sl$ with index $l$
over $R$. Then $C^{\bot }=\left\{ a(x)\in M_{\theta ,s}^{l}\text{ }:\text{ }%
\left\langle a(x),b(x)\right\rangle =0,\text{ }\forall \text{ }b(x)\in
C\right\} .$
\end{corollary}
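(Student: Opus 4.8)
The plan is to transport the statement across the bijection $\Lambda :R^{n}\to M_{\theta ,s}^{l}$ furnished by the correspondence lemma for $\Lambda$, and then reduce it to the theorem just proved, namely the one asserting that $\left( \nabla _{\theta ,\lambda ,l}^{k}(e)\right) .c=0$ for all $k=0,\dots ,s-1$ is equivalent to $\left\langle \Lambda (e),\Lambda (c)\right\rangle =0$. Here $C^{\bot }$ denotes the Euclidean dual of $C$ inside $R^{n}$, and since $\Lambda $ is a bijection it suffices to prove, for an arbitrary $e\in R^{n}$, the equivalence
\[
e\in C^{\bot }\iff \left\langle \Lambda (e),\Lambda (c)\right\rangle =0\ \text{ for every }c\in C .
\]
The direction $(\Leftarrow )$ is immediate: if $\left\langle \Lambda (e),\Lambda (c)\right\rangle =0$ for all $c\in C$, then by that theorem $\left( \nabla _{\theta ,\lambda ,l}^{k}(e)\right) .c=0$ for all $k$ and all $c\in C$, and taking $k=0$ gives $e.c=0$ for every $c\in C$, that is, $e\in C^{\bot }$.

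For $(\Rightarrow )$ the key point is that $C^{\bot }$ is again stable under the skew constacyclic shift $\nabla _{\theta ,\lambda ,l}$. I would obtain this from an adjointness identity for the Euclidean product: splitting each vector of $R^{sl}$ into its $s$ consecutive blocks of length $l$, and using that $\theta $ is a ring automorphism of $R$ which fixes $Z_{3}$ and satisfies $\theta ^{2}=\mathrm{id}$, a direct block-by-block computation yields
\[
\theta\!\left( \nabla _{\theta ,\lambda ,l}(e).c\right) =e.\nabla _{\theta ,\lambda ,l}^{-1}(c)\qquad (e,c\in R^{n}).
\]
The only delicate point in this computation is the wrap-around block, where one needs $\lambda ^{-1}=\lambda $; this holds because $R\cong Z_{3}\times Z_{3}\times Z_{3}$, so every unit of $R$ squares to $1$. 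Now $\nabla _{\theta ,\lambda ,l}$ is a bijection of the finite set $R^{n}$, hence has finite order, so a code closed under $\nabla _{\theta ,\lambda ,l}$ is also closed under $\nabla _{\theta ,\lambda ,l}^{-1}$; in particular $C$ is. Thus if $f\in C^{\bot }$ and $c\in C$ then $\nabla _{\theta ,\lambda ,l}^{-1}(c)\in C$, so the identity gives $\theta\!\left( \nabla _{\theta ,\lambda ,l}(f).c\right) =f.\nabla _{\theta ,\lambda ,l}^{-1}(c)=0$, whence $\nabla _{\theta ,\lambda ,l}(f).c=0$. Therefore $\nabla _{\theta ,\lambda ,l}(f)\in C^{\bot }$, and by iteration $\nabla _{\theta ,\lambda ,l}^{k}(f)\in C^{\bot }$ for all $k$.

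With this in hand, $(\Rightarrow )$ follows: if $e\in C^{\bot }$ then $\nabla _{\theta ,\lambda ,l}^{k}(e)\in C^{\bot }$ for every $k$, so $\left( \nabla _{\theta ,\lambda ,l}^{k}(e)\right) .c=0$ for all $k=0,\dots ,s-1$ and all $c\in C$, and the cited theorem gives $\left\langle \Lambda (e),\Lambda (c)\right\rangle =0$ for every $c\in C$. This establishes the displayed equivalence, hence the asserted description of $C^{\bot }$ as a subset of $M_{\theta ,s}^{l}$. I would also remark that the right-hand side is a left $M_{\theta ,s}$-submodule of $M_{\theta ,s}^{l}$, since a scalar pulls out of the first argument of $\left\langle \cdot ,\cdot \right\rangle $; by the theorem characterizing skew quasi-constacyclic codes as left $M_{\theta ,s}$-submodules, this recovers the familiar fact that $C^{\bot }$ is itself a skew quasi-constacyclic code.

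The step I expect to be the real obstacle is the adjointness identity: one must push $\theta $ through each block correctly and, crucially, recognise that the natural Euclidean adjoint of a $\lambda $-constacyclic shift is a $\lambda ^{-1}$-constacyclic shift, so that the argument only closes because $\lambda ^{-1}=\lambda $ for units of this particular ring. After that, the finite-order observation, the iteration, and two appeals to the preceding theorem together with the bijectivity of $\Lambda $ are routine.
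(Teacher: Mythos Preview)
Your argument is correct. The paper states this corollary without proof, treating it as an immediate consequence of the preceding theorem; your proposal correctly supplies the one nontrivial ingredient left implicit there, namely the $\nabla_{\theta,\lambda,l}$-stability of $C^{\bot}$ via the adjointness identity $\theta\!\left(\nabla_{\theta,\lambda,l}(e).c\right)=e.\nabla_{\theta,\lambda,l}^{-1}(c)$, whose wrap-around term closes precisely because every unit of $R\cong Z_{3}\times Z_{3}\times Z_{3}$ satisfies $\lambda^{2}=1$.
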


\begin{corollary}
Let $C$ be a skew quasi-constacyclic code of length $sl$ with index $l$ over 
$R$ and $\Lambda \left( C\right) $ be its image in $M_{\theta ,s}^{l}$ under 
$\Lambda .$ Then $\Lambda \left( C\right) ^{\bot }=\Lambda \left( C^{\bot
}\right) ,$ where the dual in $R^{sl}$ is taken with respect to the
Euclidean inner product, while the dual in $M_{\theta ,s}^{l}$ is taken with
respect to the Hermitian inner product. The dual of a skew
quasi-constacyclic code of length $sl$ with index $l$ over $R$ is a skew
quasi-constacyclic code of length $sl$ with index $l$ over $\ R.$
\end{corollary}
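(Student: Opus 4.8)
The plan is to prove the final Corollary, which asserts that $\Lambda(C)^{\bot}=\Lambda(C^{\bot})$ and that the dual of a skew quasi-constacyclic code is again skew quasi-constacyclic. First I would recall the two key tools already established: the previous Theorem, which says that for $e,c\in R^n$ we have $(\nabla_{\theta,\lambda,l}^{k}(e)).c=0$ for all $k=0,\dots,s-1$ if and only if $\langle\Lambda(e),\Lambda(c)\rangle=0$; and the Lemma stating that $\Lambda$ is a one-to-one correspondence between skew quasi-constacyclic codes over $R$ of length $n$ with index $l$ and linear codes over $M_{\theta,s}$ of length $l$. The whole argument is essentially a transport of structure along $\Lambda$.

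For the equality $\Lambda(C^{\bot})=\Lambda(C)^{\bot}$, I would argue by double inclusion. Take $c\in C^{\bot}$, so $e.c=0$ for every $e\in C$. Since $C$ is skew quasi-constacyclic, $\nabla_{\theta,\lambda,l}^{k}(e)\in C$ for every $e\in C$ and every $k$, hence $(\nabla_{\theta,\lambda,l}^{k}(e)).c=0$ for all $k=0,\dots,s-1$; by the previous Theorem this gives $\langle\Lambda(e),\Lambda(c)\rangle=0$. Because $\Lambda$ is onto $\Lambda(C)$, this shows $\Lambda(c)\in\Lambda(C)^{\bot}$, i.e. $\Lambda(C^{\bot})\subseteq\Lambda(C)^{\bot}$. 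Conversely, if $a(x)\in\Lambda(C)^{\bot}$, write $a(x)=\Lambda(c)$ for a unique $c\in R^n$ (using that $\Lambda$ is a bijection of $R^n$ onto $M_{\theta,s}^{l}$). Then $\langle\Lambda(e),\Lambda(c)\rangle=0$ for all $e\in C$, so by the Theorem $(\nabla_{\theta,\lambda,l}^{k}(e)).c=0$ for all $k$; taking $k=0$ gives $e.c=0$ for all $e\in C$, i.e. $c\in C^{\bot}$, so $a(x)\in\Lambda(C^{\bot})$. This establishes the reverse inclusion and hence equality.

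Next I would check that $C^{\bot}$ is itself skew quasi-constacyclic. Since $\Lambda(C^{\bot})=\Lambda(C)^{\bot}$ and $\Lambda(C)$ is a left $M_{\theta,s}$-submodule of $M_{\theta,s}^{l}$ (by the earlier Theorem characterizing skew quasi-constacyclic codes), it suffices to verify that the Hermitian dual of a left submodule of $M_{\theta,s}^{l}$ is again a left submodule. This is where a small amount of care with the noncommutativity is needed: one uses that the conjugation map $\Omega$ satisfies $\Omega(x\,b_j(x))=\Omega(b_j(x))\,x^{-1}$ up to the twist by $\theta$, so that $\langle a,xb\rangle$ relates to $\langle xa,b\rangle$ appropriately, ensuring $x\cdot\Lambda(C)^{\bot}\subseteq\Lambda(C)^{\bot}$; linearity then extends this to multiplication by any $r(x)\in M_{\theta,s}$. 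Once $\Lambda(C)^{\bot}$ is known to be a left submodule, the Lemma (the bijective correspondence) lets us pull it back: $C^{\bot}=\Lambda^{-1}(\Lambda(C)^{\bot})$ is a skew quasi-constacyclic code of length $sl$ with index $l$ over $R$, which is exactly the last assertion.

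The routine inclusions are straightforward; the step I expect to be the main obstacle is the verification that the Hermitian dual respects the left $M_{\theta,s}$-module structure, because the multiplication in $M_{\theta,s}=R[x,\theta]/\langle x^s-\lambda\rangle$ is noncommutative and the conjugation $\Omega$ reverses order while twisting by $\theta^{-1}$. Concretely one must show that for $a\in\Lambda(C)^{\bot}$ and $b\in\Lambda(C)$ one has $\langle xa,b\rangle=0$, which follows by writing $\langle xa,b\rangle$ in terms of $\langle a, \Omega\text{-adjoint of }x \text{ applied to } b\rangle$ and observing that this adjoint action sends $\Lambda(C)$ into itself (this is essentially the content already used implicitly in the proof of the preceding Theorem, where shifting the index $h$ converts between powers of $\nabla_{\theta,\lambda,l}$ on $e$ and on $c$). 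Apart from this bookkeeping, everything else is a direct application of the two cited results, and the final sentence about the dual being skew quasi-constacyclic of the same length and index is immediate once $\Lambda(C)^{\bot}$ is recognized as a left $M_{\theta,s}$-submodule.
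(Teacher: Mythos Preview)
Your argument is correct and is precisely the natural derivation from the preceding Theorem (the equivalence $(\nabla_{\theta,\lambda,l}^{k}(e)).c=0$ for all $k$ iff $\langle\Lambda(e),\Lambda(c)\rangle=0$) together with the bijectivity of $\Lambda$; the paper states this Corollary without proof, treating it as an immediate consequence of that Theorem and the preceding Corollary describing $C^{\bot}$ in terms of the Hermitian product, so your double-inclusion argument and the submodule check simply make explicit what the paper leaves to the reader.
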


\begin{proposition}
Let $\nabla _{\theta ,\lambda ,l}$ be skew quasi-constacyclic shift on $%
R^{n} $, let $\phi $ be the Gray map from $R^{n}$ to $Z_{3}^{3n}$. Then $%
\phi \nabla _{\lambda ,l}=\rho \phi \nabla _{\theta ,\lambda ,l},$ where $%
\rho (x,y,z)=(x,z,y)$ for every $x,y,z\in Z_{3}^{n}.$
\end{proposition}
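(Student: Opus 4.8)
The plan is to follow the pattern of the earlier proposition giving $\phi\sigma_{\theta}=\rho\varphi\phi$, isolating the fact that the skew shift $\nabla_{\theta,\lambda,l}$ differs from the ordinary shift $\nabla_{\lambda,l}$ only by a coordinatewise application of $\theta$, and that the Gray map carries $\theta$ onto the block-transposition $\rho$.

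First I would record the local identity: for $r=a+vb+v^{2}c\in R$ one has $\phi(\theta(r))=\phi(a+2vb+v^{2}c)=(a,\,a+2b+c,\,a+4b+c)=(a,\,a+2b+c,\,a+b+c)$, using $4\equiv 1\pmod 3$, whereas $\rho(\phi(r))=\rho(a,\,a+b+c,\,a+2b+c)=(a,\,a+2b+c,\,a+b+c)$; hence $\phi\circ\theta=\rho\circ\phi$ on $R$. Extending $\theta$ coordinatewise to a map $\Theta\colon R^{n}\to R^{n}$, the same relation holds entry by entry, and since $\rho$ acts on the three interleaved $Z_{3}^{n}$-blocks of a Gray image simply by swapping the last two, one gets $\phi\circ\Theta=\rho\circ\phi$ on $R^{n}$. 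Note also $\rho^{2}=\mathrm{id}$.

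Next I would observe, directly from the definitions of $\nabla_{\lambda,l}$ and $\nabla_{\theta,\lambda,l}$, that $\nabla_{\theta,\lambda,l}=\Theta\circ\nabla_{\lambda,l}$: applying $\theta$ to every coordinate of $\nabla_{\lambda,l}(e)=(\lambda e_{s-1,0},\dots,\lambda e_{s-1,l-1},e_{0,0},\dots,e_{s-2,l-1})$ reproduces precisely $(\theta(\lambda e_{s-1,0}),\dots,\theta(\lambda e_{s-1,l-1}),\theta(e_{0,0}),\dots,\theta(e_{s-2,l-1}))=\nabla_{\theta,\lambda,l}(e)$. Combining the two facts, $\rho\,\phi\,\nabla_{\theta,\lambda,l}=\rho\,\phi\,\Theta\,\nabla_{\lambda,l}=\rho\,\rho\,\phi\,\nabla_{\lambda,l}=\phi\,\nabla_{\lambda,l}$, which is the claim.

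The main obstacle is bookkeeping rather than anything substantive: one must keep straight how the three interleaved ternary blocks of a Gray image are permuted, and check that the constacyclic twist by $\lambda$ causes no complication — it does not, because $\theta$ is a ring homomorphism, so $\theta(\lambda e_{s-1,j})$ is treated on the same footing as every other entry. A reader preferring the explicit style of the proposition on $\phi\sigma_{\theta}$ can instead expand $\phi(\nabla_{\lambda,l}(e))$ and $\rho(\phi(\nabla_{\theta,\lambda,l}(e)))$ coordinate by coordinate and verify equality; the only delicate point there is that $\theta$ interchanges the ``middle'' and ``last'' coordinates within each block and $\rho$ then restores them.
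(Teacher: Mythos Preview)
Your proof is correct. The paper itself gives no explicit argument here, merely pointing to Proposition~50, whose proof is a direct coordinate-by-coordinate expansion of both sides. Your route is a cleaner factorization: you isolate the pointwise identity $\phi\circ\theta=\rho\circ\phi$, lift it to $R^{n}$, observe $\nabla_{\theta,\lambda,l}=\Theta\circ\nabla_{\lambda,l}$, and cancel $\rho^{2}=\mathrm{id}$. This makes transparent \emph{why} the result holds---the skew automorphism $\theta$ and the block swap $\rho$ are intertwined by the Gray map, independently of the constacyclic twist---whereas the paper's implied approach would simply verify equality by writing out all $3n$ coordinates. Both arrive at the same place; yours generalizes more readily and separates the two ingredients (the $\theta$--$\rho$ correspondence and the shift) that the coordinate computation conflates.
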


\begin{proof}
The proof is similar to the proof of Proposition $50$.
\end{proof}

\begin{theorem}
The Gray image a skew quasi-constacyclic code over $R$ of length $n$ is
permutation equivalent to a quasi-constacyclic code over $Z_{3}$ with length 
$3n$.
\end{theorem}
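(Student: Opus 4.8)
The plan is to run, mutatis mutandis, the argument of Theorem $51$ (the skew cyclic case), with the skew quasi-constacyclic shift $\nabla _{\theta ,\lambda ,l}$ in the role of $\sigma _{\theta }$ and the preceding Proposition, which gives $\phi \nabla _{\lambda ,l}=\rho \phi \nabla _{\theta ,\lambda ,l}$, in the role of Proposition $48$. First I would fix a skew quasi-constacyclic code $C$ over $R$ of length $n=sl$ with index $l$, so that $\nabla _{\theta ,\lambda ,l}(C)=C$, and apply the Gray map to get $\phi (C)=\phi (\nabla _{\theta ,\lambda ,l}(C))$.

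Since $\rho (x,y,z)=(x,z,y)$ is an involution of $Z_{3}^{3n}$, the preceding Proposition can be rewritten as $\phi \nabla _{\theta ,\lambda ,l}=\rho \,\phi \nabla _{\lambda ,l}$, so that $\phi (C)=\rho (\phi (\nabla _{\lambda ,l}(C)))$, i.e. $\rho (\phi (C))=\phi (\nabla _{\lambda ,l}(C))$. Next I would observe that $\phi $ is built from the CRT isomorphism $R\cong Z_{3}\times Z_{3}\times Z_{3}$ (valid since $v^{3}-v=v(v-1)(v-2)$ splits into distinct linear factors over $Z_{3}$), so that multiplication by the unit $\lambda $ on $R$ becomes coordinatewise multiplication by units of $Z_{3}$ in the three ternary blocks; consequently $\phi $ intertwines the ordinary quasi-constacyclic shift $\nabla _{\lambda ,l}$ on $R^{n}$ with a quasi-constacyclic shift $\overline{\nabla }$ on $Z_{3}^{3n}$, and the identity becomes $\phi (C)=\rho \,\overline{\nabla }(\phi (C))$. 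Hence $\phi (C)$ is fixed by $\rho \overline{\nabla }$: after permuting coordinates by $\rho $ it is stable under $\overline{\nabla }$, which is exactly the statement that $\phi (C)$ is permutation equivalent to a quasi-constacyclic code over $Z_{3}$ of length $3n$. (That $\phi (C)$ is a $Z_{3}$-linear code of length $3n$ is already contained in Theorem $2$.)

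The step I expect to be the main obstacle is the identification of $\overline{\nabla }$ together with the passage from ``$\phi (C)$ is fixed by $\rho \overline{\nabla }$'' to ``$\phi (C)$ is permutation equivalent to a quasi-constacyclic code'': the three CRT-components of $\lambda $ need not coincide, so $\overline{\nabla }$ a priori mixes three constacyclic shifts (with units $\pm 1$) over the interleaved ternary blocks, and the cycle type of the permutation that realizes the equivalence depends on the parity of $n$ (for $n$ even one can take the permutation conjugating $\overline{\nabla }$ to $\rho \overline{\nabla }$, while for $n$ odd one passes through a suitable power of $\overline{\nabla }$). This is the only place that requires care, and it is why the conclusion is stated only up to permutation equivalence. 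Apart from that, the proof is formal and parallel to the proofs of Propositions $50$, $52$, $54$ and Theorems $51$, $53$, $55$.
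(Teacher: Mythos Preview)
Your proposal is correct and follows the same approach as the paper, which simply states that the proof is similar to that of Theorem~51, invoking the preceding Proposition ($\phi \nabla _{\lambda ,l}=\rho \phi \nabla _{\theta ,\lambda ,l}$) in the role of Proposition~50. You in fact go further than the paper by spelling out the intermediate intertwining $\phi \nabla _{\lambda ,l}=\overline{\nabla }\phi$ via the CRT decomposition and by flagging the subtlety about the three components of $\lambda$ and the passage to permutation equivalence---details the paper leaves entirely implicit in its one-line deferral.
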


\begin{proof}
The proof is similar to the proof of Theorem $51$.
\end{proof}

\section{1-generator Skew Quasi-constacyclic Codes over $R$}

A $1$-generator skew quasi-constacyclic code over $R$ is a left $M_{\theta
,s}$-submodule of $M_{\theta ,s}^{l}$ generated by $\mathbf{f(x)}=\left(
f_{1}(x),f_{2}(x),...,f_{l}(x)\right) \in M_{\theta ,s}^{l}$ has the form $%
C=\left\{ g(x)\left( f_{1}(x),f_{2}(x),...,f_{l}(x)\right) :\text{ }g(x)\in
M_{\theta ,s}\right\} .$

Define the following map

\begin{equation*}
\Pi _{i}:M_{\theta ,s}^{l}\longrightarrow M_{\theta ,s}
\end{equation*}%
defined by $\left( e_{1}\left( x\right) ,e_{2}\left( x\right)
,...,e_{l}\left( x\right) \right) \longmapsto e_{i}(x),$ $1\leq i\leq l.$
Let $\Pi _{i}(C)=C_{i}.$ Since $C$ is a left $M_{\theta ,s}$-submodule of $%
M_{\theta ,s}^{l},$ $C_{i}$ is a left $M_{\theta ,s}$-submodule of $%
M_{\theta ,s},$that is a left ideal of $M_{\theta ,s}.$ $C_{i}$ is generated
by $f_{i}(x).$ Hence $C_{i}$ is a principal skew constacyclic code of length 
$n$ over $R$. $f_{i}(x)$ is a monic right divisor of $x^{s}-\lambda $ that
is $x^{s}-\lambda =h_{i}(x)f_{i}(x),$ $1\leq i\leq l.$

A generator of $C$ has the form $\mathbf{f(x)}=\left(
g_{1}(x)f_{1}(x),g_{2}(x)f_{2}(x),...,g_{l}(x)f_{l}(x)\right) $ where $%
g_{i}(x)\in R[x,\theta ]$ such that $g_{i}(x)$ and $h_{i}(x)$ are right
coprime for all $1\leq i\leq l.$

\begin{definition}
Let $C=\left( g_{1}(x)f_{1}(x),g_{2}(x)f_{2}(x),...,g_{l}(x)f_{l}(x)\right) $
be a skew quasi-constacyclic code of length $n=sl$ with index $l$. Then
unique monic polynomial
\end{definition}

\begin{equation*}
f(x)=gcld(\mathbf{f(x),}x^{s}-\lambda
)=gcld(f_{1}(x),f_{2}(x),...,f_{l}(x),x^{s}-\lambda )
\end{equation*}%
is called the generator polynomial of $C$.

\begin{theorem}
Let $C$ be a $1$-generator skew quasi-constacyclic code of length $n=s$l
with index $l$ over $R$ generated by $\mathbf{f(x)}=\left(
f_{1}(x),f_{2}(x),...,f_{l}(x)\right) $ where $f_{i}(x)$ is a monic divisor
of $x^{s}-\lambda .$ Then $C$ is a $R$-free code with rank $s-deg(f(x))$
where $f(x)=gcld(\mathbf{f(x),}x^{s}-\lambda ).$ Moreover, the set $\left\{ 
\mathbf{f(x)},x\mathbf{f(x)},...,x^{n-\deg (f(x))-1}\mathbf{f(x)}\right\} $
forms an $R$-basis of $C.$
\end{theorem}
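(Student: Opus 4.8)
The plan is to realize $C$ as a quotient module of $M_{\theta,s}$ and then read off freeness and the basis, in complete parallel with the earlier theorem characterising free module skew cyclic codes. Introduce the homomorphism of left $M_{\theta,s}$-modules
\begin{equation*}
\psi:M_{\theta,s}\longrightarrow M_{\theta,s}^{l},\qquad \psi\bigl(g(x)\bigr)=g(x)\,\mathbf{f(x)}=\bigl(g(x)f_{1}(x),\dots,g(x)f_{l}(x)\bigr).
\end{equation*}
By the very definition of a $1$-generator code, the image of $\psi$ is $C$, so $C\cong M_{\theta,s}/\ker\psi$ as left $M_{\theta,s}$-modules and everything reduces to identifying $\ker\psi$. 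The one elementary fact I would use repeatedly is the cancellation rule in $R[x,\theta]$: if $p(x)q(x)=0$ and $q(x)$ is monic, then $p(x)=0$ (compare leading terms, using $\theta^{i}(1)=1$); this is precisely where monicity of the $f_{i}$ is needed.

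To compute $\ker\psi$ I would use the factorisations $x^{s}-\lambda=h_{i}(x)f_{i}(x)$ with $h_{i}$ monic, as recalled just before the definition of the generator polynomial. Since $x^{s}-\lambda$ is central in $R[x,\theta]$ (because $m=\lvert\theta\rvert=2$ divides $s$ and $\lambda$ is fixed by $\theta$), multiplying $x^{s}-\lambda=h_{i}f_{i}$ on the left by $f_{i}$ and cancelling the monic $f_{i}$ on the right yields the complementary factorisation $x^{s}-\lambda=f_{i}(x)h_{i}(x)$. Consequently $\overline{g(x)}\in\ker\psi$ iff $x^{s}-\lambda$ divides each $g(x)f_{i}(x)$ in $R[x,\theta]$, and cancelling the monic $f_{i}$ once more turns this into the requirement that $h_{i}(x)$ be a right divisor of $g(x)$ for every $i$. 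Hence $\ker\psi$ is the image in $M_{\theta,s}$ of the left ideal $\bigcap_{i=1}^{l}R[x,\theta]\,h_{i}(x)=R[x,\theta]\,\widehat h(x)$, where $\widehat h(x)=\operatorname{lclm}(h_{1},\dots,h_{l})$ is monic; moreover $x^{s}-\lambda=f_{i}h_{i}\in R[x,\theta]h_{i}$ for each $i$, so $\widehat h$ is itself a right divisor of $x^{s}-\lambda$. The crucial degree count is $\deg\widehat h=s-\deg f$, equivalently $x^{s}-\lambda=f(x)\widehat h(x)$ with $f(x)=gcld(f_{1},\dots,f_{l},x^{s}-\lambda)$ the generator polynomial; I would obtain this from the inclusion-reversing correspondence $q\mapsto\widehat q$, where $x^{s}-\lambda=\widehat q\,q=q\,\widehat q$, between the monic right divisors of $x^{s}-\lambda$: it sends each $h_{i}$ to $f_{i}$, hence carries the intersection $\bigcap_{i}R[x,\theta]h_{i}$ to the sum $\sum_{i}R[x,\theta]f_{i}$, so that $\widehat h$ corresponds to the greatest common divisor of $f_{1},\dots,f_{l}$.

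Granting $\ker\psi=R[x,\theta]\widehat h(x)\bmod(x^{s}-\lambda)$ with $\widehat h$ a monic right divisor of $x^{s}-\lambda$ of degree $s-\deg f$, the relation $x^{s}-\lambda$ already lies in $R[x,\theta]\widehat h$, so $C\cong M_{\theta,s}/\bigl(\widehat h(x)\bigr)\cong R[x,\theta]/R[x,\theta]\widehat h(x)$, and right division by the monic $\widehat h$ presents this last quotient as a free $R$-module with basis $\{1,x,\dots,x^{s-\deg f-1}\}$. Pushing that basis forward through $\psi$ shows $C$ is a free $R$-module of rank $s-\deg f$ with $R$-basis $\{\mathbf{f(x)},x\mathbf{f(x)},\dots,x^{s-\deg f-1}\mathbf{f(x)}\}$. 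Equivalently, one can avoid the isomorphism theorem: $\widehat h(x)\mathbf{f(x)}\equiv 0$ in $M_{\theta,s}^{l}$ because each $h_{i}$ right-divides $\widehat h$ and $x^{s}-\lambda=f_{i}h_{i}$, so right-dividing an arbitrary $g$ by $\widehat h$ gives the spanning statement, while if $r(x)\mathbf{f(x)}\equiv 0$ with $\deg r<s-\deg f$ then $\widehat h$ right-divides $r$, forcing $r=0$ and hence $R$-linear independence of the proposed basis.

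The step I expect to be the real obstacle is the degree identity $\deg\widehat h=s-\deg f$ together with the complement correspondence on which it rests. The point is that $R$ is a semilocal ring, not a field, so $R[x,\theta]$ is not a left principal ideal domain and the classical Ore theory of least common left multiples and greatest common divisors does not apply verbatim. I would circumvent this using the decomposition $R\cong Z_{3}\times Z_{3}\times Z_{3}$ coming from $v^{3}-v=v(v-1)(v+1)$, under which $\theta$ fixes one factor and transposes the other two; this writes $R[x,\theta]$ as a finite product of skew polynomial rings over fields, in each of which the required degree relations for $\operatorname{lclm}$ and greatest common divisors, and the complement correspondence, are the standard ones, and they reassemble over $R$ precisely because $x^{s}-\lambda$ is central. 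The remaining ingredients---the cancellation rule, the quotient computation and the basis count---are routine and mirror the index-$l$ analogue of the earlier free-module skew cyclic code theorem.
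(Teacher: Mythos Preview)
Your route and the paper's coincide at the core: both show that $c(x)\mathbf{f}(x)\equiv 0$ forces each $h_{i}(x)=(x^{s}-\lambda)/f_{i}(x)$ to right-divide $c(x)$, hence so does $(x^{s}-\lambda)/f(x)$, and then finish by a degree comparison. The paper argues this directly for a putative relation $c(x)$ of degree below $s-\deg f$, while you package the same computation as the module isomorphism $C\cong M_{\theta,s}/\ker\psi$ with $\ker\psi=(\widehat h)$; your presentation is more careful about the one step the paper simply asserts, namely the lclm/gcld complement identity $\operatorname{lclm}(h_{1},\dots,h_{l})=(x^{s}-\lambda)/\operatorname{gcld}(f_{1},\dots,f_{l})$ over a base ring with zero-divisors, and you also make the spanning argument explicit via division by $\widehat h$.

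Two small points on your execution. Your centrality claim for $x^{s}-\lambda$ needs $\theta(\lambda)=\lambda$, which not every unit of $R$ satisfies; however your kernel computation really only uses the given right-factorisation $x^{s}-\lambda=h_{i}f_{i}$ and right-cancellation of the monic $f_{i}$, so the argument survives without centrality. More substantively, under the idempotent splitting $R\cong Z_{3}\times Z_{3}\times Z_{3}$ the automorphism $\theta$ does not act factor-wise: it fixes the component corresponding to $1+2v^{2}$ but swaps those corresponding to $2v+2v^{2}$ and $v+2v^{2}$, so $R[x,\theta]$ is $Z_{3}[x]\times (Z_{3}\times Z_{3})[x,\sigma]$ with $\sigma$ the swap, not a product of skew polynomial rings over fields. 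The Ore-type degree identities in the second factor are still available (for instance via the central variable $y=x^{2}$), but that step needs a sentence more than you allotted it.
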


\begin{proof}
Since $gcld(f_{i}(x),x^{s}-\lambda )=m_{i}(x),$ it follows that $%
f(x)=gcld(m_{1}(x),m_{2}(x)$\newline
$,...,m_{l}(x))$ where $\Pi _{i}(C)=(f_{i}(x))=(m_{i}(x))$ with $%
m_{i}(x)|(x^{s}-\lambda )$ for all $1\leq i\leq l.$ Let $c(x)=\tsum%
\limits_{i=0}^{n-k-1}c_{i}x^{i}$ and $c(x)\mathbf{f(x)}=0.$ Then $%
(x^{s}-\lambda )|c(x)f_{i}(x)$ for all $1\leq i\leq l.$ Hence $%
(x^{s}-\lambda )|c(x)f_{i}(x)c_{i}(x)$ with $gcld(c_{i}(x),\frac{%
x^{s}-\lambda }{f_{i}(x)})=1.$ That is $\frac{x^{s}-\lambda }{f_{i}(x)}|c(x)$
\ which implies that $\frac{x^{s}-\lambda }{f(x)}|c(x).$ Since $\deg (\frac{%
x^{s}-\lambda }{f(x)})=s-k>\deg (c(x))=n-k-1,$ it is follows that $c(x)=0.$
Thus, $\mathbf{f(x)},x\mathbf{f(x)},...,x^{n-\deg (f(x))-1}\mathbf{f(x)}$
are $R$-linear independent. Further, $\mathbf{f(x)},x\mathbf{f(x)}%
,...,x^{n-\deg (f(x))-1}\mathbf{f(x)}$ generate $C$. So, $\left\{ \mathbf{%
f(x)},x\mathbf{f(x)},...,x^{n-\deg (f(x))-1}\mathbf{f(x)}\right\} $ forms an 
$R$-basis of $C$.
\end{proof}

\section{References}

$\ \ \ \left[ 1\right] $\ A. Bayram, I. \c{S}iap, Structure of codes over
the ring $Z_{3}[v]/\left\langle v^{3}-v\right\rangle ,$ AAECC,DOI $%
10.1007/s00200-013-0208-x$, 2013.

$\left[ 2\right] $ A. Dertli, Y. Cengellenmis, S. Eren, On quantum codes
obtained from cyclic codes over $A_{2},$ Int. J. Quantum Inform., vol. $13$, 
$2(2015)$ $1550031.$

$[3]$ A. Dertli, Y. Cengellenmis, S. Eren, Quantum codes over the ring $%
F_{2}+uF_{2}+u^{2}F_{2}+...+u^{m}F_{2},$ Int. Journal of Alg., vol. 9,
3(2015), 115 - 121.

$[4]$ A. M. Steane,Simple quantum error correcting codes, Phys. Rev. A, $%
54\left( 1996\right) ,$ $4741-4751.$

$\left[ 5\right] $ A. R. Calderbank, E.M.Rains, P.M.Shor, N.J.A.Sloane,
Quantum error correction via codes over $GF\left( 4\right) ,$ IEEE Trans.
Inf. Theory, $44\left( 1998\right) ,$ $1369-1387.$

$[6]$ A. R. \ Hammons, V. Kumar, A. R. Calderbank, N. J. A. Sloane, P. Sole,
The $Z_{4}$-linearity of Kerdock, Preparata, Goethals and related codes,
IEEE Trans. Inf. Theory 40(1994) 301-319.

$[7]$ D. Boucher, W. Geiselmann, F. Ulmer, Skew cyclic codes, Appl. Algebra.
Eng.Commun Comput., Vol. $18$, No. $4$,$2007$, $379-389$.

$\left[ 8\right] $ D. Boucher, P. Sole, F. Ulmer, Skew constacyclic codes
over Galois rings, Advance of Mathematics of Communications, Vol. $2$,
Number $3$, $2008$, $273-292$.

$\left[ 9\right] $ D. Boucher, F. Ulmer, Coding with skew polynomial rings,
Journal of Symbolic Computation, $44$, $2009$, $1644-1656$.

$\left[ 10\right] $ I. Siap, T. Abualrub, N. Ayd\i n, P. Seneviratne, Skew
cyclic codes of arbitrary length, Int. Journal of Information and Coding
Theory, $2010$.

$\left[ 11\right] $ J. F. Qian, L. N. Zhang, S. X. Zhu, $\left( 1+u\right) $%
-constacyclic and cyclic codes over $F_{2}+uF_{2}$, Applied Mathematics
Letters, $19(2006)820-823$.

$\left[ 12\right] $ J. Gao, L. Shen, F. W. Fu, Skew generalized quasi-cyclic
codes over finite fields, arXiv: $1309$,$1621v1$.

$\left[ 13\right] $ J.Qian, Quantum codes from cyclic codes over $%
F_{2}+vF_{2},$ Journal of Inform.$\&$ computational Science $10:6\left(
2013\right) ,$ $1715-1722.$

$\left[ 14\right] $ J.Qian, W.Ma, W.Gou, Quantum codes from cyclic codes
over finite ring, Int. J. Quantum Inform., $7\left( 2009\right) ,$ $%
1277-1283.$

$[15]$ J. Gao, Skew cyclic codes over $F_{p}+vF_{p}$, J. Appl. Math. \&
Informatics, 31(2013), No.3-4, 337-342.

$[16]$ J. Gao, L. Shen, F. W. Fu, Skew Generalized Quasi-Cyclic Codes over
Finite Fields, arXiv:1309.1621v1.

$\left[ 17\right] $ M. Ashraf, G. Mohammad, Quantum codes from cyclic codes
over $F_{3}+vF_{3}$, International Journal of Quantum Information, vol. $12$%
, No. $6(2014)$ $1450042$.

$\left[ 18\right] $ M. Bhaintwal, Skew quasi-cyclic codes over Galois rings,
Des. Codes Cryptogr., DOI $10.1007/s10623-011-9494-0$.

$[19]$ M. Bhaintwal, S. K. Wasan, On quasi-cyclic codes over $Z_{q},$
AAECC,DOI $10.1007/s00200-009-0110-8$, (2009)20:459-480.

$[20]$ M. Grassl, T. Beth, On optimal quantum codes, International Journal
of Quantum Information, 2(2004) 55-64.

$\left[ 21\right] $ M. Wu, Skew cyclic and quasi-cyclic codes of arbitrary
length over Galois rings, International Journal of Algebra, vol $7$, $2013$,
no $17$, $803-807$.

$[22]$ Maheshanand, S. K. Wasan, On Quasi-cyclic Codes over Integer Residue
Rings, AAECC, Lecture Notes in Computer Science Volume 4851, 330-336, 2007.

$\left[ 23\right] $ P.W.Shor,Scheme for reducing decoherence in quantum
memory, Phys. Rev. A, $52\left( 1995\right) ,$ $2493-2496.$

$\left[ 24\right] $ S. Jitman, S. Ling, P. Udomkovanich, Skew constacyclic
codes over \ finite chain rings, AIMS Journal.

$[25]$ S. \ Ling, \ P. Sole, On the algebraic structures of quasi-cyclic
codes I: finite fields. IEEE Trans. Inf. Theory 47, 2751-2760 (2001).

$[26]$ S. \ Ling, \ P. Sole, On the algebraic structures of quasi-cyclic
codes II: chain rings. Des.Codes Cryptogr. 30, 113130 (2003).

$[27]$ S. \ Ling, \ P. Sole, On the algebraic structures of quasi-cyclic
codes III: generator theory. IEEE Trans. Inf. Theory 51, 2692-2000 (2005).

$[28]$ S. Zhu, L. Wang, A class of constacyclic codes over $F_{p}+vF_{p}$
and their Gray images, Discrete Math. 311, 2677-2682, 2011.

$\left[ 29\right] $ T. Abualrub, A. Ghrayeb, N. Ayd\i n, I. Siap, On the
construction of skew quasi-cyclic codes, IEEE Transsactions on Information
Theory, Vol $56$, No $5$, $2010$, $2081-2090$.

$\left[ 30\right] $ T. Abualrub, N. Ayd\i n, P. Seneviratne, On $\theta $%
-cyclic codes over $F_{2}+vF_{2}$, Australasian Journal of Combinatorics,
54(2012), 115-126.

$\left[ 31\right] $ X.Kai,S.Zhu, Quaternary construction bof quantum codes
from cyclic codes over $F_{4}+uF_{4},$ Int. J. Quantum Inform., $9\left(
2011\right) ,$ $689-700.$

$\left[ 32\right] $ X.Yin, W.Ma, Gray Map And Quantum Codes Over The Ring $%
F_{2}+uF_{2}+u^{2}F_{2},$ International Joint Conferences of IEEE
TrustCom-11, $2011.$

$\left[ 33\right] $ Y. Cengellenmis, A. Dertli, S.T. Dougherty, Codes over
an infinite family of rings with a Gray map, Designs, Codes and
Cryptography, $(2014)72$ :$559-580$

\end{document}